\numberwithin{equation}{section}
\numberwithin{figure}{section}
\theoremstyle{plain}
\newtheorem{thm}{\protect\theoremname}[section]
\theoremstyle{remark}
\newtheorem{rem}[thm]{\protect\remarkname}
\theoremstyle{plain}
\newtheorem{lem}[thm]{\protect\lemmaname}
\theoremstyle{plain}
\newtheorem{prop}[thm]{\protect\propositionname}
\providecommand{\lemmaname}{Lemma}
\providecommand{\propositionname}{Proposition}
\providecommand{\remarkname}{Remark}
\providecommand{\theoremname}{Theorem}
\begin{document}
\title{Rate of convergence towards equations of Hartree type\\ for mixture
condensates with factorized initial data}
\author{Jinyeop Lee\thanks{\protect\url{jinyeoplee@kaist.ac.kr}}}
\date{Department of Mathematical Sciences, KAIST, Daejeon, Korea}
\maketitle
\begin{abstract}
We consider a system of $p$ components of bosons, each of which consists
of $N_{1},N_{2},\dots,N_{p}$ particles, respectively. The bosons
are in three dimensions with interactions via a generalized interaction
potential which includes the Coulomb interaction. We set the initial
condition to describe a mixture condensate, i.e., a tensor product
of factorized states. We show that the difference between the many-body
Schrödinger evolution in the mean-field regime and the corresponding
$p$-particle dynamics due to a system of Hartree equation is $O(N^{-1})$
where $N=\sum_{q=1}^{p}N_{q}$.
\end{abstract}

\section{Introduction}

Bose-Einstein condensation (BEC), which shows macroscopic quantum
phenomena, has been explored widely by both theorists and experimentalists.
People have been observing the dynamical evolution of the mixture
condensates \cite{hall2008multi,malomed2008multi}. The first experimental
realization of BEC was obtained with a single component element, typically
$^{87}\mathrm{Rb}$ \cite{hall1998dynamics,hall1998measurements,matthews1998dynamical,myatt1997production}.
Moreover, BEC may consist of different components of bosons. Two-component
BEC, such as $^{41}\mathrm{K}$-$^{87}\mathrm{Rb}$ \cite{modugno2001bose},
$^{41}\mathrm{K}$-$^{85}\mathrm{Rb}$ \cite{modugno2002two}, $^{39}\mathrm{K}$-$^{85}\mathrm{Rb}$
\cite{mancini2004observation} and $^{85}\mathrm{K}$-$^{87}\mathrm{Rb}$
\cite{PappWieman2006Observation} $^{87}\mathrm{Rb}$-$^{133}\mathrm{Cs}$\cite{Molony2014Creation},
was also successfully reported. Furthermore, in \cite{Molony2014Creation},
Gonzalez observed an absorption images of triply degenerate quantum
gases of $^{41}\mathrm{K}$-$^{40}\mathrm{K}$-$^{6}\mathrm{Li}$
\cite{santiago2012linak}. If multiple components of bosons are interacting,
one may consider both intra-component and inter-component interactions.

We are interested in the time-evolution of a condensate of mixture
of gases with $p$ components. Here, the number of components, $p$,
can be any positive integer. We assume that the gas is in
a state of complete condensation in each component, i.e., fully factorized.
We claim that under suitable conditions on the density and the interactions
between particles, under which the condensed system with $p$ components
of bosons, is condensed at time $t\geq0$. Additionally, the orbitals
of each component are similar to the solutions of the system of equations
(\ref{sysHartree}).

In order to consider a mixture condensate with $N$ particles consisting
of $p$ components of bosons in three dimensions, we first label $q=1,\dots,p$
for each component and let the number of particles of $q$-th type,
$N_{q}$, for $q=1,\dots,p$. Thus we have
\[
N=\sum_{q=1}^{p}N_{q}.
\]
The system is in the Hilbert space $\text{\ensuremath{\mathcal{H}}}$
such that
\[
\mathcal{H}=\bigotimes_{q=1}^{p}L_{s}^{2}(\mathbb{R}^{3N_{q}}),
\]
where $L_{s}^{2}(\mathbb{R}^{3N_{q}})$ denotes the symmetric subspace
of $L^{2}(\mathbb{R}^{3N_{q}})$.

To introduce our Hamiltonian, we define an \emph{embedding operator}
$E^{(q)}$ as follows: For an operator $\mathcal{O}$ acting on $L_{s}^{2}(\mathbb{R}^{3N_{q}})$,
for $q=1,\dots,p$,
\[
E^{(q)}\mathcal{O}=(\bigotimes_{r<q}I)\otimes\underset{\text{ }}{\mathcal{O}}\otimes(\bigotimes_{r>q}I)
\]
where $I$ is the identity operator on $L_{s}^{2}(\mathbb{R}^{3N_{q}})$.
Note that (i) if $q=1$, then $\bigotimes_{r<q}I=1$ and (ii) if $q=p$,
then $\bigotimes_{r>q}I=1$. Thus, $E^{(q)}\mathcal{O}$ can be understood
as an operator which is acting on the Hilbert space $\mathcal{H}$.
Then $E^{(q)}\mathcal{O}$ acts only on the $q$-th component state
with $\mathcal{O}$. For other sectors, it behaves like the identity
operator. In other words, one can write
\begin{align*}
E^{(1)}\mathcal{O} & =\mathcal{O}\otimes I\otimes\dots\otimes I,\\
E^{(2)}\mathcal{O} & =I\otimes\mathcal{O}\otimes I\otimes\dots\otimes I,\\
 & \vdots\\
E^{(q)}\mathcal{O} & =I\otimes\dots\otimes I\otimes\underset{\text{the \ensuremath{q}-th type }}{\underbrace{\mathcal{O}}}\otimes I\otimes\dots\otimes I,\\
 & \vdots\\
E^{(p)}\mathcal{O} & =I\otimes\dots\otimes I\otimes\mathcal{O}.
\end{align*}
It describes an operator that is only acting on the $q$-th component
particle.

Now, the mean-field Hamiltonian of the system has the form
\[
H_{p}=\sum_{q=1}^{p}E^{(q)}h_{q}+\sum_{q<r}\mathcal{V}_{qr}
\]
where, for $q=1,2,\dots,p$, the $q$-th particle Hamiltonian $h_{p}$
is given by
\begin{equation}
h_{q}=\sum_{j=1}^{N_{q}}(-\Delta_{x_{qj}})+\frac{1}{N_{q}}\sum_{1\leq j<k\leq N_{p}}V_{qq}(x_{qj}-x_{qk})\label{eq:one_particle_Ham}
\end{equation}
and the interaction between the $q$-th particle and the $r$-th particle
$\mathcal{V}_{qr}$
\begin{equation}
\mathcal{V}_{qr}=\frac{1}{N}\sum_{j=1}^{N_{q}}\sum_{k=1}^{N_{r}}V_{qr}(x_{qj}-x_{rk}).\label{eq:inter_type_interaction}
\end{equation}
For each $q,r=1,2,\dots,p$, we assume that each interacting potential
$V_{qr}$ satisfies the operator inequality
\[
V_{qr}^{2}\leq K(1-\Delta)
\]
on $L^{2}(\mathbb{R}^{3})$ for some constant $K>0$. We let $c_{qr}$
be the proportion of the particle number of $r$-th component among
all particles, i.e. $N_{r}=c_{qr}N$ for each $q=1,2,\dots,p$.

Note that in front of the interaction potentials, we have different
scaling limits $N_{p}^{-1}$ and $N^{-1}$ for intra- and inter- component
interactions. We justify these scaling by the following: we have $N=\sum_{q=1}^{p}N_{q}$
kinetic terms and $\frac{1}{2} \binom{N}{2}=\frac{N(N-1)}{2}$ potential
terms for the Hamiltonian. If we follow the sailings of the above,
the potential energy is 
\[
O(N)=\sum_{q=1}^{p}\frac{1}{N_{q}}\binom{N_{q}}{2}+\frac{1}{N}\frac{N(N-1)}{2}.
\]
This is a generalization of the scalings given in \cite[Section 1]{1811.04984}
of a mixture of two-component. A justification for our choice is provided
in \cite[Section 4]{Michelangeli2017Mean-field}.

We consider the $p$-component condensate state $\psi_{\mathbf{N}}\in\mathcal{H}$
of the form
\[
\psi_{\mathbf{N}}=\bigotimes_{q=1}^{p}(\prod_{j=1}^{N_{q}}u_{q})
\]
where $\mathbf{N}=(N_{1},\dots,N_{p})$ and $u_{q}\in H^{1}(\mathbb{R}^{3})$
with $\|u_{q}\|_{L^{2}}=1$.

We expect, as $N_{q}\to\infty$ for all $q=1,\dots,p$, the condensation
remains approximately at time $t>0$. In other words, we hope to have
\[
e^{-\mathrm{i}\mathcal{H}_{\mathbf{N}}\psi_{\mathbf{N}}}:=\psi_{\mathbf{N},t}\simeq\bigotimes_{q=1}^{p}(\prod_{j=1}^{N_{q}}u_{q,t})
\]
where $(u_{q,t})_{q=1}^{p}$ is the solution of the system of equations
of Hartree type
\begin{equation}
i\partial_{t}u_{q,t}=-\Delta u_{q,t}+(V_{qq}*|u_{q,t}|^{2})u_{q,t}+\sum_{r=1}^{p}c_{qr}(V_{qr}*|u_{q,t}|^{2})u_{q,t}\label{sysHartree}
\end{equation}
with $u_{q,0}=u_{q}$ for all $q=1,\dots,p$.

We show that the difference between the many-body Schrödinger evolution
in the mean-field regime and the corresponding equations of Hartree
type is $O(N^{-1})$ where $N=\sum_{q=1}^{p}N_{q}$. To precisely
describe the meaning of the approximate factorization, we let $\mathbf{I}=(1,1,\dots,1)$.
Then, we introduce the following $\mathbf{I}$-particle density tensor
\begin{equation}
\gamma_{\mathbf{N},t}^{\mathbf{I}}=\operatorname{Tr}_{\mathbf{N}-\mathbf{I}}\,|\psi_{\mathbf{N},t}\rangle\langle\psi_{\mathbf{N},t}|,\label{eq:def_double_partial_trace}
\end{equation}
where $|\psi_{\mathbf{N},t}\rangle\langle\psi_{\mathbf{N},t}|$ denotes
the orthogonal projection onto the state $\psi_{\mathbf{N},t}$ and
$\operatorname{Tr}_{\mathbf{N}-\mathbf{I}}$ denotes the trace over
$(N_{q}-1)$ variables corresponding to the $q$-th factor of $\mathcal{H}$
for all $q=1,\dots,p$. The operator $\gamma_{\mathbf{N},t}^{\mathbf{I}}$
is a non-negative trace class operator on $\bigotimes_{q=1}^{p}L^{2}(\mathbb{R}^{3})$
with integral kernel 
\begin{equation}
\begin{split}\gamma_{\mathbf{N},t}^{\mathbf{I}}(\mathbf{x};\mathbf{x}')\;=\; & \int_{\prod_{q=1}^{p}\mathbb{R}^{3(N_{q}-1)}}\mathrm{d}\mu\,\overline{\psi_{\mathbf{N},t}}\psi_{\mathbf{N},t}\end{split}
\label{eq:double_trace-KERNEL}
\end{equation}
where $\mathbf{x}=(x_{11},x_{21},\dots,x_{p1})$, $\mathbf{x}'=(x_{11}',x_{21}',\dots,x_{p1}')$,
and $\mathrm{d}\mu$ denotes the infinitesimal product measure for
all components of particle except the first particle of each components,
i.e. 
\[
\mathrm{d}\mu=\prod_{q=1}^{p}\prod_{n_{q}=2}^{N_{q}}\mathrm{d}x_{qn_{q}}.
\]
Note that if $\|\psi_{\mathbf{N},t}\|_{2}=1$, we have $\operatorname{Tr}\gamma_{\mathbf{N},t}^{\mathbf{I}}=1$.

Thus, to show that the state is in approximately factorized state,
it is to show that
\begin{equation}
\lim_{N_{q}\to\infty}\operatorname{Tr}\,\left|\gamma_{\mathbf{N},t}^{\mathbf{I}}-|\mathbf{u}_{t}\rangle\langle\mathbf{u}_{t}|\right|\,=0\label{eq:BEC-2component}
\end{equation}
where
\[
\mathbf{u}_{t}=\bigotimes_{q=1}^{p}u_{q,t}.
\]
It can be rephrased as $\gamma_{\mathbf{N},t}^{\mathbf{I}}\simeq|\mathbf{u}_{t}\rangle\langle\mathbf{u}_{t}|$
when $N_{q}\to\infty$ for all $q=1,\dots,p$. The limit above describes
large systems, i.e. thermodynamic limit. In fact, all of the eigenvalues
of $\gamma_{\mathbf{N},t}^{\mathbf{I}}$ are non-negative and $\operatorname{Tr}_{\mathbf{N}}\,\gamma_{\mathbf{N},t}^{\mathbf{I}}=1$.

Historically, the cases with single-component, $p=1$, were studied
intensively. First, Spohn \cite{Spohn1980} proved the convergence,
$\operatorname{Tr}|\gamma_{N,t}^{(1)}-|\varphi_{t}\rangle\langle\varphi_{t}||\to0$
as $N\to\infty$, for bounded potential. Moreover, it was extended
by Ginibre and Velo\textcolor{red}{{} }\cite{Ginibre1979_1,Ginibre1979_2}.
Using the BBGKY hierarchy technique, Erd\H{o}s and Yau \cite{ErdosYau2001}
extended the result in \cite{Spohn1980} to singular potential (including
the Coulomb case). Rodnianski and Schlein in \cite{Rodnianski2009},
based on coherent state approach, obtained the trace norm difference
\begin{equation}
\operatorname{Tr}\left|\gamma_{N,t}^{(1)}-|\varphi_{t}\rangle\langle\varphi_{t}|\right|\label{eq:trace_norm_bound}
\end{equation}
which is bounded by $O(N^{-1/2})$ as well as it goes to zero. Knowles
and Pickl \cite{KnowlesPickl2010} considered more singular interaction
potentials and obtained similar estimates on the rate of convergence.
The proof in \cite{KnowlesPickl2010} is based on the use of projection
operators in the $N$-particle space $L_{s}^{2}(\mathbb{R}^{3N})$,
and covers a large class of possibly time-dependent external potentials
even though it does not gives the same rate of convergence. Grillakis,
Machedon, and Margetis \cite{Grillakis2010,Grillakis2011} observed
the second-order corrections. The optimal rate of convergence $O(N^{-1})$
was proved by Chen, Lee, and Schlein in \cite{ChenLeeSchlein2011}
for the Coulomb case. It is known that the rate of convergence $O(N^{-1})$
is optimal in as shown in e.g. \cite{Grillakis13,Grillakis17}. It
was later extended in \cite{ChenLeeLee2018} to cover the case $V\in L^{2}+L^{\infty}$.
We also remark that Hott \cite{Hott2018} pointed out that the same
rate of convergence can be obtained in larger space than $H^{1}(\mathbb{R}^{3})$
for $V(x)=\lambda|x|^{-\gamma}$ with $1<\gamma<3/2$. Using time
decay estimate and Strichartz estimate, the time dependence of the
trace norm bound was investigated in  \cite{Lee2019}. A similar approach
can also be applied to many-body semi-relativistic Schrödinger equations
which describe a boson star. Lee \cite{Lee2013} provided the optimal
rate of convergence $O(N^{-1})$ for Coulomb interaction. Following
the approach presented in this article, it is believed that one can
obtain a corresponding bound for the semi-relativistic case by exploiting
the properties of the mean-field solution. On the other hand, Gross-Pitaevskii
dynamics for BEC has also been heavily studied, for example in \cite{Boccato2018CompleteBEC,Boccato1812.03086,Boccato2017,Erdos2006Derivation-of-the-GP,Erdos2010Derivation-of-the-GP}.

The corresponding result for two-component BEC, i.e., $p=2$, has
also been established well, e.g. \cite{Anapolitanos2017,MichelangeliOlgiati2017Gross,Michelangeli2017Mean-field,Olgiati2017}.
In \cite{MichelangeliOlgiati2017Gross}, Michelangeli and Olgiati
derived effective evolution equations rigorously from the many-body
Schrödinger equation using counting method developed in \cite{KnowlesPickl2010,Pickl2010,Pickl2011,Pickl2015}.
A similar result was obtained by Anapolitanos, Hott, and Hundertmark
\cite{Anapolitanos2017}. Recently, de Oliveira and Michelangeli \cite{1811.04984}
obtained explicit bound $O(N^{-1/2})$ on the rate of the convergence
in the the Coulomb type interaction, assuming a coherent initial state.
In this paper, our goal is to improve the bound of the rate of convergence
$O(N^{-1})$ for general $p\geq2$, assuming not a coherent initial
condition but a factorized initial condition. The following main theorem
gives the desired goal:
\begin{thm}
\label{thm:Main_Theorem}Suppose that all the interaction potentials
$V_{qr}$ satisfy the operator inequality 
\begin{equation}
V_{qr}^{2}\le K(1-\Delta)\label{eq:assumption_V}
\end{equation}
on $L^{2}(\mathbb{R}^{3})$ for some constant $K>0$ for all $q,r=1,\dots,p$.
Suppose that each $N_{q}$ is a positive integer, $N=\sum_{q=1}^{p}N_{q}$,
and $c_{q}$ is real number that satisfies 
\begin{equation}
c_{q}=N_{q}/N\label{eq:definition-of-cq}
\end{equation}
for some constant $D>0$ where $c_{q}\neq0$ with $\sum_{q=1}^{p}c_{q}=1$.
For $\mathbf{N}=(N_{1},\dots,N_{p})$ and $\mathbf{I}=(1,\dots,1)$,
let $\gamma_{\mathbf{N},t}^{\mathbf{I}}$ be the reduced density operator
associated to the solution 
\[
\psi_{t}=e^{-it\mathcal{H}_{\mathbf{N}}}\psi
\]
with $\psi=\bigotimes_{q=1}^{p}u_{q}^{\otimes N_{q}}=:\mathbf{u}^{\mathbf{N}}$
to the Schrödinger equation. Then,
\begin{equation}
\operatorname{Tr}\,\Big|\,\gamma_{\mathbf{N},t}^{\mathbf{I}}-|\mathbf{u}_{t}\rangle\langle\mathbf{u}_{t}\,\Big|\le C\left((\min_{q}c_{q})^{-1}p^{Cp}e^{Kt}\right)\cdot\frac{1}{N}\label{ineqcoherent}
\end{equation}
for all $t\ge0$, where $C$ and $K$ are positive constants that
do not depend on $N_{q}$, and $u_{q,t}$ are the solutions of the
system of equations (\ref{sysHartree}) with initial datum $u_{q}$.
\end{thm}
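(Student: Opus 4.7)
The plan is to extend the coherent-state/Fock-space method introduced by Chen, Lee, and Schlein for the single-component Coulomb problem to the tensor-product setting of the mixture. First I would embed the dynamics into the bosonic Fock space $\mathcal{F}=\bigotimes_{q=1}^{p}\mathcal{F}_{q}$, where $\mathcal{F}_{q}$ carries creation and annihilation operators $a_{q}^{*}(f),\,a_{q}(f)$ and a Weyl operator $W_{q}(f)=\exp(a_{q}^{*}(f)-a_{q}(f))$. The many-body Hamiltonian extends naturally to a second-quantized operator $\mathcal{H}_{\mathbf{N}}^{\mathrm{Fock}}$ on $\mathcal{F}$ whose restriction to the sector with $N_{q}$ particles of type $q$ recovers $\mathcal{H}_{\mathbf{N}}$. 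The factorized initial datum $\psi=\bigotimes_{q}u_{q}^{\otimes N_{q}}$ then equals, up to an explicit Poisson-type scalar, the projection $\mathbf{P}_{\mathbf{N}}\mathbf{W}(0)\Omega$ of the product coherent state, where $\mathbf{W}(t):=\bigotimes_{q=1}^{p}W_{q}(\sqrt{N_{q}}\,u_{q,t})$ and $\Omega$ is the vacuum in $\mathcal{F}$.

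Next, I would introduce the fluctuation dynamics
\[
\mathcal{U}_{\mathbf{N}}(t;s):=\mathbf{W}(t)^{*}\,e^{-i(t-s)\mathcal{H}_{\mathbf{N}}^{\mathrm{Fock}}}\,\mathbf{W}(s)
\]
and compute its generator $\mathcal{L}_{\mathbf{N}}(t)$ by means of the pull-through $\mathbf{W}(t)^{*}a_{q}(f)\mathbf{W}(t)=a_{q}(f)+\sqrt{N_{q}}\langle f,u_{q,t}\rangle$, using that each $u_{q,t}$ solves the Hartree system~(\ref{sysHartree}) to cancel the linear-in-$a^{\#}$ contributions. The result is a time-dependent Bogoliubov-type quadratic Hamiltonian plus cubic and quartic remainders with prefactors $N_{q}^{-1/2}$ (from the intra-component interactions $V_{qq}$) and $N^{-1/2}$ (from the inter-component interactions $V_{qr}$ with $q\neq r$). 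The cross terms of the form $a_{q}^{*}a_{r}^{*}a_{r}a_{q}$ are then controlled by the operator inequality $V_{qr}^{2}\le K(1-\Delta)$ in combination with the component number operators $\mathcal{N}_{q}$, in the spirit of Rodnianski and Schlein.

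The core technical step is a Grönwall estimate for moments of the total number operator $\mathcal{N}=\sum_{q}\mathcal{N}_{q}$: for each integer $k\ge 1$, I would prove
\[
\bigl\langle\xi,\,\mathcal{U}_{\mathbf{N}}(t;0)^{*}(\mathcal{N}+1)^{k}\,\mathcal{U}_{\mathbf{N}}(t;0)\,\xi\bigr\rangle\;\le\;C_{k}\,p^{C_{k}p}(\min_{q}c_{q})^{-1}e^{K_{k}t}\bigl\langle\xi,(\mathcal{N}+1)^{k}\xi\bigr\rangle
\]
uniformly in $\mathbf{N}$, by differentiating in time and expanding $[(\mathcal{N}+1)^{k},\mathcal{L}_{\mathbf{N}}(t)]$ into a finite sum of normal-ordered monomials in $a_{q}^{\#},\,a_{r}^{\#}$, each of which is bounded via $V_{qr}^{2}\le K(1-\Delta)$ against $(\mathcal{N}+1)^{k}$. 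Here the substitution $N_{q}=c_{q}N$ converts each stray $N_{q}^{-1}$ into $(c_{q}N)^{-1}$, which is the source of the $(\min_{q}c_{q})^{-1}$ prefactor, while the sum over the $\binom{p}{2}$ inter-component pairs and the combinatorics of pulling $a^{\#}$'s through $(\mathcal{N}+1)^{k}$ produces the $p^{Cp}$ factor.

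Finally, I would translate the number-operator bound into the trace-norm of the reduced density tensor. For factorized (as opposed to coherent) initial data one has an estimate of the form
\[
\operatorname{Tr}\bigl|\gamma_{\mathbf{N},t}^{\mathbf{I}}-|\mathbf{u}_{t}\rangle\langle\mathbf{u}_{t}|\bigr|\;\le\;\frac{C}{N}\,\Bigl(\bigl\langle\xi_{\mathbf{N},0},\,\mathcal{U}_{\mathbf{N}}(t;0)^{*}(\mathcal{N}+1)^{2}\,\mathcal{U}_{\mathbf{N}}(t;0)\,\xi_{\mathbf{N},0}\bigr\rangle\Bigr)^{1/2},
\]
where $\xi_{\mathbf{N},0}=\mathbf{W}(0)^{*}\psi$ is the initial fluctuation vector; the extra $N^{-1/2}$ beyond the coherent-state rate is provided by the combinatorial identity expressing each $u_{q}^{\otimes N_{q}}$ as a normalized projection of $W_{q}(\sqrt{N_{q}}\,u_{q})\Omega_{q}$ onto the $N_{q}$-particle sector, exactly as in the single-component $p=1$ argument. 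Since $(\mathcal{N}+1)^{k}$ has a uniform bound on $\xi_{\mathbf{N},0}$, combining this with the Grönwall estimate above yields the claimed $O(N^{-1})$ rate with the stated prefactor. The hardest part will be the Grönwall step: inter-component cross terms in the commutator $[(\mathcal{N}+1)^{k},\mathcal{L}_{\mathbf{N}}(t)]$ proliferate with $p$, and showing that this proliferation contributes only the combinatorial factor $p^{Cp}(\min_{q}c_{q})^{-1}$ rather than a hidden $N$-dependence is the main bookkeeping challenge.
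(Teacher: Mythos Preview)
Your outline correctly reproduces the Fock-space embedding, the fluctuation dynamics $\mathcal{U}_{\mathbf{N}}$, and the Gr\"onwall control of number-operator moments, and these parts match the paper. But the final step has a genuine gap: the displayed inequality
\[
\operatorname{Tr}\bigl|\gamma_{\mathbf{N},t}^{\mathbf{I}}-|\mathbf{u}_{t}\rangle\langle\mathbf{u}_{t}|\bigr|\;\le\;\frac{C}{N}\,\bigl\langle\xi_{\mathbf{N},0},\,\mathcal{U}_{\mathbf{N}}^{*}(\mathcal{N}+1)^{2}\,\mathcal{U}_{\mathbf{N}}\,\xi_{\mathbf{N},0}\bigr\rangle^{1/2}
\]
is not available from the ingredients you list. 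Number-operator bounds on $\mathcal{U}_{\mathbf{N}}$ alone produce only the Rodnianski--Schlein rate $O(N^{-1/2})$, and the projection identity $u_{q}^{\otimes N_{q}}=d_{N_{q}}P_{N_{q}}W_{q}(\sqrt{N_{q}}u_{q})\Omega_{q}$ does \emph{not} by itself buy a further $N^{-1/2}$: the normalization $d_{N_{q}}\sim N_{q}^{1/4}$ points the wrong way, and the fluctuation vector $\xi_{\mathbf{N},0}=\mathbf{W}(0)^{*}\psi$ for factorized data is not the vacuum, so its $(\mathcal{N}+1)^{k}$-moments are not uniformly $O(1)$ without further argument.

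What the paper actually uses to reach $O(N^{-1})$, and what is missing from your plan, is a \emph{second} comparison dynamics $\widetilde{\mathcal{U}}$ generated by the even part $\widetilde{\mathcal{L}}=\mathcal{L}_{0}+\mathcal{L}_{2}+\mathcal{L}_{4}$ of the fluctuation generator (the cubic $\mathcal{L}_{3}$ dropped). This $\widetilde{\mathcal{U}}$ conserves parity in each component, so $\langle\Omega,\widetilde{\mathcal{U}}^{*}a_{x}^{(q)}\widetilde{\mathcal{U}}\Omega\rangle=0$. One then writes the kernel $\gamma_{\mathbf{N},t}^{\mathbf{I}}-|\mathbf{u}_{t}\rangle\langle\mathbf{u}_{t}|$ as a sum over index sets $Q,R\subset\{1,\dots,p\}$ (Proposition~\ref{prop:Et1}); the terms with $|Q|+|R|\ge 2$ are harmless, but the dangerous $|Q|+|R|=1$ terms only become $O(N^{-1})$ after (i) replacing $\mathcal{U}$ by $\widetilde{\mathcal{U}}$ at a cost $\|\mathcal{L}_{3}\|\sim N^{-1/2}$ (Lemmas~\ref{lem:N_1_L3}, \ref{lem:NjUphiUtildeUphitildeU}), and (ii) exploiting the parity cancellation together with the odd-sector estimate $\|P_{2k+1}W^{*}(\sqrt{N_{q}}u_{q})u_{q}^{\otimes N_{q}}\|\lesssim d_{N_{q}}^{-1}(k+1)^{3/2}N_{q}^{-1/2}$ of Lemma~\ref{lem:coherent_even_odd}. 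This parity/odd-sector mechanism is the source of the optimal rate and cannot be replaced by a bare Gr\"onwall bound on $\mathcal{U}_{\mathbf{N}}$.
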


\begin{rem}
The number of components $p$ may dependent on $N$. Letting $p=\left\lfloor \log\log N\right\rfloor $
where $\left\lfloor x\right\rfloor $ denotes the largest integer
less than or equal to $x$, we get
\[
\operatorname{Tr}\,\Big|\,\gamma_{\mathbf{N},t}^{\mathbf{I}}-|\mathbf{u}_{t}\rangle\langle\mathbf{u}_{t}|\,\Big|\le Ce^{Kt}\cdot\frac{(\log N)^{C}}{N}
\]
for some constant $C$. This result shows that BEC evolves asymptotically
through the Hartree dynamics even when the number of components is
sufficiently large (formally infinite as $N\to\infty$). This case,
we need to have much more particles than the components of bosons.
Here, if we have $p$-component BEC for large $p$, then we need to
have the order of $\exp(\exp(p))$ particles for each component.
\end{rem}

\begin{rem}
Using result i \cite{Lee2019}, with appropriate initial data, one
can generalize interaction potentials $V_{qr}$ appeared in (\ref{eq:one_particle_Ham})
and (\ref{eq:inter_type_interaction}) to $V(x)=\lambda e^{-\mu|x|}/|x|^{\gamma}$
with $\lambda\in\mathbb{R}$, $\mu\geq0$, and $0<\gamma<3/2$. Moreover,
if all the potentials are changed into the same component, the time
dependence can be obtained similarly as in \cite{Lee2019}. Using
result in \cite{Lee2013}, one can have a similar result for semi-relativistic
case. Thus we expect to have a similar result for boson star.
\end{rem}

\begin{rem}
\textcolor{black}{For $p\geq2$, we let $c_{q}=N_{q}/N$ in (\ref{eq:definition-of-cq}).
Instead of this, we can let 
\[
\left|\frac{N_{r}}{N}-c_{qr}\right|\le\frac{C}{N^{2/p}}
\]
to have the same rate of convergence. Similar assumption was appeared
in \cite{1811.04984} for $p=2$ case. See Remark \ref{rem:|NrN-Cqr|}
for more detail.}
\end{rem}

\begin{rem}
We omit the information about the mass $m_{q}$ of each component.
If one have mass $m_{q}$ factors 
\[
h_{q}=\sum_{j=1}^{N_{q}}\left(-\frac{\Delta_{x_{qj}}}{m_{q}}\right)+\frac{1}{N_{q}}\sum_{1\leq j<k\leq N_{p}}V_{qq}(x_{qj}-x_{qk})
\]
by scaling, by scaling to the solution $u_{q}$, one can move the
information about mass $m_{q}$ from the kinetic terms 
\[
\sum_{j=1}^{N_{q}}\left(-\frac{\Delta_{x_{qj}}}{m_{q}}\right)
\]
to potential terms
\[
\frac{1}{N_{q}}\sum_{1\leq j<k\leq N_{p}}V_{qq}(x_{qj}-x_{qk}).
\]
Hence through out this paper, we let $m_{q}=1/2$ for all $q=1,\dots,p$.
\end{rem}

We derive a system of effective equations of Hartree type by generalizing
the methods in Fock space developed by Hepp \cite{Hepp1974}, Ginibre
and Velo \cite{Ginibre1979_1,Ginibre1979_2}, and Rodnianski and Schlein
\cite{Rodnianski2009} (see also \cite{Benedikter2015}). A review
of these methods for single-component Bose gases can be found in \cite{Benedikter2016}.
The methods in Fock space have been studied for dynamical properties of single-component condensates \cite{Benedikter2015,Boccato2017,Brenneck2019,ChenLee2011,ChenLeeSchlein2011,LewinNamSchlein2015,Rodnianski2009}.
By generalizing previous results, our result holds 
for factorized states as initial data (Theorem \ref{thm:Main_Theorem})
with Coulomb interaction.

The main subject of this paper is $p$-component BEC for given integer
$p$. Because of this extension, the analysis natually needs to be
extended from the single component Fock space $\mathcal{F}$ to the
$p$-tensor product Fock spaces $\mathcal{F}^{\otimes p}$. In this
paper, we introduce new and compact notations which is appropriate
to deal with generalized $p$-component BEC.

To deal with $p$-component BEC, an operator $J$, which is introduced
to bound the trace norm, acting on $\mathcal{H}$ should be understood.
Using tensor algebra, for finite $p$, an Hilbert-Schmidt operator
$J$ can be interpreted as an infinite sum of simple tensors, i.e.,
$\sum_{\mathfrak{k}=1}^{\infty}(\bigotimes_{q=1}^{p}J_{q}^{\mathfrak{k}})$.
Using this and the definition of Hilbert-Schmidt norm, one may recover
the Hilbert-Schmidt norm of the operator $J$ such that $\|J\|_{\mathrm{HS}}^{2}=\sum_{\mathfrak{k}=1}^{\infty}\|\bigotimes_{q=1}^{p}J_{q}^{\mathfrak{k}}\|_{\mathrm{HS}}^{2}.$
The detailed explanation will be given in Section \ref{subsec:Review-on-Tensor}.

This paper is organized as follows: First, in Section \ref{sec:preliminaries},
we introduce tensor algebra to describe operators on bosonic Fock
space for mixture condensation. After introducing tensor algebra we
introduce bosonic Fock space for mixture condensation, we omit the
definitions and properties of Fock space for the non-mixture cases
but we cite previous articles to avoid lengthy paper. Using tensor
algebra introduced in Section \ref{subsec:Review-on-Tensor}, it is
possible to generalize the definitions and properties of Fock space
in Section \ref{subsec:Fock-space-formalism-mixture}. It also helps
us to generalize the techniques provided in the previous works to
the $p$-component case. We prove our theorem in Section \ref{sec:Proof-main}.
The main strategy is to embed our state into the Fock space and compare
the time evolution of our stated and coherent state. For that, we
argue that the creation and annihilation operator are similar to the
square root of the number operator, which was suggested in \cite{Rodnianski2009}
for non-mixture case. Using the argument, we provide the proof of
Theorem \ref{thm:Main_Theorem}; this crucial bound is stated in Proposition
\ref{prop:Et1}, which will be proved in Section \ref{sec:Pf-of-Props}.
In Section \ref{sec:Comparison-dynamics}, we prepare lemmas describing
comparison dynamics to prove Proposition \ref{prop:Et1} in Section
\ref{sec:Pf-of-Props}. The lemmas are similar to the lemmas appeared
in previous works, e.g. see \cite{ChenLee2011,ChenLeeSchlein2011,Lee2013},
for the non-mixture case. In Section \ref{sec:Lemmas}, we provide
lemmas to back up the lemmas in Sections \ref{sec:Comparison-dynamics}
and \ref{sec:Pf-of-Props}.

\subsection*{Notional Remark}

For two elements $a,b$ of an inner product space $A$, we denote
by $\langle a,b\rangle_{A}$ the inner product of $a$ and $b$. Similarly,
we denote the norm of $a$ by $\|a\|_{A}$. We denote the standard
$L^{p}$-space in $d$-dimension as $L^{p}(\mathbb{R}^{d},\mathrm{d}x)$
or $L^{p}(\mathbb{R}^{d})$. Let $f\in L^{p}(\mathbb{R}^{d})$, we
use $\|f\|_{p}=\|f\|_{L^{p}(\mathbb{R}^{d})}$ for the standard $L^{p}$-norm
of $f$. We also let $\|f\|_{H^{p}}=\|f\|_{H^{p}(\mathbb{R}^{d})}$
for the standard Sobolev norm of $f\in H^{p}(\mathbb{R}^{d})$. We
denote $\|J\|_{\mathrm{op}}$ as the operator norm of an operator
$J$ and $\|J\|_{\mathrm{HS}}$ as the Hilbert-Schmidt norm of an
operator $J$. In inequalities we have constants $C$ which may be
different line by line. We basically use bold faced roman alphabet
to denote vectors in $\mathbb{R}^{p}$ or elements in $\mathcal{F}^{\otimes p}$
and we use bold faced italic alphabet to denote the product of all
component s of a vector in $\mathbb{R}^{p}$. For example, we denote
\[
\mathbf{N}=(N_{1},\dots,N_{p})\in\mathbb{N}^{p}\qquad\text{and}\qquad\boldsymbol{N}=\prod_{q=1}^{p}N_{q}.
\]
The detailed description will be shown in the article.

\section{Preliminaries\label{sec:preliminaries}}

\subsection{Review on Tensor Algebra\label{subsec:Review-on-Tensor}}

This section is to review tensor algebra to understand following sections
which describes the Fock space of mixture states. Moreover, to apply
the idea to bound the trace norm, we need to introduce the lemmas
given in this section.

Let $H_{1}$ and $H_{2}$ be Hilbert spaces endowed with scalar products
$\langle\cdot,\cdot\rangle_{H_{1}}$ and $\langle\cdot,\cdot\rangle_{H_{2}}$
respectively. Then \emph{inner product} of tensor product is defined
by
\[
\langle f_{1}\otimes f_{2},g_{1}\otimes g_{2}\rangle_{H_{1}\otimes H_{2}}=\langle f_{1},g_{1}\rangle_{H_{1}}\langle f_{2},g_{2}\rangle_{H_{2}}
\]
for each $f_{1},f_{2}\in H_{1}$ and $g_{1},g_{2}\in H_{2}$.

This leads us to a standard \emph{norm} $\|\cdot\|_{H_{1}\otimes H_{2}}$
derived by the inner product such that
\[
\|f\otimes g\|_{H_{1}\otimes H_{2}}=\langle f\otimes g,f\otimes g\rangle_{H_{1}\otimes H_{2}}^{1/2}=\langle f,f\rangle_{H_{1}}^{1/2}\langle g,g\rangle_{H_{2}}^{1/2}=\|f\|_{H_{1}}\|g\|_{H_{2}}.
\]
For Hilbert spaces $H_{q}$ for $q=1,\dots,p$, we generalize above
properties by induction for $\bigotimes_{q=1}^{p}H_{p}$.

The following result can be understood by reading \cite[Chapter II §4.4 Proposition 4]{bourbaki1998algebra}
and \cite[Section XIII.9]{ReedSimonIV1975}
\begin{lem}
\label{lem:tensor-operator-decomposition}Let $H$ be a tensor product
of two Hilbert spaces $H_{1}$ and $H_{2}$, i.e. $H=H_{1}\otimes H_{2}$.
Then given a Hilbert--Schmidt operator $J$ on $H$, it is a limit
of linear combination of simple tensors $J_{1}^{\mathfrak{k}}\otimes J_{2}^{\mathfrak{k}}$
for $J_{1}^{\mathfrak{k}}\in L(H_{1})$ and $J_{2}^{\mathfrak{k}}\in L(H_{2})$
for $\mathfrak{k}\in\mathbb{N}$, i.e., 
\[
J=\sum_{\mathfrak{k}=1}^{\infty}J_{1}^{\mathfrak{k}}\otimes J_{2}^{\mathfrak{k}}.
\]
\end{lem}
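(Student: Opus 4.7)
The plan is to reduce the statement to the well-known fact that for separable Hilbert spaces, the space of Hilbert--Schmidt operators $\mathcal{L}^{2}(H)$ is itself a separable Hilbert space under the inner product $\langle A,B\rangle_{\mathrm{HS}}=\operatorname{Tr}(A^{*}B)$, and then to exhibit an explicit orthonormal basis of $\mathcal{L}^{2}(H_{1}\otimes H_{2})$ consisting of simple tensors $A\otimes B$ with $A\in\mathcal{L}^{2}(H_{1})$ and $B\in\mathcal{L}^{2}(H_{2})$. Once such a basis is available, the desired expansion is just the Fourier series of $J$ relative to that basis, and convergence in Hilbert--Schmidt norm is automatic by Parseval.

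First I would fix orthonormal bases $\{e_{i}\}_{i\geq 1}\subset H_{1}$ and $\{f_{j}\}_{j\geq 1}\subset H_{2}$; then $\{e_{i}\otimes f_{j}\}_{i,j}$ is an orthonormal basis of $H=H_{1}\otimes H_{2}$. The rank-one operators $e_{ik}:=|e_{i}\rangle\langle e_{k}|$ form an orthonormal basis of $\mathcal{L}^{2}(H_{1})$, and analogously $f_{jl}:=|f_{j}\rangle\langle f_{l}|$ for $\mathcal{L}^{2}(H_{2})$. A direct computation of Hilbert--Schmidt inner products shows that $\{e_{ik}\otimes f_{jl}\}$ is an orthonormal system in $\mathcal{L}^{2}(H_{1}\otimes H_{2})$; a density argument for finite-rank operators shows it is total, hence an orthonormal basis.

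Next I would set $J_{(i,j),(k,l)}=\langle e_{i}\otimes f_{j},\,J(e_{k}\otimes f_{l})\rangle_{H}$ and use Parseval's identity to obtain
\[
\sum_{i,j,k,l}|J_{(i,j),(k,l)}|^{2}=\|J\|_{\mathrm{HS}}^{2}<\infty,
\]
so that the series
\[
J=\sum_{i,j,k,l}J_{(i,j),(k,l)}\,e_{ik}\otimes f_{jl}
\]
converges in $\|\cdot\|_{\mathrm{HS}}$. Enumerating the quadruples $(i,j,k,l)$ by a single index $\mathfrak{k}\in\mathbb{N}$ and setting $J_{1}^{\mathfrak{k}}=J_{(i,j),(k,l)}\,e_{ik}$, $J_{2}^{\mathfrak{k}}=f_{jl}$ yields the claimed representation $J=\sum_{\mathfrak{k}=1}^{\infty}J_{1}^{\mathfrak{k}}\otimes J_{2}^{\mathfrak{k}}$. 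A more economical form is obtained by applying the Schmidt (singular value) decomposition to $J$ viewed as an element of the Hilbert space $\mathcal{L}^{2}(H_{1})\otimes\mathcal{L}^{2}(H_{2})$, producing $J=\sum_{\mathfrak{k}}\sigma_{\mathfrak{k}}A_{\mathfrak{k}}\otimes B_{\mathfrak{k}}$ with orthonormal systems $\{A_{\mathfrak{k}}\}$, $\{B_{\mathfrak{k}}\}$ in the respective Hilbert--Schmidt spaces and $\sum_{\mathfrak{k}}\sigma_{\mathfrak{k}}^{2}=\|J\|_{\mathrm{HS}}^{2}$.

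There is no genuine obstacle; the proof is essentially bookkeeping over the two cited references. The only subtlety I would flag for the reader is that the convergence in the lemma is in the Hilbert--Schmidt (and hence operator) norm, not in trace norm, which will be important when this decomposition is later combined with the trace-norm manipulations used to control $\gamma_{\mathbf{N},t}^{\mathbf{I}}-|\mathbf{u}_{t}\rangle\langle\mathbf{u}_{t}|$. Secondarily, once the $p=2$ case is in hand, the generalization to $\bigotimes_{q=1}^{p}H_{q}$ that the paper actually needs follows by induction on $p$, regrouping $\bigotimes_{q=1}^{p}H_{q}=(\bigotimes_{q=1}^{p-1}H_{q})\otimes H_{p}$ and reapplying the two-factor statement together with an enumeration of the product index set.
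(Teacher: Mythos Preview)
Your argument is correct and rests on the same construction as the paper: both expand $J$ relative to the rank-one operators $|e_{i'}\rangle\langle e_{i}|\otimes|f_{j'}\rangle\langle f_{j}|$ coming from orthonormal bases of $H_{1}$ and $H_{2}$, and then relabel the quadruple index by a single $\mathfrak{k}$. The difference is in how convergence is justified. The paper forms the truncated operators $J_{n}$ and checks only that $\langle a\otimes b,\,J_{n}(c\otimes d)\rangle\to\langle a\otimes b,\,J(c\otimes d)\rangle$, i.e.\ weak convergence on simple tensors extended by density; you instead recognize $\mathcal{L}^{2}(H_{1}\otimes H_{2})$ as a Hilbert space with orthonormal basis $\{e_{ik}\otimes f_{jl}\}$ and obtain Hilbert--Schmidt convergence directly from Parseval. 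Your route is tighter and in fact delivers more than the paper's proof does: the HS-norm convergence and built-in orthogonality are precisely what underlie the identity $\|J\|_{\mathrm{HS}}^{2}=\sum_{\mathfrak{k}}\|\mathbf{J}^{\mathfrak{k}}\|_{\mathrm{HS}}^{2}$ claimed in the subsequent Proposition, whereas weak convergence alone would not. The Schmidt-decomposition variant you mention is a genuine alternative that the paper does not use; it gives a more economical expansion but is not needed for what follows.
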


\begin{proof}
We will show that there exists a sequence in $L(H_{1})\otimes L(H_{2})$
weakly convergent to $J$. Let $\{e_{i}\otimes e_{j}\}_{i,j=1}^{\infty}$
be an orthogonal basis of $H_{1}\otimes H_{2}$. For any $n$ define
the truncated operator $J_{n}$ such that
\begin{equation}
J_{n}:=\sum_{i,j,i',j'=1}^{n}|e_{i'}\otimes e_{j'}\rangle\langle e_{i'}\otimes e_{j'}|J|e_{i}\otimes e_{j}\rangle\langle e_{i}\otimes e_{j}|.\label{eq:tensor-op-representation}
\end{equation}
Then
\begin{align*}
J_{n} & :=\sum_{i,j,i',j'=1}^{n}\langle e_{i'}\otimes e_{j'},Je_{i}\otimes e_{j}\rangle|e_{i'}\otimes e_{j'}\rangle\langle e_{i}\otimes e_{j}|\\
 & =\sum_{i,j,i',j'=1}^{n}\langle e_{i'}\otimes e_{j'},Je_{i}\otimes e_{j}\rangle\left(|e_{i'}\rangle\langle e_{i}|\otimes|e_{j'}\rangle\langle e_{j}|\right)\in L(H_{1})\otimes L(H_{2}).
\end{align*}
By rearranging $i,j,i',$ and $j'$ by $\mathfrak{k}$ and letting
\[
J_{1}^{\mathfrak{k}}=\langle e_{i'}\otimes e_{j'},Je_{i}\otimes e_{j}\rangle|e_{i'}\rangle\langle e_{i}|\quad\text{and}\quad J_{2}^{\mathfrak{k}}=|e_{j'}\rangle\langle e_{j}|,
\]
we get
\[
J_{n}=\sum_{\mathfrak{k}=1}^{\infty}J_{1}^{\mathfrak{k}}\otimes J_{2}^{\mathfrak{k}}.
\]
For more detail, see \cite[Chapter II §4.4 Proposition 4]{bourbaki1998algebra}.

For any $a\otimes b,c\otimes d\in H_{1}\otimes H_{2}$,
\begin{align*}
\langle a\otimes b,J_{n}c\otimes d\rangle & =\sum_{i,j,i',j'=1}^{n}\langle a\otimes b,e_{i'}\otimes e_{j'}\rangle\langle e_{i'}\otimes e_{j'},Je_{i}\otimes e_{j}\rangle\langle e_{i}\otimes e_{j},c\otimes d\rangle\\
 & =\left\langle \sum_{i',j'=1}^{n}\langle a\otimes b,e_{i'}\otimes e_{j'}\rangle e_{i'}\otimes e_{j'},\sum_{i,j=1}^{n}J\langle e_{i}\otimes e_{j},c\otimes d\rangle e_{i}\otimes e_{j}\right\rangle .
\end{align*}
Thus, as $n\to\infty,$ the boundedness of $J$ leads that
\[
\langle a\otimes b,J_{n}c\otimes d\rangle\to\langle a\otimes b,Jc\otimes d\rangle.
\]
For more detaeil, see \cite[Section XIII.9]{ReedSimonIV1975}.

Since the linear combination of $\{e_{i}\otimes e_{j}\}_{i,j=1}^{\infty}$
is dense in $H\otimes H$, we get the conclusion.
\end{proof}
\begin{prop}
A bounded linear Hilbert-Schmidt operator $J$ defined on $\mathcal{F}^{\otimes p}$
can be represented as
\[
J=\sum_{\mathfrak{k}=1}^{\infty}\left(\bigotimes_{q=1}^{p}J_{q}^{\mathfrak{k}}\right)=\sum_{\mathfrak{k}=1}^{\infty}\mathbf{J}^{\mathfrak{k}}
\]
such that $J\left(\bigotimes_{q=1}^{p}\psi_{q}\right)=\sum_{\mathfrak{k}=1}^{\infty}\left(\bigotimes_{q=1}^{p}J_{q}^{\mathfrak{k}}\right)\left(\bigotimes_{q=1}^{p}\psi_{q}\right)=\sum_{\mathfrak{k}=1}^{\infty}\left(\bigotimes_{q=1}^{p}J_{q}^{\mathfrak{k}}\psi_{q}\right)$
for $\psi_{q}\in\mathcal{F}$. Moreover, 
\begin{equation}
\|J\|_{\mathrm{HS}}^{2}=\sum_{\mathfrak{k}=1}^{\infty}\|\mathbf{J}^{\mathfrak{k}}\|_{\mathrm{HS}}^{2}.\label{eq:op-norm-identity}
\end{equation}
\end{prop}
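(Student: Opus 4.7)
The plan is to prove the proposition by induction on $p$, taking Lemma \ref{lem:tensor-operator-decomposition} as the base case $p=2$ (the case $p=1$ being trivial). For the inductive step, I would write $\mathcal{F}^{\otimes p}\cong \mathcal{F}^{\otimes(p-1)}\otimes \mathcal{F}$ and apply the lemma to obtain $J=\sum_{\mathfrak{k}} A^{\mathfrak{k}}\otimes J_{p}^{\mathfrak{k}}$ with $A^{\mathfrak{k}}\in L(\mathcal{F}^{\otimes(p-1)})$; since $J$ is Hilbert--Schmidt, each $A^{\mathfrak{k}}$ inherits sufficient regularity to apply the inductive hypothesis, decomposing it into a sum of simple tensors over the first $p-1$ factors. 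Re-indexing the resulting double sum by a single label $\mathfrak{k}$ yields a representation of $J$ in the desired product form.

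For the Hilbert--Schmidt norm identity, the efficient strategy is to exhibit one canonical such decomposition for which the identity becomes Parseval, mirroring the construction (\ref{eq:tensor-op-representation}) in the lemma. Fix an orthonormal basis $\{e_{i}\}_{i=1}^{\infty}$ of $\mathcal{F}$, so that $\{\bigotimes_{q=1}^{p}e_{i_{q}}\}_{\mathbf{i}\in\mathbb{N}^{p}}$ is an ONB of $\mathcal{F}^{\otimes p}$. The family of rank-one simple tensors $\bigotimes_{q=1}^{p}|e_{i'_{q}}\rangle\langle e_{i_{q}}|$, indexed by $(\mathbf{i},\mathbf{i}')\in\mathbb{N}^{2p}$, is then an orthonormal family in the Hilbert--Schmidt inner product on $L(\mathcal{F}^{\otimes p})$: each factor has HS norm $1$, HS norms multiply over simple tensor products, and HS-orthogonality between two such tensors reduces componentwise to the orthogonality of $|e_{i'_{q}}\rangle\langle e_{i_{q}}|$ and $|e_{j'_{q}}\rangle\langle e_{j_{q}}|$ whenever any index pair differs. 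Expanding $J$ in this basis gives
\[
J=\sum_{\mathbf{i},\mathbf{i}'\in\mathbb{N}^{p}}\Bigl\langle\bigotimes_{q=1}^{p}e_{i'_{q}},\,J\bigotimes_{q=1}^{p}e_{i_{q}}\Bigr\rangle\bigotimes_{q=1}^{p}|e_{i'_{q}}\rangle\langle e_{i_{q}}|,
\]
and relabeling $(\mathbf{i},\mathbf{i}')\leftrightarrow\mathfrak{k}$ while absorbing the scalar matrix element into (say) $J_{1}^{\mathfrak{k}}$ yields the factorized form $J=\sum_{\mathfrak{k}}\mathbf{J}^{\mathfrak{k}}$ with the action on elementary tensors described in the statement.

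The norm identity then follows from Parseval applied to this HS expansion: because the simple tensors form an orthonormal family in the HS inner product,
\[
\|J\|_{\mathrm{HS}}^{2}=\sum_{\mathbf{i},\mathbf{i}'}\Bigl|\Bigl\langle\bigotimes_{q}e_{i'_{q}},\,J\bigotimes_{q}e_{i_{q}}\Bigr\rangle\Bigr|^{2}=\sum_{\mathfrak{k}}\|\mathbf{J}^{\mathfrak{k}}\|_{\mathrm{HS}}^{2},
\]
where the second equality uses $\|\bigotimes_{q}A_{q}\|_{\mathrm{HS}}=\prod_{q}\|A_{q}\|_{\mathrm{HS}}$ together with the fact that each rank-one factor has HS norm $1$, so the norm of $\mathbf{J}^{\mathfrak{k}}$ reduces to the absolute value of the scalar matrix element. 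Convergence of the series in the HS topology (hence weakly, as in the lemma) is automatic from the square-summability of the matrix of a Hilbert--Schmidt operator.

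The only conceptual subtlety, rather than a genuine obstacle, is that (\ref{eq:op-norm-identity}) is not valid for an \emph{arbitrary} simple-tensor decomposition produced by the inductive argument; the statement must be read as asserting the existence of a decomposition for which the identity holds, and the canonical ONB construction above provides it. Once this is pointed out, both halves of the proposition reduce to standard tensor-algebra bookkeeping, and the induction on $p$ is routine since the orthogonality structure is preserved factor by factor.
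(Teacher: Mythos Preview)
Your proposal is correct and follows essentially the same route as the paper: both argue by induction on $p$ from the $p=2$ case supplied by Lemma~\ref{lem:tensor-operator-decomposition}, and both establish the Hilbert--Schmidt norm identity by passing to the canonical rank-one decomposition $\bigotimes_{q}|e_{i'_q}\rangle\langle e_{i_q}|$ coming from a tensor-product orthonormal basis. Your framing via Parseval in the HS inner product is somewhat cleaner than the paper's computation, and you are right to flag that (\ref{eq:op-norm-identity}) should be read as holding for this particular decomposition rather than for an arbitrary one---a point the paper leaves implicit.
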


\begin{proof}
First we consider the case with $p=2$. The proof for the linear combination
is straight from Lemma \ref{lem:tensor-operator-decomposition}. It
is left to show the operator norm identity (\ref{eq:op-norm-identity}).
Using (\ref{eq:tensor-op-representation}), by defining
\[
J_{i'i}\otimes J_{j'j}:=\langle e_{i'}\otimes e_{j'}|J|e_{i}\otimes e_{j}\rangle\left(|e_{i'}\rangle\langle e_{i}|\otimes|e_{j'}\rangle\langle e_{j}|\right),
\]
it is possible to represent
\[
J=\sum_{i,j,i',j'}J_{i'i}\otimes J_{j'j}.
\]
Then noting that $(J_{i'i}\otimes J_{j'j})^{*}$ and $(J_{i'i}\otimes J_{j'j})$
are positive operators,
\begin{align*}
\|J\|_{\mathrm{HS}}^{2} & =\|J^{*}J\|_{\mathrm{HS}}=\|(\sum_{i,j,i',j'}J_{i'i}\otimes J_{j'j})^{*}(\sum_{k,\ell,k',\ell'}J_{k'k}\otimes J_{\ell'\ell})\|_{\mathrm{HS}}\\
 & =\|\sum_{i,j,i',j'}(J_{i'i}\otimes J_{j'j})^{*}(J_{i'i}\otimes J_{j'j})\|_{\mathrm{HS}}=\sum_{i,j,i',j'}\|(J_{i'i}\otimes J_{j'j})^{*}(J_{i'i}\otimes J_{j'j})\|_{\mathrm{HS}}\\
 & =\sum_{i,j,i',j'}\|(J_{i'i}\otimes J_{j'j})\|_{\mathrm{HS}}^{2}.
\end{align*}
By induction we get the result for any positive integer $p\geq2$.
\end{proof}

\subsection{Fock space formalism for mixture condensation with $p$-component
of bosons} \label{subsec:Fock-space-formalism-mixture}

To avoid lengthy paper, we just cite the single-component Fock space
formalism for previous works (for example \cite{ChenLeeSchlein2011,Rodnianski2009,Lee2019,ChenLeeLee2018}).

We now consider $p$-copies of the bosonic Fock space $\mathcal{F}=\bigoplus_{n=0}^{\infty}L^{2}(\mathbb{R}^{3})^{\otimes_{s}n}$.
In our model, the state space for the $p$-component BEC is in
\[
\mathcal{F}^{\otimes p}.
\]
Here, we are using the standard tensor product of $p$-Hilbert spaces.
We will often use the standard construction of tensor product of operators,
as described in \cite{ReedSimonI1975}.

We introduce an \emph{(extended) embedding operator} $\mathcal{E}^{(q)}$
as follows: For an operator $\mathcal{O}$ acting on $\mathcal{F}$,
\[
\mathcal{E}^{(q)}\mathcal{O}=(\bigotimes_{r<q}I)\otimes\mathcal{O}\otimes(\bigotimes_{r>q}I)
\]
for $q=1,\dots,p$. Note that (i) if $q=1$, then $\bigotimes_{r<q}I=1$
and (ii) if $q=p$, then $\bigotimes_{r>q}I=1$. In other words,
\[
\mathcal{E}^{(q)}\mathcal{O}=I\otimes\dots\otimes\underset{\text{the \ensuremath{q}-th type }}{\underbrace{\mathcal{O}}}\otimes\dots\otimes I
\]
For example, one can embed an operator $\mathcal{O}$ on $\mathcal{F}$
to an operator on $\mathcal{F}^{\otimes p}$ by the following ways:
\[
\mathcal{E}^{(1)}\mathcal{O}=\mathcal{O}\otimes I\otimes\dots\otimes I,
\]
\[
\mathcal{E}^{(2)}\mathcal{O}=I\otimes\mathcal{O}\otimes I\otimes\dots\otimes I,
\]
or
\[
\mathcal{E}^{(p)}\mathcal{O}=I\otimes\dots\otimes I\otimes\mathcal{O}.
\]
Similar to $E^{(q)}$, the extended embedding operator $\mathcal{E}^{(q)}$
can be understood as it embeds an operator $\mathcal{O}$ on $\mathcal{F}$
to an operator on $\mathcal{F}^{\otimes p}$. It describes an operator
that is only acting on the $q$-th component particle.

For $f\in L^{2}(\mathbb{R}^{3})$, we define creation and annihilation
operators on each factor of $\mathcal{F}^{\otimes p}$ by using the
embedding operator and the standard creation and annihilation operators
($a^{*}(f)$ and $a(f)$ respectively) on $\mathcal{F}$ defined in
\cite{ChenLeeLee2018,ChenLeeSchlein2011,Lee2019,Rodnianski2009},
\[
a^{(q)*}(f)=\mathcal{E}^{(q)}a^{*}(f)\qquad\text{and}\qquad a^{(q)}(f)=\mathcal{E}^{(q)}a(f).
\]
For $\mathbf{f}=(f_{1},f_{2},\dots,f_{p})$ with $f_{q}\in L^{2}(\mathbb{R}^{3})$
for $q=1,2,\dots,p$, we define the product of all creation and annihilation
operators for all $q=1,\dots,p$,
\[
\boldsymbol{a}^{*}(\mathbf{f}):=\prod_{q=1}^{p}a^{(q)*}(f_{p})\qquad\text{and}\qquad\boldsymbol{a}^{*}(\mathbf{f}):=\prod_{q=1}^{p}a^{(q)}(f_{p}).
\]
The corresponding operator valued distributions are given by 
\begin{equation}
\boldsymbol{a}_{\mathbf{x}}^{*}:=\prod_{q=1}^{p}\mathcal{E}^{(q)}a_{x_{q}}^{*}\qquad\text{and}\qquad\boldsymbol{a}_{\mathbf{x}}:=\prod_{q=1}^{p}\mathcal{E}^{(q)}a_{x_{q}}.\label{eq:creation}
\end{equation}
It is useful to consider self-adjoint operators
\[
\Phi(\mathbf{f})=\bigotimes_{q=1}^{p}\phi(f_{q})
\]
and
\[
\boldsymbol{\phi}^{(Q)}(\mathbf{f})=\bigotimes_{q\in Q}\phi(f_{q})
\]
for $Q\subset\{1,\dots,p\}$.

There are several commutation relations for the operators $a^{(q)*}(f)$
and $a^{(q)}(f)$. The only commutators that are not equal to zero
are 
\[
[a^{(q)}(f),a^{(r)*}(g)]=\delta_{qr}\langle f,g\rangle.
\]
Consequently,
\[
[a_{x}^{(q)},a_{y}^{(r)*}]=\delta_{qr}\delta(x-y).
\]
Using the operator-valued distributions, we let $\mathcal{N}^{(q)}$
to denote the number of particles of component $q$, i.e.
\[
\mathcal{N}^{(q)}=\mathcal{E}^{(q)}\int dx\,a_{x}^{*}a_{x}.
\]
Then, we define
\[
\mathcal{N}_{\mathrm{total}}:=\sum_{q=1}^{p}\mathcal{E}^{(q)}\mathcal{N}.
\]
It leads us to have such identity
\[
\mathcal{N}_{\mathrm{total}}+p\mathcal{I}=\sum_{q=1}^{p}\mathcal{E}^{(q)}\mathcal{N}+p\bigotimes_{q=1}^{p}1=\sum_{q=1}^{p}\mathcal{E}^{(q)}(\mathcal{N}+1)
\]
where $\mathcal{I}$ is the identity operator which can be written
as
\[
\mathcal{I}=\bigotimes_{q=1}^{p}1.
\]

\begin{rem}
We want to remark that $\mathcal{N}_{\mathrm{total}}+p\mathcal{I}$
is a generalization of $\mathcal{N}+1$ given in previous works for
example in \cite{Rodnianski2009} and $(\mathcal{N}+1)\otimes I+I\otimes(\mathcal{N}+1)$
in \cite{1811.04984}. For each cases, if we put $p=1$ or $p=2$,
it matches the definition.
\end{rem}

We also generalize projection operator $P_{n}$. For each non-negative
integers $n_{q}$, let $\mathbf{n}=(n_{1},\dots,n_{p})$. We introduce
the projection operator $\mathcal{P}_{\mathbf{n}}$ onto the $\mathbf{n}$-particle
sector of the Fock space $\mathcal{F}^{\otimes p}$, 
\[
\mathcal{P}_{\mathbf{n}}(\psi):=\bigotimes_{q=1}^{p}P_{n_{q}}\psi_{q}
\]
for $\psi=\bigotimes_{q=1}^{p}\psi_{q}=(\psi_{1}^{(0)},\psi_{1}^{(1)},\dots)\otimes(\psi_{2}^{(0)},\psi_{2}^{(1)},\dots)\otimes\dots\in\mathcal{F}$.
For simplicity, with slight abuse of notation, we will use $\bigotimes_{q=1}^{p}\psi_{q}^{(n_{q})}$
(or more compactly $\psi^{(\mathbf{n})}$) to denote $\mathcal{P}_{\mathbf{n}}(\psi)$.

For this space we define inner product of $\psi=\bigotimes_{q=1}^{p}\psi_{q}$
and $\phi=\bigotimes_{q=1}^{p}\phi_{q}$ by
\[
\langle\psi,\phi\rangle_{\mathcal{F}^{\otimes p}}=\langle\bigotimes_{q=1}^{p}\psi_{q},\bigotimes_{q=1}^{p}\phi_{q}\rangle_{\mathcal{F}^{\otimes p}}=\prod_{q=1}^{p}\langle\psi_{q},\phi_{q}\rangle_{\mathcal{F}}=\prod_{q=1}^{p}\sum_{n_{q}\geq1}\langle\psi_{q}^{(n_{q})},\phi_{q}^{(n_{q})}\rangle_{L^{2}(\mathbb{R}^{3n_{q}})}.
\]
This induces natural norm on $\mathcal{F}^{\otimes p}$ which will
be denoted by $\|\cdot\|_{\mathcal{F}^{\otimes p}}$.

In general, for every Hilbert-Schmidt operator $J$ defined on the
$\mathbf{I}$-particle sector, we have a representation 
\[
J=\sum_{\mathfrak{k}=1}^{\infty}(\bigotimes_{q=1}^{p}J_{q}^{\mathfrak{k}})=\sum_{\mathfrak{k}=1}^{\infty}(\mathbf{J}^{\mathfrak{k}}).
\]
The second quantization $d\Gamma(J)$ of $J$ is the operator on $\mathcal{F}^{\otimes p}$
whose action on the $\mathbf{n}$-particle sector is given by 
\[
\sum_{\mathfrak{k}=1}^{\infty}\bigotimes_{q=1}^{p}\left(d\Gamma(J_{q}^{\mathfrak{k}})\psi\right)^{(n_{q})}=\sum_{\mathfrak{k}=1}^{\infty}\bigotimes_{q=1}^{p}\left(\sum_{j=1}^{n_{q}}J_{q,j}^{\mathfrak{k}}\psi_{q}^{(n_{q})}\right)
\]
where $J_{qj}^{\mathfrak{n}}=1\otimes1\otimes\cdots\otimes1\otimes J_{q}^{\mathfrak{n}}\otimes1\otimes\cdots$
are the operators acting only on the $j$-th particle of $q$-th component.
If $J$ has a kernel $J(\mathbf{x};\mathbf{x}')$, then $d\Gamma(J)$
can be expressed as $\sum_{\mathfrak{k}=1}^{\infty}\bigotimes_{q=1}^{p}d\Gamma(J_{q}^{\mathfrak{k}})$
due to the following:

Since $J$ is a bounded linear operator, it can be written in the
tensor product form such that
\[
J(\mathbf{x};\mathbf{x}')=\sum_{\mathfrak{k}=1}^{\infty}\bigotimes_{q=1}^{p}J_{q}^{\mathfrak{k}}(x_{q};x_{q}').
\]
Then 
\[
d\Gamma(J)=\int\mathrm{d}\mathbf{x}\mathrm{d}\mathbf{x}'\ J(\mathbf{x};\mathbf{x}')\prod_{q=1}^{p}\mathcal{E}^{(q)}a_{x_{q}}^{*}a_{x_{q}'}
\]
which also can be expressed
\[
d\Gamma(\sum_{\mathfrak{k}=1}^{\infty}\mathbf{J}^{\mathfrak{k}})=\int\mathrm{d}\mathbf{x}\mathrm{d}\mathbf{x}'\ \sum_{\mathfrak{n}=1}^{\infty}\mathbf{J}^{\mathfrak{k}}(\mathbf{x};\mathbf{x}')\boldsymbol{a}_{\mathbf{x}}^{*}\boldsymbol{a}_{\mathbf{x}'}
\]
for each $q=1,\dots,p$. We have
\begin{align*}
d\Gamma(\bigotimes_{q=1}^{p}J_{q}) & =\int\mathrm{d}\mathbf{x}\mathrm{d}\mathbf{x}'\ \sum_{\mathfrak{k}=1}^{\infty}\mathbf{J}^{\mathfrak{k}}(\mathbf{x};\mathbf{x}')\boldsymbol{a}_{\mathbf{x}}^{*}\boldsymbol{a}_{\mathbf{x}'}.\\
 & =\sum_{\mathfrak{k}=1}^{\infty}\int\mathrm{d}\mathbf{x}\mathrm{d}\mathbf{x}'\ \mathbf{J}^{\mathfrak{k}}(\mathbf{x};\mathbf{x}')\boldsymbol{a}_{\mathbf{x}}^{*}\boldsymbol{a}_{\mathbf{x}'}\\
 & =\sum_{\mathfrak{k}=1}^{\infty}\bigotimes_{q=1}^{p}\int\mathrm{d}x_{q}\mathrm{d}x_{q}'\ J_{q}^{\mathfrak{k}}(x_{q};x_{q}')a_{x}^{*}a_{x'}\\
 & =\sum_{\mathfrak{k}=1}^{\infty}\bigotimes_{q=1}^{p}d\Gamma(J_{q}^{\mathfrak{k}}).
\end{align*}

The creation and annihilation operators for $q$-th component are
bounded with respect to $\mathcal{E}^{(q)}\mathcal{N}^{1/2}$. The
precise statement is given in the following lemma. The proof of the
lemma follows from the corresponding well-known result for creation
and annihilation operators on each factor of $\mathcal{F}\otimes\mathcal{F}$
(see \cite{Rodnianski2009} for a proof).
\begin{lem}
\label{lem:relbN} Let $\mathbf{f}:=\bigotimes_{q=1}^{p}f_{q}\in L^{2}(\mathbb{R}^{3})^{\otimes p}$.
For any $\psi\in\mathcal{F}^{\otimes p}$, we have 
\begin{alignat*}{1}
\|a^{(q)*}(f_{q})\psi\|_{\mathcal{F}^{\otimes p}} & \le\|f_{q}\|_{2}\,\|\mathcal{E}^{(q)}(\mathcal{N}+1)^{1/2}\psi\|_{\mathcal{F}^{\otimes p}}\quad\text{and}\\
\|a^{(q)}(f_{q})\psi\|_{\mathcal{F}^{\otimes p}} & \le\|f_{q}\|_{2}\,\|\mathcal{E}^{(q)}\mathcal{N}^{1/2}\psi\|_{\mathcal{F}^{\otimes p}}.
\end{alignat*}
\end{lem}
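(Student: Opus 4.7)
The plan is to reduce the bounds on $\mathcal{F}^{\otimes p}$ to the corresponding well-known single-component bounds on $\mathcal{F}$ cited from \cite{Rodnianski2009}, namely
\[
\|a^*(g)\varphi\|_{\mathcal{F}}\le\|g\|_{2}\,\|(\mathcal{N}+1)^{1/2}\varphi\|_{\mathcal{F}}, \qquad \|a(g)\varphi\|_{\mathcal{F}}\le\|g\|_{2}\,\|\mathcal{N}^{1/2}\varphi\|_{\mathcal{F}},
\]
for $g\in L^{2}(\mathbb{R}^{3})$ and $\varphi\in\mathcal{F}$. The key structural observation is that, by definition of $\mathcal{E}^{(q)}$, both $a^{(q)*}(f_q)$ and $\mathcal{E}^{(q)}(\mathcal{N}+1)^{1/2}$ act as the ambient single-factor operator on the $q$-th tensor slot and as the identity on every other slot.

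First, I would fix a single index $q$ and identify $\mathcal{F}^{\otimes p}\cong\mathcal{F}\otimes\mathcal{G}$, where $\mathcal{G}$ denotes the tensor product of the remaining $p-1$ Fock factors. Choosing any orthonormal basis $\{\phi_{j}\}_{j\in\mathbb{N}}$ of $\mathcal{G}$, every $\psi\in\mathcal{F}^{\otimes p}$ admits a unique expansion
\[
\psi \;=\; \sum_{j} \psi_{q,j}\otimes\phi_{j}, \qquad \psi_{q,j}\in\mathcal{F}, \qquad \sum_{j}\|\psi_{q,j}\|_{\mathcal{F}}^{2}=\|\psi\|_{\mathcal{F}^{\otimes p}}^{2}.
\]
Under this identification, $a^{(q)*}(f_{q})=a^{*}(f_{q})\otimes I_{\mathcal{G}}$ and $\mathcal{E}^{(q)}(\mathcal{N}+1)^{1/2}=(\mathcal{N}+1)^{1/2}\otimes I_{\mathcal{G}}$, so orthogonality of the $\phi_{j}$ gives
\[
\|a^{(q)*}(f_{q})\psi\|_{\mathcal{F}^{\otimes p}}^{2}\;=\;\sum_{j}\|a^{*}(f_{q})\psi_{q,j}\|_{\mathcal{F}}^{2}.
\]

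Second, I apply the single-component creation-operator bound term by term to obtain
\[
\sum_{j}\|a^{*}(f_{q})\psi_{q,j}\|_{\mathcal{F}}^{2} \;\le\; \|f_{q}\|_{2}^{2}\sum_{j}\|(\mathcal{N}+1)^{1/2}\psi_{q,j}\|_{\mathcal{F}}^{2} \;=\; \|f_{q}\|_{2}^{2}\,\|\mathcal{E}^{(q)}(\mathcal{N}+1)^{1/2}\psi\|_{\mathcal{F}^{\otimes p}}^{2},
\]
where the last equality again uses orthogonality of $\{\phi_{j}\}$ together with the tensor-product factorization of the operator. Taking square roots yields the creation-operator inequality; the annihilation-operator inequality is derived by the identical argument with $(\mathcal{N}+1)^{1/2}$ replaced by $\mathcal{N}^{1/2}$.

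The proof has essentially no substantive obstacle: the content is entirely the reduction to the single-factor case by the tensor-product structure, which is a standard bookkeeping exercise once the embedding operator $\mathcal{E}^{(q)}$ has been introduced. The only minor technical point is justifying the termwise application of an unbounded operator to the series $\sum_{j}\psi_{q,j}\otimes\phi_{j}$, but this is handled by first restricting to vectors with bounded particle number in the $q$-th slot (where the series is finite in each $\mathbf{n}$-sector $\mathcal{P}_{\mathbf{n}}\psi$) and then passing to the limit via the monotone convergence of partial sums, exactly as in the proof of the analogous one-factor bound in \cite{Rodnianski2009}.
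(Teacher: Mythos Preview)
Your proposal is correct and follows essentially the same approach as the paper: the paper does not give a detailed proof but simply states that the lemma follows from the corresponding well-known single-component bounds on each factor of $\mathcal{F}^{\otimes p}$, citing \cite{Rodnianski2009}. Your argument makes this reduction explicit via the identification $\mathcal{F}^{\otimes p}\cong\mathcal{F}\otimes\mathcal{G}$ and an orthonormal expansion in $\mathcal{G}$, which is precisely the standard bookkeeping the paper leaves implicit.
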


\begin{lem}
\label{lem:bound-a-a*-Phi-dGamma}For $\alpha_{q}>0$, let $D\left(\mathcal{N}^{\mathbf{\alpha}}\right)=\{\mathcal{N}^{\mathbf{\alpha}}\psi\in\mathcal{F}^{\otimes p}:\sum_{n_{q}\geq1}\prod_{q=1}^{p}n_{q}^{2\alpha_{q}}\|\psi_{q}^{(n_{q})}\|_{2}^{2}<\infty\}$
denote the domain of the operator $\mathcal{N}^{\mathbf{\alpha}}$.
For any $\mathbf{f}=\bigotimes_{q=1}^{p}f_{q}\in\bigotimes_{q=1}^{p}L^{2}(\mathbb{R}^{3},\mathrm{d}x_{q})$
and any $\psi\in D(\bigotimes_{q=1}^{p}\mathcal{N}^{1/2})$, we have
\[
\begin{split}\|\boldsymbol{a}(\mathbf{f})\psi\|_{\mathcal{F}^{\otimes p}} & \leq(\prod_{q=1}^{p}\|f_{q}\|_{2})\,\|(\bigotimes_{q=1}^{p}\mathcal{N}^{1/2})\psi\|_{\mathcal{F}^{\otimes p}},\\
\|\boldsymbol{a}^{*}(\mathbf{f})\psi\|_{\mathcal{F}^{\otimes p}} & \leq(\prod_{q=1}^{p}\|f_{q}\|_{2})\,\|(\bigotimes_{q=1}^{p}(\mathcal{N}+1)^{1/2})\psi\|_{\mathcal{F}^{\otimes p}},
\end{split}
\]
and
\begin{equation}
\begin{split}\|\Phi(\mathbf{f})\psi\|_{\mathcal{F}^{\otimes p}} & \leq2(\prod_{q=1}^{p}\|f_{q}\|_{2})\|(\bigotimes_{q=1}^{p}(\mathcal{N}+1)^{1/2})\psi\|_{\mathcal{F}^{\otimes p}}\\
 & \,\leq2(\prod_{q=1}^{p}\|f_{q}\|_{2})\|(\mathcal{N}_{\mathrm{tot}}+1)^{p/2}\psi\|_{\mathcal{F}^{\otimes p}}.
\end{split}
\label{eq:bd-a-1}
\end{equation}
Moreover, for any Hilbert--Schmidt $p$-particle operator $J$ on
$L^{2}(\mathbb{R}^{3},dx)^{\otimes p}$ and for every $\psi\in D((\mathcal{N}^{1/2}\otimes\mathcal{N}^{1/2}))$,
we find 
\begin{equation}
\|d\Gamma(J)\psi\|\leq\|J\|_{\mathrm{HS}}\|\mathcal{N}^{\otimes p}\psi\|\,.\label{eq:J-bd-1}
\end{equation}
\end{lem}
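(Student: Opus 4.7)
The plan is to reduce each of the four inequalities to the corresponding single-component statements already recorded in Lemma \ref{lem:relbN} (and, for $d\Gamma$, the standard bound from \cite{Rodnianski2009}), and then iterate over the $p$ tensor factors of $\mathcal{F}^{\otimes p}$.

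For the two bounds on $\boldsymbol{a}(\mathbf{f})$ and $\boldsymbol{a}^{*}(\mathbf{f})$, factor $\boldsymbol{a}(\mathbf{f})=\prod_{q=1}^{p}a^{(q)}(f_{q})$ as in the definition. The crucial observation is that $a^{(q)}(f_{q})$ and $\mathcal{E}^{(r)}\mathcal{N}^{1/2}$ commute whenever $q\neq r$, since they act on disjoint tensor factors. Apply Lemma \ref{lem:relbN} to the outermost operator $a^{(1)}(f_{1})$, then commute the resulting $\mathcal{E}^{(1)}\mathcal{N}^{1/2}$ past the remaining annihilation operators to the right, and repeat inductively for $q=2,\dots,p$. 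After $p$ steps one collects $\prod_{q=1}^{p}\|f_{q}\|_{2}$ together with $\prod_{q=1}^{p}\mathcal{E}^{(q)}\mathcal{N}^{1/2}=\bigotimes_{q=1}^{p}\mathcal{N}^{1/2}$ acting on $\psi$. The analogous iteration with $\mathcal{N}+1$ in place of $\mathcal{N}$ gives the creation bound.

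For the bound on $\Phi(\mathbf{f})=\bigotimes_{q=1}^{p}\phi(f_{q})$, write $\phi(f_{q})=a^{*}(f_{q})+a(f_{q})$ and apply the triangle inequality to split $\Phi(\mathbf{f})$ into $2^{p}$ pieces of the form $\bigotimes_{q}b_{q}$ with $b_{q}\in\{a(f_{q}),a^{*}(f_{q})\}$; each such piece is bounded by the preceding argument. The second inequality follows sector by sector: on $\mathcal{P}_{\mathbf{n}}\mathcal{F}^{\otimes p}$ the operator $\bigotimes_{q=1}^{p}(\mathcal{N}+1)^{1/2}$ acts as multiplication by $\prod_{q}(n_{q}+1)^{1/2}$, and since $n_{q}+1\le\mathcal{N}_{\mathrm{tot}}+1$ componentwise, the product is dominated by $(\mathcal{N}_{\mathrm{tot}}+1)^{p/2}$.

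The main obstacle is inequality (\ref{eq:J-bd-1}), since a generic Hilbert--Schmidt $J$ on $L^{2}(\mathbb{R}^{3})^{\otimes p}$ is not a simple tensor. The plan is to avoid the decomposition $J=\sum_{\mathfrak{k}}\mathbf{J}^{\mathfrak{k}}$ given by the earlier proposition (since that expansion produces an $\ell^{1}$ sum $\sum_{\mathfrak{k}}\|\mathbf{J}^{\mathfrak{k}}\|_{\mathrm{HS}}$, whereas (\ref{eq:op-norm-identity}) only controls the $\ell^{2}$ sum) and instead work directly with the kernel representation
\[
d\Gamma(J)=\int\mathrm{d}\mathbf{x}\,\mathrm{d}\mathbf{x}'\,J(\mathbf{x};\mathbf{x}')\,\boldsymbol{a}_{\mathbf{x}}^{*}\boldsymbol{a}_{\mathbf{x}'}
\]
on each $\mathbf{n}$-particle sector. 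Bringing the creation operators to the right and using the bosonic symmetry, $d\Gamma(J)$ acts on $\psi^{(\mathbf{n})}=\mathcal{P}_{\mathbf{n}}\psi$ as $\sum_{\mathbf{j}}J_{\mathbf{j}}\psi^{(\mathbf{n})}$, summed over multi-indices $\mathbf{j}=(j_{1},\dots,j_{p})$ with $j_{q}\in\{1,\dots,n_{q}\}$, where $J_{\mathbf{j}}$ inserts $J$ on the $j_{q}$-th particle of the $q$-th component. Apply Cauchy--Schwarz to this sum, which picks up a factor $\boldsymbol{n}=\prod_{q}n_{q}$, and then bound $\|J_{\mathbf{j}}\psi^{(\mathbf{n})}\|^{2}\le\|J\|_{\mathrm{HS}}^{2}\|\psi^{(\mathbf{n})}\|^{2}$ by freezing the variables outside of position $\mathbf{j}$ and invoking the Hilbert--Schmidt definition in the $p$ variables indexed by $\mathbf{j}$. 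Summing over $\mathbf{n}$ and recognising that $\mathcal{N}^{\otimes p}=\bigotimes_{q=1}^{p}\mathcal{N}$ acts as multiplication by $\boldsymbol{n}$ on $\mathcal{P}_{\mathbf{n}}$ yields $\|d\Gamma(J)\psi\|\le\|J\|_{\mathrm{HS}}\|\mathcal{N}^{\otimes p}\psi\|$, as claimed.
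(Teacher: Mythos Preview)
Your treatment of the bounds on $\boldsymbol{a}(\mathbf{f})$, $\boldsymbol{a}^{*}(\mathbf{f})$ and $\Phi(\mathbf{f})$ matches the paper's approach: the paper also simply invokes Lemma~\ref{lem:relbN} and says ``these bounds are standard''. (One small remark: expanding $\Phi(\mathbf{f})=\bigotimes_{q}\phi(f_{q})$ into $2^{p}$ monomials gives a prefactor $2^{p}$, not the $2$ written in the statement; the paper is loose about this constant and it is harmless downstream.)

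For \eqref{eq:J-bd-1} your route is genuinely different from the paper's. The paper uses the tensor decomposition $J=\sum_{\mathfrak{k}}\mathbf{J}^{\mathfrak{k}}$ from the Proposition after Lemma~\ref{lem:tensor-operator-decomposition}, writes
\[
\|d\Gamma(J)\psi\|_{\mathcal{F}^{\otimes p}}^{2}=\sum_{\mathfrak{k}}\prod_{q}\langle d\Gamma(J_{q}^{\mathfrak{k}})\psi_{q},\,d\Gamma(J_{q}^{\mathfrak{k}})\psi_{q}\rangle_{\mathcal{F}},
\]
bounds each factor by $\|J_{q}^{\mathfrak{k}}\|_{\mathrm{HS}}^{2}\|\mathcal{N}\psi_{q}\|^{2}$, and then invokes the $\ell^{2}$ identity \eqref{eq:op-norm-identity}. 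This relies on $\psi$ being a simple tensor and on the summands indexed by $\mathfrak{k}$ being mutually orthogonal, which comes from the particular rank-one basis decomposition chosen in Lemma~\ref{lem:tensor-operator-decomposition}. Your approach sidesteps both of these issues: working sector by sector with $d\Gamma(J)\psi^{(\mathbf{n})}=\sum_{\mathbf{j}}J_{\mathbf{j}}\psi^{(\mathbf{n})}$, applying Cauchy--Schwarz over the $\boldsymbol{n}=\prod_{q}n_{q}$ summands, and using $\|J_{\mathbf{j}}\psi^{(\mathbf{n})}\|\le\|J\|_{\mathrm{op}}\|\psi^{(\mathbf{n})}\|\le\|J\|_{\mathrm{HS}}\|\psi^{(\mathbf{n})}\|$ gives the same bound with no structural assumptions on $\psi$ or on the decomposition of $J$. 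Your version is more elementary and slightly more general; the paper's version has the advantage of illustrating how the tensor-algebra machinery of Section~\ref{subsec:Review-on-Tensor} is meant to be used elsewhere in the argument.
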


\begin{proof}
These bounds are standard. A proof of (\ref{eq:bd-a-1}) can be obtained
using Lemma (\ref{lem:relbN}). As for (\ref{eq:J-bd-1}), it is to
observe that
\begin{align*}
\|d\Gamma(J)\psi\|_{\mathcal{F}^{\otimes p}}^{2} & =\sum_{\mathfrak{k}=1}^{\infty}\prod_{q=1}^{p}\langle d\Gamma(J_{q}^{\mathfrak{k}})\psi_{q},d\Gamma(J_{q}^{\mathfrak{k}})\psi_{q}\rangle_{\mathcal{F}}\\
 & =\sum_{\mathfrak{k}=1}^{\infty}\prod_{q=1}^{p}\sum_{i_{q},j_{q}=1}^{n_{q}}\langle J_{q,i_{q}}^{\mathfrak{k}}\psi_{q}^{(n_{q})},J_{q,j_{q}}^{\mathfrak{k}}\psi_{q}^{(n_{q})}\rangle\\
 & \leq\sum_{\mathfrak{k}=1}^{\infty}\prod_{q=1}^{p}\sum_{n_{q}\geq1}n_{q}^{2}\|J_{q}^{\mathfrak{k}}\|_{\mathrm{HS}}^{2}\|\psi_{q}^{(n_{q})}\|_{2}^{2}.
\end{align*}
By the definition of $\mathcal{N}$ on $\mathcal{F}$,
\begin{align*}
\|d\Gamma(J)\psi\|_{\mathcal{F}^{\otimes p}}^{2} & \leq\sum_{\mathfrak{k}=1}^{\infty}\prod_{q=1}^{p}\|J_{q}^{\mathfrak{k}}\|_{\mathrm{HS}}^{2}\|\mathcal{E}^{(q)}\mathcal{N}\psi\|_{\mathcal{F}^{\otimes p}}^{2}\\
 & \leq\sum_{\mathfrak{k}=1}^{\infty}\|\mathbf{J}^{\mathfrak{k}}\|_{\mathrm{HS}}^{2}\|\mathcal{N}^{\otimes p}\psi\|_{\mathcal{F}^{\otimes p}}^{2}.\\
 & =\|J\|_{\mathrm{HS}}^{2}\|\mathcal{N}^{\otimes p}\psi\|_{\mathcal{F}^{\otimes p}}^{2}
\end{align*}
which completes the proof.
\end{proof}
For $Q\subset\{1,\dots,p\}$, let us denote 
\[
d\Gamma^{(Q)}(J):=\int\mathrm{d}\mathbf{x}\mathrm{d}\mathbf{x}'\ J(\mathbf{x};\mathbf{x}')\prod_{q\in Q}\mathcal{E}^{(q)}a_{x_{q}}^{*}a_{x_{q}'}\,.
\]
We let $J\mathbf{u}_{t}=(J_{1}u_{1,t},\dots,J_{p}u_{p,t})$ and $J^{(Q)}=\prod_{q\in Q}\mathcal{E}^{(q)}J^{q}$.
Then using the same idea proving Lemma \ref{lem:bound-a-a*-Phi-dGamma},
we have
\begin{lem}
\label{lem:dGammaQphiQ}Let $p$ be a positive integer and let $Q$
be a subset of positive integer such that $Q\subset\{1,\dots,p\}$.
Then for $\psi\in\mathcal{F}^{\otimes p}$ and for any nonnegative
integer $j$, we have
\[
\left\Vert (\mathcal{N}_{\mathrm{total}}+p\mathcal{I})^{j}d\Gamma^{(Q)}(J)\psi\right\Vert _{\mathcal{F}^{\otimes p}}\leq\|J^{(Q)}\|_{\mathrm{HS}}\left\Vert (\mathcal{N}_{\mathrm{total}}+p\mathcal{I})^{j+|Q|}\psi\right\Vert _{\mathcal{F}^{\otimes p}}
\]
and
\[
\left\Vert (\mathcal{N}_{\mathrm{total}}+p\mathcal{I})^{j}\boldsymbol{\phi}^{(Q)}(J\mathbf{u}_{t})\psi\right\Vert _{\mathcal{F}^{\otimes p}}\leq\|J^{(Q)}\|_{\mathrm{HS}}\left\Vert (\mathcal{N}_{\mathrm{total}}+p\mathcal{I})^{j+|Q|}\psi\right\Vert _{\mathcal{F}^{\otimes p}}.
\]
\end{lem}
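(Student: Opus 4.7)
The proof adapts the strategy used for Lemma~\ref{lem:bound-a-a*-Phi-dGamma} to the restriction to a subset $Q\subset\{1,\dots,p\}$. The plan is, first, to decompose $J$ into elementary tensors via Lemma~\ref{lem:tensor-operator-decomposition}, writing $J=\sum_{\mathfrak{k}=1}^{\infty}\bigotimes_{q=1}^{p}J_{q}^{\mathfrak{k}}$ together with the Hilbert--Schmidt identity $\|J^{(Q)}\|_{\mathrm{HS}}^{2}=\sum_{\mathfrak{k}}\prod_{q\in Q}\|J_{q}^{\mathfrak{k}}\|_{\mathrm{HS}}^{2}$. This yields the decompositions
\[
d\Gamma^{(Q)}(J)=\sum_{\mathfrak{k}=1}^{\infty}\prod_{q\in Q}\mathcal{E}^{(q)}d\Gamma(J_{q}^{\mathfrak{k}}),\qquad \boldsymbol{\phi}^{(Q)}(J\mathbf{u}_t)=\sum_{\mathfrak{k}=1}^{\infty}\prod_{q\in Q}\mathcal{E}^{(q)}\phi(J_{q}^{\mathfrak{k}}u_{q,t}),
\]
which reduce the problem to a product of single-component estimates indexed by $q\in Q$, tied together by a Cauchy--Schwarz summation in $\mathfrak{k}$.

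Next I would apply the single-component bounds from Lemma~\ref{lem:relbN} and estimate~(\ref{eq:J-bd-1}) in each factor. For the $d\Gamma$ term, in every sector $q\in Q$ one has $\|d\Gamma(J_{q}^{\mathfrak{k}})\psi_{q}\|_{\mathcal{F}}\le\|J_{q}^{\mathfrak{k}}\|_{\mathrm{HS}}\,\|\mathcal{N}\psi_{q}\|_{\mathcal{F}}$. For the $\boldsymbol{\phi}^{(Q)}$ term, using $\phi(f)=a^{*}(f)+a(f)$ and Lemma~\ref{lem:relbN}, the analogue is $\|\phi(J_{q}^{\mathfrak{k}}u_{q,t})\psi_{q}\|_{\mathcal{F}}\le 2\|J_{q}^{\mathfrak{k}}u_{q,t}\|_{2}\,\|(\mathcal{N}+1)^{1/2}\psi_{q}\|_{\mathcal{F}}$, followed by $\|J_{q}^{\mathfrak{k}}u_{q,t}\|_{2}\le\|J_{q}^{\mathfrak{k}}\|_{\mathrm{HS}}$ since $\|u_{q,t}\|_{2}=1$. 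Because the embedded operators $\mathcal{E}^{(q)}$ for distinct $q$ commute and act on disjoint tensor factors, these factor-wise bounds multiply cleanly, and a final Cauchy--Schwarz in $\mathfrak{k}$ reassembles $\prod_{q\in Q}\|J_{q}^{\mathfrak{k}}\|_{\mathrm{HS}}$ into $\|J^{(Q)}\|_{\mathrm{HS}}$.

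The last ingredient is the number-operator bookkeeping. From the identity $\mathcal{N}_{\mathrm{total}}+p\mathcal{I}=\sum_{q=1}^{p}\mathcal{E}^{(q)}(\mathcal{N}+1)$ I obtain the operator inequality $\mathcal{E}^{(q)}(\mathcal{N}+1)\le \mathcal{N}_{\mathrm{total}}+p\mathcal{I}$ for each $q$, and since the commuting positive operators $\mathcal{E}^{(q)}(\mathcal{N}+1)$ are simultaneously diagonal on the $\mathbf{n}$-particle sectors,
\[
\prod_{q\in Q}\mathcal{E}^{(q)}(\mathcal{N}+1)\le (\mathcal{N}_{\mathrm{total}}+p\mathcal{I})^{|Q|}.
\]
Because $d\Gamma^{(Q)}(J)$ is particle-number conserving, the outer factor $(\mathcal{N}_{\mathrm{total}}+p\mathcal{I})^{j}$ commutes through and combines with the $(\mathcal{N}_{\mathrm{total}}+p\mathcal{I})^{|Q|}$ from the estimate above, producing the exponent $j+|Q|$. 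For $\boldsymbol{\phi}^{(Q)}$, commuting $(\mathcal{N}_{\mathrm{total}}+p\mathcal{I})^{j}$ through each factor $\phi(f)$ introduces at most a unit shift per component of $Q$ (via $(\mathcal{N}+c)a^{*}(f)=a^{*}(f)(\mathcal{N}+c+1)$ and its adjoint relation), and this shift is absorbed into the same final exponent $j+|Q|$.

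The main obstacle is the coordination of these ingredients: the Cauchy--Schwarz sum in $\mathfrak{k}$ must be carried out \emph{before} introducing the number-operator weights so that the Hilbert--Schmidt norm $\|J^{(Q)}\|_{\mathrm{HS}}$ reassembles correctly, and for the $\boldsymbol{\phi}^{(Q)}$ bound one has to verify that the shifts produced by commuting $(\mathcal{N}_{\mathrm{total}}+p\mathcal{I})^{j}$ across the $|Q|$ creation/annihilation operators are uniformly controlled by the clean final exponent $j+|Q|$ independently of $\mathfrak{k}$.
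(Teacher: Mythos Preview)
Your proposal is correct and follows essentially the same approach as the paper, which proves Lemma~\ref{lem:dGammaQphiQ} merely by the sentence ``using the same idea proving Lemma~\ref{lem:bound-a-a*-Phi-dGamma}''; you have supplied precisely that argument in greater detail, including the tensor decomposition of $J$, the factor-wise single-component bounds, and the number-operator bookkeeping via $\mathcal{E}^{(q)}(\mathcal{N}+1)\le\mathcal{N}_{\mathrm{total}}+p\mathcal{I}$. One small remark: where you invoke ``Cauchy--Schwarz in $\mathfrak{k}$'' to reassemble $\|J^{(Q)}\|_{\mathrm{HS}}$, the paper's proof of Lemma~\ref{lem:bound-a-a*-Phi-dGamma} instead writes an equality $\|d\Gamma(J)\psi\|^{2}=\sum_{\mathfrak{k}}\|d\Gamma(\mathbf{J}^{\mathfrak{k}})\psi\|^{2}$ relying on the specific orthogonal-basis decomposition from Lemma~\ref{lem:tensor-operator-decomposition}, so it is this orthogonality (rather than a generic Cauchy--Schwarz step, which would produce $\sum_{\mathfrak{k}}\prod_{q}\|J_{q}^{\mathfrak{k}}\|_{\mathrm{HS}}$ instead of the square root of the sum of squares) that recovers the Hilbert--Schmidt norm.
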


\textbf{The Hamiltonian.} We now define the extended Hamiltonian $\mathcal{H}_{\mathbf{N}}$
acting on $\mathcal{F}^{\otimes p}$. First, we observe that the set
of all finite sums $\sum_{\mathbf{n}\in\mathbb{Z}_{\geq0}^{p}}\psi_{q}^{(\mathbf{n})}$
is dense in $\mathcal{F}^{\otimes p}$. Note that the set of all finite
linear combinations of products $\psi^{(\mathbf{n})}=\bigotimes_{q=1}^{p}\psi_{q}^{(n_{q})}$,
denoted by $\mathcal{D}$, is dense in $\mathcal{F}^{\otimes p}$.
We define the Hamiltonian $\mathcal{H}_{\mathbf{N}}$ acting on products
$\psi^{(\mathbf{n})}$ by $\mathcal{H}_{\mathbf{N}}\,\psi^{(\mathbf{n})}=\mathcal{H}_{\mathbf{N}}^{\mathbf{n}}\,\psi^{(\mathbf{n})}$
where 
\[
\mathcal{H}_{\mathbf{N}}^{\mathbf{n}}=\sum_{q=1}^{p}\mathcal{E}^{(q)}h_{q}^{(n_{q})}+\sum_{q\neq r}V_{qr}^{(n_{q},n_{r})}
\]
with 
\[
h_{q}^{(n_{q})}=\sum_{j=1}^{n_{q}}-\Delta_{x_{j}}+\frac{1}{N_{q}}\sum_{1\le i<j\le n_{q}}V_{qq}(x_{i}-x_{j})
\]
for $q=1,\dots,p$, and 
\[
V_{qr}^{(n_{q},n_{r})}=\frac{1}{N}\sum_{i=1}^{n_{q}}\sum_{j=1}^{n_{r}}V_{qr}(x_{i}-y_{j})
\]
for $q,r=1,\dots,p$.

We then extend $\mathcal{H}_{\mathbf{N}}$ to $\mathcal{D}$ by linearity.
With the assumption/hypothesis on $V_{qr}$ in Theorem \ref{thm:Main_Theorem},
the Hamiltonian $\mathcal{H}_{\mathbf{N}}$ on $\mathcal{D}$ gives
rise to a self-adjoint operator \cite{ReedSimonII1975}.

Using the operator-valued distributions $a_{x}$ and $a_{x}^{*}$,
the Hamiltonian $\mathcal{H}_{\mathbf{N}}$ can be written as 
\[
\mathcal{H}_{\mathbf{N}}=\sum_{q=1}^{p}\mathcal{H}_{N_{q}}+\sum_{q\neq r}\mathcal{V}_{qr}
\]
where 
\[
\mathcal{H}_{q}=\int\mathrm{d}x\,\nabla_{x}a_{x}^{(q)*}\nabla_{x}a_{x}^{(q)}+\frac{1}{N_{q}}\int\int\mathrm{d}x\mathrm{d}z\,V_{qq}(x-z)a_{x}^{(q)*}a_{z}^{(q)*}a_{z}^{(q)}a_{x}^{(q)},
\]
and 
\[
\mathcal{V}_{qr}=\frac{1}{N}\int\int\mathrm{d}x\mathrm{d}y\,V_{qr}(x-y)a_{x}^{(q)*}a_{y}^{(r)*}a_{y}^{(r)}a_{x}^{(q)}.
\]

The Hamiltonian $\mathcal{H}_{\mathbf{N}}$ conserves the number of
particles in each factor of $\mathcal{F}^{\otimes p}$. In fact, it
is simple to verify that, for $q=1,\dots,p$, 
\[
[\mathcal{H}_{q},\mathcal{E}^{(q)}\mathcal{N}]=[\mathcal{H}_{q},\mathcal{E}^{(r)}\mathcal{N}]=[\mathcal{V}_{qr},\mathcal{E}^{(q)}\mathcal{N}]=[\mathcal{V}_{qr},\mathcal{E}^{(r)}\mathcal{N}]=0.
\]
Moreover we have
\[
[\mathcal{H}_{q},\mathcal{N}^{\otimes p}]=[\mathcal{V}_{qr},\mathcal{N}^{\otimes p}]=0.
\]
Furthermore, for fixed $\mathbf{N}=(N_{1},\dots,N_{p})$, the subspace
\[
\mathcal{S}_{\mathbf{N}}=\text{span}\{\psi_{q}^{(\mathbf{N})}:=\bigotimes_{q=1}^{p}\psi_{q}^{(N_{q})}\,|\,\psi_{q}^{(N_{q})}\in L_{s}^{2}(\mathbb{R}^{3N_{q}})\text{ for }q=1,\dots,p\}\subset\mathcal{F}^{\otimes p}
\]
is invariant by $\mathcal{H}_{\mathbf{N}}$, and the Hamiltonian $\mathcal{H}_{\mathbf{N}}$
restricted to $\mathcal{S}_{\mathbf{N}}$ is equal to $H_{\mathbf{N}}$.
Therefore, for initial data in $\mathcal{S}_{\mathbf{N}}$, the time
evolution generated by $\mathcal{H}_{\mathbf{N}}$ reduces to the
time evolution generated by $H_{\mathbf{N}}$.

\textbf{The reduced density operator.} For $\psi\in\mathcal{F}^{\otimes p}$,
we define the reduced density operator $\gamma_{\psi}^{\mathbf{I}}$
as the operator on $L^{2}(\mathbb{R}^{3})^{\otimes p}$ determined
by the kernel 
\begin{equation}
\gamma_{\psi}^{\mathbf{I}}(\mathbf{x};\mathbf{x}')=\frac{1}{\langle\psi,\mathcal{N}^{\otimes p}\psi\rangle}\langle\psi,\prod_{q=1}^{p}a_{x_{q}'}^{(q)*}a_{x_{q}}^{(q)}\psi\rangle.\label{eq:intKer}
\end{equation}
We observe that $\text{Tr}_{\mathbf{I}}\gamma_{\psi}^{\mathbf{I}}=1$.
If $\psi$ is in the subspace $\mathcal{S}_{\mathbf{N}}$ of fixed
number of particles, the above definition is reduced to the definition
of $\gamma_{\mathbf{N},t}^{\mathbf{I}}$ given in (\ref{eq:double_trace-KERNEL}).

For $f_{q}\in L^{2}(\mathbb{R}^{3})$, we define 
\[
\mathcal{W}(\mathbf{f})=\bigotimes_{q=1}^{p}W(f_{q}),
\]
which is an operator on $\mathcal{F}^{\otimes p}$. We set a vacuum
state $\omega=\Omega^{\otimes p}$ . Thus $\mathcal{W}(\mathbf{f})\,\omega$
is a tensor product of coherent states (or simply coherent state).

In the following lemma, we have some important properties of the operator
$\mathcal{W}(\mathbf{f})$ and the coherent state $\mathcal{C}(\mathbf{f})=\mathcal{W}(\mathbf{f})\,\omega$.
These properties follow easily from the corresponding well-known properties
of Weyl operators (see \cite{Rodnianski2009}, for example). The coherent
state $\mathcal{C}(\mathbf{f})$ can also be written in terms of the
Weyl operator as 
\begin{equation}
\mathcal{C}(\mathbf{f})=\mathcal{W}(\mathbf{f})\,\omega=\prod_{q=1}^{p}e^{-(\|f_{q}\|_{2}^{2})/2}\exp(a^{*}(f_{q}))\omega=e^{-(\sum_{q=1}^{p}\|f_{q}\|_{2}^{2})/2}\bigotimes_{q=1}^{p}\left(\sum_{n\geq0}\frac{1}{\sqrt{n!}}f_{q}^{\otimes n}\right).\label{Weyl_f_mixed}
\end{equation}
We collect the useful properties of the Weyl operator and the coherent
states in the following lemma.
\begin{lem}
\label{lem:weyl} Let $\omega=\Omega^{\otimes p}$ and $f_{q}\in L^{2}(\mathbb{R}^{3})$
for all $q=1,\dots,p$.
\begin{enumerate}
\item The operator $\mathcal{W}(\mathbf{f})$ is unitary and 
\[
\mathcal{W}(\mathbf{f})^{*}=\mathcal{W}(\mathbf{f})^{-1}=\mathcal{W}(-\mathbf{f}).
\]
\item We have 
\begin{align*}
\mathcal{W}(\mathbf{f})^{*}a_{x}^{(q)}\mathcal{W}(\mathbf{f}) & =\mathcal{E}^{(q)}\left(a_{x}+f_{q}(x)\right), & \qquad\mathcal{W}(\mathbf{f})^{*}a_{x}^{(q)*}\mathcal{W}(\mathbf{f}) & =\mathcal{E}^{(q)}\left(a_{x}^{*}+\overline{f(x)}\right).
\end{align*}
\item For $\bm{\alpha}=(\alpha_{1},\dots,\alpha_{p})\in\left\{ 0,1\right\} ^{p}$,
we have 
\[
\langle\mathcal{W}(\mathbf{f})\,\omega,\mathcal{N}^{\bm{\alpha}}\mathcal{W}(\mathbf{f})\,\omega\rangle=\prod_{q}\|f_{q}\|^{2\alpha_{q}}=:\|\mathbf{f}\|^{2\bm{\alpha}}.
\]
\end{enumerate}
\end{lem}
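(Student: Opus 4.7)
The plan is to reduce each assertion to the corresponding single-component Weyl identity for $W(f)$ on $\mathcal{F}$, which is a standard result (see e.g.\ \cite{Rodnianski2009}), and then exploit the tensor-product structure $\mathcal{W}(\mathbf{f})=\bigotimes_{q=1}^{p}W(f_{q})$ together with the action of the embedding operator $\mathcal{E}^{(q)}$. The arguments are essentially bookkeeping, since the entire content is contained in the single-component identities combined with the multiplicativity of tensor products.

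For part (1), I would cite the known single-component fact that $W(f_{q})$ is unitary with $W(f_{q})^{*}=W(-f_{q})$. Since a tensor product of unitary operators is unitary and $\bigl(\bigotimes_{q}A_{q}\bigr)^{*}=\bigotimes_{q}A_{q}^{*}$, this gives $\mathcal{W}(\mathbf{f})^{*}=\bigotimes_{q}W(-f_{q})=\mathcal{W}(-\mathbf{f})$ directly, together with $\mathcal{W}(\mathbf{f})^{*}\mathcal{W}(\mathbf{f})=\bigotimes_{q}I=\mathcal{I}$.

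For part (2), the key observation is that $a_{x}^{(q)}=\mathcal{E}^{(q)}a_{x}$ acts as the identity on every factor except the $q$-th. Conjugation by $\mathcal{W}(\mathbf{f})=\bigotimes_{r}W(f_{r})$ therefore picks up a nontrivial contribution only on the $q$-th factor, since on every other factor the unitary $W(f_{r})$ and its adjoint telescope. On the $q$-th factor the conjugation reduces to the standard Weyl shift $W(f_{q})^{*}a_{x}W(f_{q})=a_{x}+f_{q}(x)$, and re-wrapping with $\mathcal{E}^{(q)}$ gives the stated identity. The claim for $a_{x}^{(q)*}$ follows either by taking the formal adjoint or by applying the same argument with the known single-component shift $W(f_{q})^{*}a_{x}^{*}W(f_{q})=a_{x}^{*}+\overline{f_{q}(x)}$.

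For part (3), interpreting $\mathcal{N}^{\bm{\alpha}}=\bigotimes_{q=1}^{p}\mathcal{N}^{\alpha_{q}}$ with $\mathcal{N}^{0}=I$, and writing $\omega=\Omega^{\otimes p}$, the inner product factorizes across the $p$ tensor factors as $\prod_{q=1}^{p}\langle W(f_{q})\Omega,\mathcal{N}^{\alpha_{q}}W(f_{q})\Omega\rangle$. Each factor equals $\|f_{q}\|^{2\alpha_{q}}$: when $\alpha_{q}=0$ this is unity by unitarity of $W(f_{q})$, and when $\alpha_{q}=1$ it is the classical identity that the particle number expectation in the coherent state $W(f_{q})\Omega$ equals $\|f_{q}\|_{2}^{2}$, a standard consequence of part (2) via $W(f_{q})^{*}\,a_{x}^{*}a_{x}\,W(f_{q})=(a_{x}^{*}+\overline{f_{q}(x)})(a_{x}+f_{q}(x))$ together with $a_{x}\Omega=0$. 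Taking the product over $q$ yields the stated formula. There is no real mathematical obstacle here; the only care needed is to keep the embedding and tensor notation in correct order so that operators on distinct factors are properly recognized as commuting.
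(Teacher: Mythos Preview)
Your proposal is correct and matches the paper's own approach: the paper does not give a detailed proof but simply remarks that ``these properties follow easily from the corresponding well-known properties of Weyl operators (see \cite{Rodnianski2009}, for example),'' i.e.\ exactly the reduction to the single-component identities via the tensor-product structure that you outline.
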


\textbf{Schrödinger dynamics.} We introduce Hardamard product $\odot:\mathbb{R}^{d}\times\mathbb{R}^{d}\to\mathbb{R}^{d}$
such that for $\mathbf{u},\mathbf{v}\in\mathbb{R}^{d}$ with $\mathbf{u}=(u_{1},\dots,u_{d})$
and $\mathbf{v}=(v_{1},\dots,v_{d})$, the Hardamard product of $\mathbf{u}$
and $\mathbf{v}$ is given by
\[
\mathbf{u}\odot\mathbf{v}=(u_{1}v_{1},\dots,u_{d}v_{d}).
\]
We denote for $r\in\mathbb{R}$
\[
\mathbf{u}^{\odot r}=(u_{1}^{r},\dots,u_{d}^{r})
\]
and for $\mathbf{r}=(r_{1},\dots,r_{d})\in\mathbb{R}^{d}$
\[
\mathbf{u}^{\odot\mathbf{r}}=(u_{1}^{r_{1}},\dots,u_{d}^{r_{d}}).
\]
For $u_{q}\in H^{1}(\mathbb{R}^{3})$ with $\|u_{q}\|_{L^{2}}=1$
, we have
\[
\mathcal{C}_{\mathbf{N}}(\mathbf{u})=\mathcal{W}(\mathbf{N}^{\odot1/2}\odot\mathbf{u})\,\omega
\]
and denote by $t\mapsto\mathbf{u}_{t}=e^{-\mathrm{i}t\mathcal{H}_{\mathbf{N}}}\mathbf{u}$
the solution to the Schrödinger equation $\mathrm{i}\partial_{t}\mathbf{u}_{t}=\mathcal{H}_{\mathbf{N}}\mathbf{u}_{t}$
with initial condition $\mathbf{u}_{t=0}=\mathbf{u}$. We will study
the family of solutions $\{\mathbf{u}_{t}=(u_{1,t},\dots,u_{p,t})\}$
as $N_{q}$ go to infinity.

Finally, we collect lemmas on the Weyl operator acting on a state
with fixed number of particles, which will be used in Section \ref{sec:Lemmas}.
Define 
\begin{equation}
d_{N}:=\frac{\sqrt{N!}}{N^{N/2}e^{-N/2}}.\label{eq:d_N}
\end{equation}
We note that $C^{-1}N^{1/4}\leq d_{N}\leq CN^{1/4}$ for some constant
$C>0$ independent of $N$, which can be easily checked by using Stirling's
formula. We denote
\[
\boldsymbol{d}_{\boldsymbol{N}}:=\prod_{q=1}^{p}\frac{\sqrt{N_{q}!}}{N_{q}^{N_{q}/2}e^{-N_{q}/2}}
\]
and 
\[
\boldsymbol{d}_{Q}:=\prod_{q\in Q}\frac{\sqrt{N_{q}!}}{N_{q}^{N_{q}/2}e^{-N_{q}/2}}.
\]

\section{Proof of the Main Result\label{sec:Proof-main}}

We now turn to the proof of our main theorem (Theorem \ref{thm:Main_Theorem}).
We first introduce unitary operators and their generators to give
the proof.
\begin{rem}
\label{rem:HSnorm-opnorm}Similar to the argument given in \cite[Remark 1.4]{Rodnianski2009},
we will prove the theorem not for the trace norm but for the Hilbert-Schmidt
norm, i.e.,
\[
\|\gamma_{\mathbf{N},t}^{\mathbf{I}}-|\mathbf{u}_{t}\rangle\langle\mathbf{u}_{t}|\,\|_{\mathrm{HS}}\le Ce^{Kt}\cdot\frac{p^{Cp}}{N}.
\]
Note that $\operatorname{Tr}\gamma_{\mathbf{N},t}^{\mathbf{I}}=1$,
all the eigenvalues of $\gamma_{\mathbf{N},t}^{\mathbf{I}}$ are non-negative,
and $|\mathbf{u}_{t}\rangle\langle\mathbf{u}_{t}|$ is a rank $p$
projection. Then the operator $\gamma_{\mathbf{N},t}^{\mathbf{I}}-|\mathbf{u}_{t}\rangle\langle\mathbf{u}_{t}|$
can have at most $p$ negative eigenvalues so that the set of negative
eigenvalues $\Lambda_{\mathrm{neg}}$ have at most $p$ elements.
Using
\[
0=\operatorname{Tr}\left(\gamma_{\mathbf{N},t}^{\mathbf{I}}-|\mathbf{u}_{t}\rangle\langle\mathbf{u}_{t}|\right)=\sum_{\lambda_{q}^{-}\not\in\Lambda_{\mathrm{neg}}}|\lambda_{q}^{+}|-\sum_{\lambda_{q}^{-}\in\Lambda_{\mathrm{neg}}}|\lambda_{q}^{-}|,
\]
we have
\[
\sum_{\lambda_{q}^{-}\not\in\Lambda_{\mathrm{neg}}}|\lambda_{q}^{+}|=\sum_{\lambda_{q}^{-}\in\Lambda_{\mathrm{neg}}}|\lambda_{q}^{-}|.
\]
Thus one may see that
\begin{align*}
\operatorname{Tr}\left|\gamma_{\mathbf{N},t}^{\mathbf{I}}-|\mathbf{u}_{t}\rangle\langle\mathbf{u}_{t}|\right| & =\sum_{\lambda_{q}^{+}\not\in\Lambda_{\mathrm{neg}}}|\lambda_{q}^{+}|+\sum_{\lambda_{q}^{-}\in\Lambda_{\mathrm{neg}}}|\lambda_{q}^{-}|=2\sum_{\lambda_{q}^{-}\in\Lambda_{\mathrm{neg}}}|\lambda_{q}^{-}|.
\end{align*}
Using Cauchy-Schwarz inequality,
\begin{align*}
\operatorname{Tr}\left|\gamma_{\mathbf{N},t}^{\mathbf{I}}-|\mathbf{u}_{t}\rangle\langle\mathbf{u}_{t}|\right| & =\sum_{\lambda_{q}^{+}\in\Lambda_{\mathrm{pos}}}|\lambda_{q}^{+}|+\sum_{\lambda_{q}^{-}\in\Lambda_{\mathrm{neg}}}|\lambda_{q}^{-}|\leq|\Lambda_{\mathrm{pos}}|^{1/2}(\sum_{\lambda_{q}^{+}\in\Lambda_{\mathrm{pos}}}|\lambda_{q}^{-}|^{2})^{1/2}+|\Lambda_{\mathrm{neg}}|^{1/2}(\sum_{\lambda_{q}^{-}\in\Lambda_{\mathrm{neg}}}|\lambda_{q}^{-}|^{2})^{1/2}\\
 & \leq(|\Lambda_{\mathrm{pos}}|^{1/2}+|\Lambda_{\mathrm{neg}}|^{1/2})\left\Vert \gamma_{\mathbf{N},t}^{\mathbf{I}}-|\mathbf{u}_{t}\rangle\langle\mathbf{u}_{t}|\right\Vert _{\mathrm{HS}}\leq\sqrt{2p}\left\Vert \gamma_{\mathbf{N},t}^{\mathbf{I}}-|\mathbf{u}_{t}\rangle\langle\mathbf{u}_{t}|\right\Vert _{\mathrm{HS}}.
\end{align*}
\end{rem}

\subsection{Unitary operators and their generators}

\textbf{Fluctuation dynamics.} We define $\alpha(t)=\int_{0}^{t}\mathrm{d}s\,\varphi_{\mathbf{N}}(s)$,
where $\varphi_{\mathbf{N}}$ is a real-valued function that we will
choose later. Let $\mathbf{u}_{t}=(u_{1,t},\dots,u_{p,t})$ be the
solutions to the equations of Hartree type (\ref{sysHartree}). For
$t,s\in\mathbb{R}$, we set 
\[
\mathcal{U}(t,s)=e^{\mathrm{i}(\alpha(t)-\alpha(s))}\mathcal{W}(\mathbf{N}^{\odot1/2}\odot\mathbf{u}_{t})^{*}e^{-\mathrm{i}(t-s)\mathcal{H}_{\mathbf{N}}}\mathcal{W}(\mathbf{N}^{\odot1/2}\mathbf{u}_{s}).
\]
We refer to the operator $\mathcal{U}(t,s)$ as the fluctuation dynamics.
We abbreviate 
\[
\mathcal{W}_{t}=\mathcal{W}(\mathbf{N}^{\odot1/2}\mathbf{u}_{t})\qquad\text{and}\qquad\mathcal{U}_{t,s}=\mathcal{U}(t,s).
\]
Thus, we may write $\mathcal{U}_{t,s}=e^{\mathrm{i}(\alpha(t)-\alpha(s))}\mathcal{W}_{t}e^{-\mathrm{i}(t-s)\mathcal{H}_{\mathbf{N}}}\mathcal{W}_{s}$.
We also define 
\[
\omega_{t}=e^{-\mathrm{i}\alpha(t)}\mathcal{U}(t,0)\omega.
\]
Using the above definitions, it is simple to verify that 
\[
\mathcal{C}_{t}=\mathcal{W}_{t}\,\omega_{t}.
\]

For each $s\in\mathbb{R}$, the fluctuation dynamics satisfies the
equation 
\[
\mathrm{i}\partial_{t}\mathcal{U}_{t,s}=\mathcal{L}(t)\mathcal{U}_{t,s}\qquad\text{with}\qquad\mathcal{U}_{s,s}=I
\]
where
\[
\mathcal{W}_{t}^{*}\mathcal{H}_{\mathbf{N}}\mathcal{W}_{t}=:\mathcal{L}_{0}(t)+\mathcal{L}_{1}(t)+\mathcal{L}_{2}(t)+\mathcal{L}_{3}(t)+\mathcal{L}_{4}(t)
\]
with
\[
\begin{split}\mathcal{L}_{0}(t) & :=\sum_{q,r=1}^{p}\sqrt{N_{q}}\left(\frac{N_{r}}{N}-c_{qr}\right)\phi^{(q)}\big((V_{qr}*|u_{rt}|^{2})u_{qt},\\
\mathcal{L}_{1}(t) & :=\sum_{q,r=1}^{p}\frac{\sqrt{N_{q}N_{r}}}{N}\int\mathrm{d}x_{q}\mathrm{d}x_{r}\,V_{qr}(x_{q}-x_{r})(\overline{u_{r,t}}(x_{r})u_{q,t}(x_{q})a_{x_{q}}^{(q)*}a_{x_{r}}^{(r)}+\overline{u_{q,t}}(x_{q})u_{r,t}(x_{r})a_{x_{r}}^{(r)*}a_{x_{q}}^{(q)}),\\
\mathcal{L}_{2}(t) & :=\sum_{q=1}^{p}\Big(\int\mathrm{d}x\,\nabla_{x}a_{x}^{(q)*}\nabla_{x}a_{x}^{(q)}+\int\mathrm{d}x\,(V_{qq}*|u_{t}|^{2})(x)a_{x}^{(q)*}a_{x}^{(q)}+\int\mathrm{d}x\mathrm{d}z\,V_{qq}(x-z)\overline{u_{t}}(z)u_{t}(x)a_{x}^{(q)*}a_{z}^{(q)}\\
 & \quad+\frac{1}{2}\int\mathrm{d}x\mathrm{d}z\,V_{qq}(x-z)(u_{q,t}(z)u_{q,t}(x)a_{x}^{(q)*}a_{z}^{(q)*}+\overline{u_{q,t}}(z)\overline{u_{q,t}}(x)a_{x}^{(q)}a_{z}^{(q)})\Big)\\
 & \quad+\sum_{q,r=1}^{p}\frac{N_{q}}{N}\int\mathrm{d}x\,(V_{qr}*|u_{r,t}|^{2})(x)a_{x}^{(q)*}a_{x}^{(q)}\\
 & \quad=:\sum_{q=1}^{p}\mathcal{L}_{2}^{q}(t)+\mathcal{L}_{2}^{\text{cross}}(t),\\
\mathcal{L}_{3}(t) & =\sum_{q=1}^{p}\Big(\frac{1}{\sqrt{N_{q}}}\int\mathrm{d}x\mathrm{d}z\,V_{qq}(x-z)a_{x}^{(q)*}(u_{q,t}(z)a_{z}^{(q)*}+\overline{u_{q,t}}(z)a_{z}^{(q)})a_{x}^{(q)}\Big)\\
 & \quad+\sum_{q,r=1}^{p}\Big(\frac{\sqrt{N_{q}}}{N}\int\mathrm{d}x\mathrm{d}y\,V_{qr}(x-y)a_{y}^{(r)*}(u_{q,t}(x)a_{x}^{(q)*}+\overline{u_{q,t}}(x)a_{x}^{(q)})a_{y}^{(r)}\Big)\\
 & \quad=:\sum_{q=1}^{p}\mathcal{L}_{3}^{q}(t)+\mathcal{L}_{3}^{\text{cross}}(t),
\end{split}
\]
and
\[
\begin{split}\mathcal{L}_{4}(t) & =\sum_{q=1}^{p}\frac{1}{N_{q}}\int\int\mathrm{d}x\mathrm{d}z\,V_{qq}(x-z)a_{x}^{(q)*}a_{z}^{(q)*}a_{z}^{(q)}a_{x}^{(q)}\\
 & =\sum_{q,r=1}^{p}\frac{1}{N}\int\int\mathrm{d}x\mathrm{d}y\,V_{qr}(x-y)a_{x}^{(q)*}a_{y}^{(r)*}a_{y}^{(r)}a_{x}^{(q)}\\
 & \quad=:\sum_{q=1}^{p}\mathcal{L}_{4}^{q}(t)+\mathcal{L}_{4}^{\text{cross}}(t).
\end{split}
\]

Then 
\begin{equation}
\mathrm{i}\partial_{t}\mathcal{U}\left(t;s\right)=\left(\mathcal{L}_{0}(t)+\mathcal{L}_{2}(t)+\mathcal{L}_{3}(t)+\mathcal{L}_{4}(t)\right)\mathcal{U}\left(t;s\right)\quad\text{and}\quad\mathcal{U}\left(s;s\right)=I\label{eq:def_mathcalU}
\end{equation}
and

\begin{align*}
 & \mathcal{W}^{*}(\mathbf{N}^{\odot1/2}\odot\mathbf{u}_{s})e^{\mathrm{i}\mathcal{H}_{\mathbf{N}}\left(t-s\right)}\left(\prod_{q=1}^{p}\left(a_{x}^{(q)}-\sqrt{N_{q}}u_{q,t}\left(x\right)\right)\right)e^{-\mathrm{i}\mathcal{H}_{\mathbf{N}}\left(t-s\right)}\mathcal{W}(\mathbf{N}^{\odot1/2}\odot\mathbf{u}_{s})\\
 & =\mathcal{U}^{*}\left(t;s\right)\boldsymbol{a}_{x}\mathcal{U}\left(t;s\right).
\end{align*}

\begin{rem}
\label{rem:|NrN-Cqr|}Note that $\mathcal{L}_{0}(t)=0$ because we
define $c_{qr}=N_{r}/N$. Here, however, we put generalized definition
of it to leave a little room for changing condition on $c_{qr}$.
For example, one can assume such that
\[
\left|\frac{N_{r}}{N}-c_{qr}\right|\leq\frac{C}{N}
\]
for some $C$ for all $q,r=1,\dots,p$.

\textcolor{black}{Let $u_{q,t}$ be the solution of the equations
of Hartree type (\ref{sysHartree}). If we perturb $c_{qr}$ by $O(N^{-1})$
and denote it $\widetilde{c}_{qr}$, and let $\widetilde{u}_{q,t}$
be the solution of the perturbed equations of Hartree type.}

\begin{align*}
\frac{\mathrm{d}}{\mathrm{d}t}\|u_{q,t}-\widetilde{u}_{q,t}\|_{2}^{2} & =2\operatorname{Im}\sum_{r=1}^{p}\langle u_{q,s},[c_{qr}(V_{qr}*|u_{q,s}|^{2})-\widetilde{c}_{qr}(V_{qr}*|\widetilde{u}_{q,s}|^{2})]\widetilde{u}_{q,s}\rangle\\
 & =2\operatorname{Im}\sum_{r=1}^{p}\langle u_{q,s},[(c_{qr}-\widetilde{c}_{qr})(V_{qr}*|u_{q,s}|^{2})]\widetilde{u}_{q,s}\rangle\\
 & \qquad+2\operatorname{Im}\sum_{r=1}^{p}\langle u_{q,s},[\widetilde{c}_{qr}V_{qr}*(|u_{q,s}|^{2}-|\widetilde{u}_{q,s}|^{2}](\widetilde{u}_{q,s}-u_{q,s})\rangle
\end{align*}
where, the last line we used that
\[
\operatorname{Im}\sum_{r=1}^{p}\langle u_{q,s},[\widetilde{c}_{qr}V_{qr}*(|u_{q,s}|^{2}-|\widetilde{u}_{q,s}|^{2}]u_{q,s}\rangle=0.
\]
Then, \textcolor{black}{using Cauchy-Schwarz inequality, our assumption
about $V_{qr}$ in (\ref{eq:assumption_V}), and the fact that both
$u_{q,t}$ and $\widetilde{u}_{q,t}$ is bounded $H^{1}$-norm independent
to time $t$,}
\begin{align*}
\left|\frac{\mathrm{d}}{\mathrm{d}t}\|u_{q,t}-\widetilde{u}_{q,t}\|_{2}^{2}\right| & \leq\frac{Cp}{N^{2/p}}+Cp\|u_{q,t}-\widetilde{u}_{q,t}\|_{2}^{2}
\end{align*}
\textcolor{black}{This implies from Grönwall inequality that
\[
\|u_{q,t}-\widetilde{u}_{q,t}\|_{2}\leq\frac{Cp}{N^{1/p}}e^{Kpt}.
\]
Thus the $L^{2}$-norm difference is of $O(N^{-1})$}. Hence we have
\[
\operatorname{Tr}\,\Big|\,|\mathbf{u}_{t}\rangle\langle\mathbf{u}_{t}|-|\widetilde{\mathbf{u}}_{t}\rangle\langle\widetilde{\mathbf{u}}_{t}|\,\Big|\leq2\sqrt{p}\left\Vert \mathbf{u}_{t}-\widetilde{\mathbf{u}}_{t}\right\Vert _{\mathcal{H}}\leq2\sqrt{p}\prod_{q=1}^{p}\left\Vert u_{q,t}-\widetilde{u}_{q,t}\right\Vert _{2}\leq\frac{2\sqrt{p}}{N}.
\]
\end{rem}

Let $\widetilde{\mathcal{L}}=\mathcal{L}_{0}(t)+\mathcal{L}_{2}(t)+\mathcal{L}_{4}(t)$
and define the unitary operator $\widetilde{\mathcal{U}}\left(t;s\right)$
by 
\begin{equation}
\mathrm{i}\partial_{t}\widetilde{\mathcal{U}}\left(t;s\right)=\widetilde{\mathcal{L}}\left(t\right)\widetilde{\mathcal{U}}\left(t;s\right)\quad\text{and }\quad\widetilde{\mathcal{U}}\left(s;s\right)=1.\label{eq:def_mathcaltildeU}
\end{equation}
Since $\widetilde{\mathcal{L}}$ does not change the parity of the
number of particles, 
\begin{equation}
\left\langle \Omega,\widetilde{\mathcal{U}}^{*}\left(t;0\right)\boldsymbol{a}_{y}\widetilde{\mathcal{U}}\left(t;0\right)\Omega\right\rangle =\left\langle \Omega,\widetilde{\mathcal{U}}^{*}\left(t;0\right)\boldsymbol{a}_{x}^{*}\widetilde{\mathcal{U}}\left(t;0\right)\Omega\right\rangle =0\label{eq:Parity_Consevation}
\end{equation}
We refer to \cite[Lemma 8.2]{Lee2013} for a rigorous proof of (\ref{eq:Parity_Consevation}).

\subsection{Proof of Theorem \ref{thm:Main_Theorem}}

To have compact notation, we denote the products over indices from
$1$ to $p$ such that
\[
\frac{1}{\boldsymbol{N}}=\prod_{q=1}^{p}\frac{1}{N_{q}},
\]
\[
\frac{\boldsymbol{d}_{\boldsymbol{N}}}{\boldsymbol{N}}=\prod_{q=1}^{p}\frac{d_{N_{q}}}{N_{q}},
\]
and

\[
\frac{\left(\boldsymbol{a}^{*}(\mathbf{u})\right)^{\boldsymbol{N}}}{\sqrt{\boldsymbol{N}!}}=\prod_{q=1}^{p}\frac{a^{(q)*}(u_{q})^{N_{q}}}{\sqrt{N_{q}!}}.
\]
For any nonempty subset $Q$ of positive integers, we let the products
over all indices in the set $Q$ using the following notation
\[
\sqrt{\boldsymbol{N}_{Q}}=\prod_{q\in Q}\sqrt{N_{q}},
\]

\[
d\boldsymbol{\Gamma}^{(Q)}(J)=\prod_{q\in Q}\left(d\Gamma^{(q)}(J)\right),
\]
\[
\boldsymbol{\phi}^{(Q)}(J_{Q}u_{Q,t})=\prod_{q\in Q}\left(\phi^{(q)}(J_{q}u_{q,t})\right),
\]
and
\[
\left\langle \boldsymbol{u}_{Q,t}\right|\boldsymbol{J}_{Q}\left|\boldsymbol{u}_{Q,t}\right\rangle =\prod_{q\in Q}\left\langle u_{q,t}\right|J_{q}\left|u_{q,t}\right\rangle .
\]
If $Q=\emptyset$, we define all the products above as $1$.
\begin{prop}
\label{prop:Et1} Suppose that the assumptions in Theorem \ref{thm:Main_Theorem}
hold. For a Hilbert--Schmidt operator $J$ on $L^{2}(\mathbb{R}^{3})$,
let
\begin{align*}
E_{t}^{Q,R}(J) & :=\left\langle \boldsymbol{u}_{Q^{c}\cap R^{c},t}\right|\boldsymbol{J}_{Q^{c}\cap R^{c}}\left|\boldsymbol{u}_{Q^{c}\cap R^{c},t}\right\rangle \sqrt{\boldsymbol{N}_{Q^{c}}}\sqrt{\boldsymbol{N}_{R^{c}}}\frac{1}{\boldsymbol{N}}\\
 & \qquad\qquad\qquad\times\left\langle \frac{\left(\boldsymbol{a}^{*}(\mathbf{u})\right)^{\boldsymbol{N}}}{\sqrt{\boldsymbol{N}!}}\omega,\mathcal{W}(\mathbf{N}^{\odot1/2}\odot\mathbf{u}_{s})\prod_{\substack{q\in Q\\
r\in R
}
}\left(d\Gamma^{(q)}(J)\cdot\phi^{(r)}(J_{r}u_{r,t})\right)\mathcal{U}(t)\omega\right\rangle _{\mathcal{F}^{\otimes p}},
\end{align*}
Then, there exist constants $C$ and $K$, depending only on $\sup_{|s|\leq t}\|u_{q,s}\|_{H^{1}}$,
such that 
\[
\left|E_{t}^{Q,R}(J)\right|\leq Cp^{Cp}(\min_{q}c_{q})^{-1/2}\|J\|_{\mathrm{HS}}e^{Kt}\frac{1}{N}
\]
for all subsets $Q$ and $R$ of $\{1,\dots,p\}$ such that $|Q|+|R|\geq1$.
\end{prop}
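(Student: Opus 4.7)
The plan is to exploit the comparison dynamics $\widetilde{\mathcal{U}}(t,s)$ defined by \eqref{eq:def_mathcaltildeU}, whose generator $\widetilde{\mathcal{L}}=\mathcal{L}_{0}+\mathcal{L}_{2}+\mathcal{L}_{4}$ is even in creation/annihilation operators and therefore preserves the parity of the particle number in each component, so that \eqref{eq:Parity_Consevation} holds. Differentiating $\mathcal{U}(t,s)\widetilde{\mathcal{U}}(s)\omega$ in $s$ and integrating on $[0,t]$ yields the Duhamel identity
\[
\mathcal{U}(t)\omega=\widetilde{\mathcal{U}}(t)\omega-i\int_{0}^{t}\mathcal{U}(t,s)\,\mathcal{L}_{3}(s)\,\widetilde{\mathcal{U}}(s)\omega\,ds,
\]
since $\mathcal{L}(s)-\widetilde{\mathcal{L}}(s)=\mathcal{L}_{3}(s)$. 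Substituted into the definition of $E_{t}^{Q,R}(J)$, this splits the quantity into a \emph{main term} with $\widetilde{\mathcal{U}}(t)\omega$ in place of $\mathcal{U}(t)\omega$ and a \emph{remainder} given by a time integral over $\mathcal{L}_{3}$.

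For the main term the essential ingredient is the identity
\[
\frac{(\boldsymbol{a}^{*}(\mathbf{u}))^{\boldsymbol{N}}}{\sqrt{\boldsymbol{N}!}}\omega=\boldsymbol{d}_{\boldsymbol{N}}\,\mathcal{P}_{\mathbf{N}}\mathcal{W}_{0}\omega,
\]
which follows from \eqref{Weyl_f_mixed} with $\mathbf{f}=\mathbf{N}^{\odot 1/2}\odot\mathbf{u}$ and $\|u_{q}\|_{2}=1$. Rewriting the left vector this way and conjugating each factor $d\Gamma^{(q)}(J)$ and $\phi^{(r)}(J_{r}u_{r,t})$ by $\mathcal{W}_{t}$ via Lemma~\ref{lem:weyl}(2) expands the operator product into a finite polynomial in $a^{(q)*},a^{(q)}$ whose coefficients carry increasing powers of $\sqrt{N_{q}}$, with leading scalar contribution $\prod_{q\in Q}\sqrt{N_{q}}\langle u_{q,t},Ju_{q,t}\rangle\prod_{r\in R}\sqrt{N_{r}}\langle u_{r,t},J_{r}u_{r,t}\rangle$. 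The explicit prefactor $\sqrt{\boldsymbol{N}_{Q^{c}}}\sqrt{\boldsymbol{N}_{R^{c}}}\langle\boldsymbol{u}_{Q^{c}\cap R^{c},t}|\boldsymbol{J}_{Q^{c}\cap R^{c}}|\boldsymbol{u}_{Q^{c}\cap R^{c},t}\rangle/\boldsymbol{N}$ built into the definition of $E_{t}^{Q,R}(J)$ is calibrated exactly to this leading scalar, so that after its contribution is accounted for only sub-leading polynomial terms remain. Each sub-leading term contains at least one residual creation or annihilation operator applied to $\widetilde{\mathcal{U}}(t)\omega$ and is controlled by $\|J\|_{\mathrm{HS}}$ times moments of $\mathcal{N}_{\mathrm{total}}+p\mathcal{I}$ against $\widetilde{\mathcal{U}}(t)\omega$ via Lemmas~\ref{lem:bound-a-a*-Phi-dGamma} and~\ref{lem:dGammaQphiQ}; the parity property \eqref{eq:Parity_Consevation} of $\widetilde{\mathcal{U}}$ is invoked to eliminate certain odd-degree cross terms that would otherwise be of unacceptable order.

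For the remainder I apply Cauchy--Schwarz to separate $\mathcal{L}_{3}(s)\widetilde{\mathcal{U}}(s)\omega$ from the action of $\mathcal{U}^{*}(t,s)$ on the remaining factors. Each term in $\mathcal{L}_{3}(s)$ carries an explicit prefactor $1/\sqrt{N_{q}}$ (intra-component) or $\sqrt{N_{q}}/N$ (inter-component); using $1/\sqrt{N_{q}}=1/\sqrt{c_{q}N}\le(\min_{q}c_{q})^{-1/2}/\sqrt{N}$ together with the prefactor $\sqrt{\boldsymbol{N}_{Q^{c}}}\sqrt{\boldsymbol{N}_{R^{c}}}/\boldsymbol{N}$ of $E_{t}^{Q,R}$, the $\sqrt{N}$-gain from the leading Weyl-conjugation coefficients, and the $s$-integration on $[0,t]$, produces the desired $e^{Kt}/N$ rate. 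The operator strings are bounded by $\|J\|_{\mathrm{HS}}$ via Lemmas~\ref{lem:bound-a-a*-Phi-dGamma} and~\ref{lem:dGammaQphiQ}, provided the a~priori moment estimates
\[
\|(\mathcal{N}_{\mathrm{total}}+p\mathcal{I})^{k}\mathcal{U}(s)\omega\|+\|(\mathcal{N}_{\mathrm{total}}+p\mathcal{I})^{k}\widetilde{\mathcal{U}}(s)\omega\|\le C_{k,p}e^{K_{k,p}|s|}
\]
hold; these follow by Gr\"onwall applied to $\tfrac{d}{ds}\langle\mathcal{U}(s)\omega,(\mathcal{N}_{\mathrm{total}}+p\mathcal{I})^{2k}\mathcal{U}(s)\omega\rangle$ using the commutator $[\mathcal{L}(s),(\mathcal{N}_{\mathrm{total}}+p\mathcal{I})^{2k}]$, the hypothesis $V_{qr}^{2}\le K(1-\Delta)$, and the uniform boundedness of $\|u_{q,s}\|_{H^{1}}$.

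The main obstacle is tracking the $p$-dependence throughout. Each commutator with the cross-component terms $\mathcal{L}_{2}^{\mathrm{cross}},\mathcal{L}_{3}^{\mathrm{cross}},\mathcal{L}_{4}^{\mathrm{cross}}$ sums over pairs $(q,r)\in\{1,\ldots,p\}^{2}$, contributing a factor of order $p^{2}$; iterating $k$ times in the Gr\"onwall bound produces $p^{2k}$, while the Weyl expansion of up to $|Q|+|R|\le 2p$ field operators generates combinatorial multiplicities of order $(2p)!$. A careful bookkeeping absorbs all of these into a single $p^{Cp}$ factor, which combined with the $(\min_{q}c_{q})^{-1/2}$ extracted from the worst component yields the claimed bound $C\,p^{Cp}(\min_{q}c_{q})^{-1/2}\|J\|_{\mathrm{HS}}e^{Kt}/N$.
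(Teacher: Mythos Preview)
Your proposal has the right overall architecture (split via the comparison dynamics $\widetilde{\mathcal U}$, use parity, bound the remainder through $\mathcal L_3$), but the analysis of the main term is confused and misses the decisive mechanism in the hardest case.

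First, the ``Weyl conjugation'' step you describe does not occur in this quantity. In $E_t^{Q,R}(J)$ the operator block $d\boldsymbol\Gamma^{(Q\cap R)}\cdot\boldsymbol\phi^{(Q\Delta R)}$ is sandwiched between $\mathcal U^*(t)$ and $\mathcal U(t)$ (cf.\ the paper's Section~\ref{sec:Pf-of-Props}); there is no $\mathcal W_t$ adjacent to these operators to conjugate through, and no ``leading scalar'' of the form $\prod\sqrt{N_q}\langle u_{q,t},Ju_{q,t}\rangle$ is produced or cancelled by the prefactor $\sqrt{\boldsymbol N_{Q^c}}\sqrt{\boldsymbol N_{R^c}}/\boldsymbol N$. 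That prefactor is already the residual scaling left over after the expansion performed in the proof of Theorem~\ref{thm:Main_Theorem}; nothing further cancels.

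Second, and more importantly, your outline does not close the case $|Q|+|R|=1$. In that case the prefactor is only $1/\sqrt{N_q}$, so a naive Cauchy--Schwarz on the main term $\langle\psi_{\mathbf N},\mathcal W_0\,\widetilde{\mathcal U}^*(t)\phi^{(q)}(J_qu_{q,t})\widetilde{\mathcal U}(t)\omega\rangle$ gives only $O(N^{-1/2})$, not $O(N^{-1})$. The paper extracts the missing $N^{-1/2}$ not by ``eliminating odd-degree cross terms'' but by the opposite observation: parity of $\widetilde{\mathcal U}$ forces the $q$-th component of $\widetilde{\mathcal U}^*(t)\phi^{(q)}\widetilde{\mathcal U}(t)\omega$ to live in \emph{odd} particle-number sectors, and then Lemma~\ref{lem:coherent_even_odd} shows that the odd-sector projections of $\mathcal W_0^*\psi_{\mathbf N}$ are smaller by an extra factor $(k+1)^{3/2}/\sqrt{N_q}$ than the even ones. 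Summing over $k$ with a negative power of $\mathcal N_{\mathrm{total}}+p\mathcal I$ as weight yields the required additional $N_q^{-1/2}$. Without this odd/even projection estimate (or an equivalent), your argument cannot reach the rate $1/N$ when $|Q|+|R|=1$.

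For $|Q|+|R|\ge 2$ and for the $\mathcal L_3$-remainder your sketch is essentially what the paper does (via Lemmas~\ref{lem:dGammaQphiQ}, \ref{lem:NjU}, \ref{lem:tildeNj}, \ref{lem:N_1_L3}, \ref{lem:NjUphiUtildeUphitildeU}), though note that the comparison must be made at the level of $\mathcal U^*\Phi\,\mathcal U-\widetilde{\mathcal U}^*\Phi\,\widetilde{\mathcal U}$ (both sides), not just by replacing $\mathcal U(t)\omega$ on the right.
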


A proof of Propositions \ref{prop:Et1} will be given later in Section
\ref{sec:Pf-of-Props}. With Propositions \ref{prop:Et1}, we now
prove Theorem \ref{thm:Main_Theorem}.
\begin{proof}[Proof of Theorem \ref{thm:Main_Theorem}]
Recall that 
\[
\gamma_{\mathbf{N},t}^{\mathbf{I}}=\frac{1}{\boldsymbol{N}}\left\langle \frac{\left(\boldsymbol{a}^{*}(\mathbf{u})\right)^{\boldsymbol{N}}}{\sqrt{\boldsymbol{N}!}}\omega,e^{i\mathcal{H}_{\mathbf{N}}t}\mathbf{a}_{x}^{*}\mathbf{a}_{x'}e^{-i\mathcal{H}_{\mathbf{N}}t}\frac{\left(\boldsymbol{a}^{*}(\mathbf{u})\right)^{\boldsymbol{N}}}{\sqrt{\boldsymbol{N}!}}\omega\right\rangle _{\mathcal{F}^{\otimes p}}.
\]
From the definition of the creation operator in (\ref{eq:creation}),
we can easily find that 
\begin{equation}
\bigotimes_{q=1}^{p}\{0,0,\dots,0,u_{q}^{\otimes N_{q}},0,\dots\}=\prod_{q=1}^{p}\frac{\left(a^{(q)*}(u_{q})\right)^{N_{q}}}{\sqrt{N_{q}!}}\omega=:\frac{\left(\boldsymbol{a}^{*}(\mathbf{u})\right)^{\boldsymbol{N}}}{\sqrt{\boldsymbol{N}!}}\omega,\label{eq:coherent_vec}
\end{equation}
where the $u_{q}^{\otimes N_{q}}$ on the left-hand side is in the
$N_{q}$-th sector of the Fock space $\mathcal{F}$ of the $q$-th
component particle. Recall that $P_{N}$ is the projection onto the
$N$-particle sector of the Fock space $\mathcal{F}$. From the property
of Weyl operator, we find that 
\begin{align*}
\frac{\left(\boldsymbol{a}^{*}(\mathbf{u})\right)^{\boldsymbol{N}}}{\sqrt{\boldsymbol{N}!}}\omega & =\left(\bigotimes_{q=1}^{p}d_{N_{q}}\mathcal{P}_{N_{q}}\right)\mathcal{W}(\mathbf{N}^{\odot1/2}\odot\mathbf{u}_{s})\,\omega.
\end{align*}
Since $\mathcal{H}_{\mathbf{N}}$ does not change the number of particles,
we also have that 
\begin{align*}
 & \gamma_{\mathbf{N},t}^{\mathbf{I}}(\mathbf{x};\mathbf{x}')\\
 & =\frac{1}{\boldsymbol{N}}\left\langle \frac{\left(\boldsymbol{a}^{*}(\mathbf{u})\right)^{\boldsymbol{N}}}{\sqrt{\boldsymbol{N}!}}\omega,e^{i\mathcal{H}_{\mathbf{N}}t}\boldsymbol{a}_{x}^{*}\boldsymbol{a}_{x'}e^{-i\mathcal{H}_{\mathbf{N}}t}\frac{\left(\boldsymbol{a}^{*}(\mathbf{u})\right)^{\boldsymbol{N}}}{\sqrt{\boldsymbol{N}!}}\omega\right\rangle _{\mathcal{F}^{\otimes p}}\\
 & =\frac{1}{\boldsymbol{N}}\left\langle \frac{\left(\boldsymbol{a}^{*}(\mathbf{u})\right)^{\boldsymbol{N}}}{\sqrt{\boldsymbol{N}!}}\omega,e^{i\mathcal{H}_{\mathbf{N}}t}\boldsymbol{a}_{x}^{*}\boldsymbol{a}_{x'}e^{-i\mathcal{H}_{\mathbf{N}}t}\left(\bigotimes_{q=1}^{p}d_{N_{q}}\mathcal{P}_{N_{q}}\right)\mathcal{W}(\mathbf{N}^{\odot1/2}\odot\mathbf{u}_{s})\,\omega\right\rangle _{\mathcal{F}^{\otimes p}}\\
 & =\frac{\boldsymbol{d}_{\boldsymbol{N}}}{\boldsymbol{N}}\left\langle \frac{\left(\boldsymbol{a}^{*}(\mathbf{u})\right)^{\boldsymbol{N}}}{\sqrt{\boldsymbol{N}!}}\omega,\mathcal{P}_{\mathbf{N}}e^{i\mathcal{H}_{\mathbf{N}}t}\boldsymbol{a}_{x}^{*}\boldsymbol{a}_{x'}e^{-i\mathcal{H}_{\mathbf{N}}t}\mathcal{W}(\mathbf{N}^{\odot1/2}\odot\mathbf{u}_{s})\,\omega\right\rangle _{\mathcal{F}^{\otimes p}}\\
 & =\frac{\boldsymbol{d}_{\boldsymbol{N}}}{\boldsymbol{N}}\left\langle \frac{\left(\boldsymbol{a}^{*}(\mathbf{u})\right)^{\boldsymbol{N}}}{\sqrt{\boldsymbol{N}!}}\omega,e^{i\mathcal{H}_{\mathbf{N}}t}\boldsymbol{a}_{x}^{*}\boldsymbol{a}_{x'}e^{-i\mathcal{H}_{\mathbf{N}}t}\mathcal{W}(\mathbf{N}^{\odot1/2}\odot\mathbf{u}_{s})\,\omega\right\rangle _{\mathcal{F}^{\otimes p}}
\end{align*}
To simplify it further, we use the relation 
\[
e^{\mathrm{i}\mathcal{H}_{\mathbf{N}}t}\boldsymbol{a}_{x'}e^{-\mathrm{i}\mathcal{H}_{\mathbf{N}}t}=\mathcal{W}(\mathbf{N}^{\odot1/2}\odot\mathbf{u}_{s})\mathcal{U}^{*}(t)\left(\prod_{q=1}^{p}\left(a_{x'}^{(q)}+\sqrt{N_{q}}u_{q,t}(x)\right)\right)\mathcal{U}(t)\mathcal{W}^{*}(\mathbf{N}^{\odot1/2}\odot\mathbf{u}_{s})
\]
(and an analogous result for the creation operator) to obtain that
\begin{align*}
 & \gamma_{\mathbf{N},t}^{\mathbf{I}}(\mathbf{x};\mathbf{x}')\\
 & =\frac{\boldsymbol{d}_{\boldsymbol{N}}}{\boldsymbol{N}}\left\langle \frac{\left(\boldsymbol{a}^{*}(\mathbf{u})\right)^{\boldsymbol{N}}}{\sqrt{\boldsymbol{N}!}}\omega,e^{i\mathcal{H}_{\mathbf{N}}t}\boldsymbol{a}_{x}^{*}\boldsymbol{a}_{x'}e^{-i\mathcal{H}_{\mathbf{N}}t}\mathcal{W}(\mathbf{N}^{\odot1/2}\odot\mathbf{u}_{s})\,\omega\right\rangle _{\mathcal{F}^{\otimes p}}\\
 & =\frac{\boldsymbol{d}_{\boldsymbol{N}}}{\boldsymbol{N}}\left\langle \frac{\left(\boldsymbol{a}^{*}(\mathbf{u})\right)^{\boldsymbol{N}}}{\sqrt{\boldsymbol{N}!}}\omega,\mathcal{W}(\mathbf{N}^{\odot1/2}\odot\mathbf{u}_{s})\mathcal{U}^{*}(t)\left(\prod_{r=1}^{p}\left(a_{x'}^{(r)*}+\sqrt{N_{r}}\,\overline{u_{r,t}(x)}\right)\right)\left(\prod_{q=1}^{p}\left(a_{x'}^{(q)}+\sqrt{N_{q}}u_{q,t}(x)\right)\right)\mathcal{U}(t)\,\omega\right\rangle _{\mathcal{F}^{\otimes p}}
\end{align*}
Thus, 
\begin{align*}
 & \gamma_{\mathbf{N},t}^{\mathbf{I}}(\mathbf{x};\mathbf{x}')-\overline{\mathbf{u}_{t}(\mathbf{x}')}\cdot\mathbf{u}_{t}(\mathbf{x})\\
 & =\sum_{Q,R}F_{t}^{Q,R}(\mathbf{x};\mathbf{x}')
\end{align*}
with
\begin{align*}
F_{t}^{Q,R}(\mathbf{x};\mathbf{x}') & :=\sqrt{\boldsymbol{N}_{Q^{c}}}\,u_{Q^{c},t}(x_{Q^{c}})\sqrt{\boldsymbol{N}_{R^{c}}}\,\overline{u_{R^{c},t}(x_{R^{c}})}\frac{\boldsymbol{d}_{\boldsymbol{N}}}{\boldsymbol{N}}\left\langle \frac{\left(\boldsymbol{a}^{*}(\mathbf{u})\right)^{\boldsymbol{N}}}{\sqrt{\boldsymbol{N}!}}\omega,\mathcal{W}(\mathbf{N}^{\odot1/2}\odot\mathbf{u}_{s})\left(a_{x}^{(Q)*}a_{x'}^{(R)}\right)\mathcal{U}(t)\omega\right\rangle _{\mathcal{F}^{\otimes p}}
\end{align*}
for $Q$ and $R$ be subsets of $\{1,\dots,p\}$ such that $Q\cup R\neq\emptyset$.

Recall the definition of $E_{t}^{Q,R}(J)$ in Proposition \ref{prop:Et1}
and the definition of $\mathbf{I}:=(1,\dots,1)\in\mathbb{Z}^{p}$.
For any $\mathbf{I}$-particle Hilbert-Schmidt operator $J$ on $\bigotimes_{q=1}^{p}L^{2}(\mathbb{R}^{3})$,
we have 
\begin{align*}
 & \operatorname{Tr}(J(\gamma_{\mathbf{N},t}^{\mathbf{I}}-\left|\mathbf{u}_{t}\right\rangle \left\langle \mathbf{u}_{t}\right|)\\
 & =\int\mathrm{d}\mathbf{x}\mathrm{d}\mathbf{x}'J(\mathbf{x}';\mathbf{x})\left(\gamma_{\mathbf{N},t}^{\mathbf{I}}(\mathbf{x}';\mathbf{x})-\overline{\mathbf{u}_{t}(\mathbf{x}')}\mathbf{u}_{t}(\mathbf{x})\right)\\
 & =\int\mathrm{d}\mathbf{x}\mathrm{d}\mathbf{x}'J(\mathbf{x}';\mathbf{x})\left(\sum_{Q,R}F_{t}^{Q,R}(\mathbf{x};\mathbf{x}')\right)\\
 & =:\sum_{Q\cup R\neq\emptyset}E_{t}^{Q,R}(J).
\end{align*}
Thus, from Propositions \ref{prop:Et1}, we find that 
\[
\left\Vert J(\gamma_{\mathbf{N},t}^{\mathbf{I}}-\left|\mathbf{u}_{t}\right\rangle \left\langle \mathbf{u}_{t}\right|)\right\Vert _{\mathrm{HS}}\leq Cp^{Cp}e^{Kt}\|J\|_{\mathrm{HS}}(\min_{q}c)^{-1/2}\frac{1}{N}.
\]
Since the space of compact operators is the dual to that of the trace
class operators, and since $\gamma_{\mathbf{N},t}^{\mathbf{I}}$ and
$\left|\mathbf{u}_{t}\right\rangle \left\langle \mathbf{u}_{t}\right|$
are Hermitian, using Remark \ref{rem:HSnorm-opnorm}
\[
\operatorname{Tr}\left|\gamma_{\mathbf{N},t}^{\mathbf{I}}-\left|\mathbf{u}_{t}\right\rangle \left\langle \mathbf{u}_{t}\right|\right|\leq Cp^{Cp}(\min_{q}c)^{-1/2}e^{Kt}\frac{1}{N}
\]
which concludes the proof of Theorem \ref{thm:Main_Theorem}.
\end{proof}

\section{Comparison dynamics\label{sec:Comparison-dynamics}}

This section is devoted to giving and proving lemmas.

\subsection{Comparison dynamics for $\mathcal{U}$}
\begin{lem}
\label{lem:NjU} Suppose that the assumptions in Theorem \ref{thm:Main_Theorem}
hold. Let $\mathcal{U}\left(t;s\right)$ be the unitary evolution
defined in (\ref{eq:def_mathcalU}). Then for any $\psi\in\mathcal{F}$
and $j\in\mathbb{N}$, there exist constants $C\equiv C(j)$ and $K\equiv K(j)$
such that 
\[
\left\langle \mathcal{U}\left(t;s\right)\psi,\left(\mathcal{N}_{\mathrm{total}}\right)^{j}\mathcal{U}\left(t;s\right)\psi\right\rangle _{\mathcal{F}^{\otimes p}}\leq Cp^{Cp}e^{Kt}\left\langle \psi,\left(\mathcal{N}_{\mathrm{total}}+p\mathcal{I}\right)^{2j+2}\psi\right\rangle _{\mathcal{F}^{\otimes p}}.
\]
\end{lem}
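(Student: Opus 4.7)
The plan is a Grönwall-type argument applied to the quantity
\[
G_k(t) := \langle \mathcal{U}(t,s)\psi,\; (\mathcal{N}_{\mathrm{total}}+p\mathcal{I})^k\, \mathcal{U}(t,s)\psi \rangle_{\mathcal{F}^{\otimes p}}.
\]
First I would differentiate in $t$, using $\mathrm{i}\partial_t \mathcal{U}(t,s) = \mathcal{L}(t)\mathcal{U}(t,s)$ from \eqref{eq:def_mathcalU}, to obtain
\[
\frac{\mathrm{d}}{\mathrm{d}t} G_k(t) = \mathrm{i}\,\langle \mathcal{U}(t,s)\psi,\; [\mathcal{L}(t), (\mathcal{N}_{\mathrm{total}}+p\mathcal{I})^k]\; \mathcal{U}(t,s)\psi \rangle.
\]
Decomposing $\mathcal{L}(t) = \mathcal{L}_0 + \mathcal{L}_2 + \mathcal{L}_3 + \mathcal{L}_4$, the number-preserving pieces — namely the kinetic and density $a_x^{(q)*}a_x^{(q)}$ parts of $\mathcal{L}_2^q$, the full cross term $\mathcal{L}_2^{\mathrm{cross}}$, and all of $\mathcal{L}_4$ — commute with $(\mathcal{N}_{\mathrm{total}}+p\mathcal{I})^k$ and drop out of the commutator. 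Only the number-changing terms survive: the linear piece $\mathcal{L}_0$ (which in fact vanishes under the hypothesis $c_{qr}=N_r/N$, but is retained for robustness as in Remark \ref{rem:|NrN-Cqr|}), the paired terms $a^{(q)*}a^{(q)*}+a^{(q)}a^{(q)}$ in $\mathcal{L}_2^q$, and the cubic $\mathcal{L}_3^q$ and $\mathcal{L}_3^{\mathrm{cross}}$.

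Next I would estimate each surviving commutator. For any operator $T$ that shifts $\mathcal{N}_{\mathrm{total}}$ by $\pm m$ one has
\[
[T,(\mathcal{N}_{\mathrm{total}}+p\mathcal{I})^k] \;=\; T\cdot\bigl((\mathcal{N}_{\mathrm{total}}+p\mathcal{I})^k - (\mathcal{N}_{\mathrm{total}}+p\mathcal{I}\mp m)^k\bigr),
\]
which is of the order $T\cdot(\mathcal{N}_{\mathrm{total}}+p\mathcal{I})^{k-1}$. Applying Lemmas \ref{lem:bound-a-a*-Phi-dGamma} and \ref{lem:dGammaQphiQ}, which translate factors of $a^{(q)}$, $a^{(q)*}$, $\boldsymbol{\phi}^{(Q)}$, and $d\boldsymbol{\Gamma}^{(Q)}$ into powers of $\mathcal{N}_{\mathrm{total}}+p\mathcal{I}$, together with the uniform-in-time $H^1$ bound on each $u_{q,t}$ (coming from conservation of the Hartree energy and the operator inequality $V_{qr}^2\leq K(1-\Delta)$), each commutator is bounded by $C(p)\,G_k(t)$ plus remainder terms controllable by a slightly higher moment $G_{k'}(t)$. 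Choosing $k=2j+2$ provides just enough cushion to absorb these remainders via Cauchy--Schwarz and Young's inequalities (the cubic $\mathcal{L}_3$ with prefactor $N_q^{-1/2}$ being the costliest, forcing the jump of $j+2$ over the target exponent $j$), yielding the closed estimate $|\partial_t G_{2j+2}(t)|\leq K(p)\,G_{2j+2}(t)$.

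Grönwall's lemma then gives $G_{2j+2}(t)\leq e^{K(p)(t-s)}\,G_{2j+2}(s)$, and the statement of the lemma follows from the operator inequality $\mathcal{N}_{\mathrm{total}}^j \leq (\mathcal{N}_{\mathrm{total}}+p\mathcal{I})^{2j+2}$, with the prefactor $p^{Cp}$ absorbing the combinatorial dependence accumulated from summing over components $q,r\in\{1,\dots,p\}$ in each of $\mathcal{L}_0, \mathcal{L}_2, \mathcal{L}_3, \mathcal{L}_4$.

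The main obstacle will be twofold. First, to verify that the $p$-dependence collected from each cross term $\mathcal{L}_2^{\mathrm{cross}}, \mathcal{L}_3^{\mathrm{cross}}, \mathcal{L}_4^{\mathrm{cross}}$ (whose number of summands grows polynomially in $p$) is in fact dominated by $p^{Cp}$ after applying the Hilbert--Schmidt tensor-decomposition bounds from Section \ref{subsec:Review-on-Tensor}. Second, the careful \emph{loss of powers} bookkeeping for $\mathcal{L}_3$: since it is cubic in creation/annihilation operators and shifts $\mathcal{N}_{\mathrm{total}}$ by $\pm 1$, each Cauchy--Schwarz application converts a factor of $\mathcal{L}_3$ into additional powers of $(\mathcal{N}_{\mathrm{total}}+p\mathcal{I})^{1/2}$, and one must check that exactly $j+2$ extra powers above the target moment $j$ are sufficient to close the induction — this is what dictates the exponent $2j+2$ on the right-hand side rather than something smaller.
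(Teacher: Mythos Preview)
Your direct Gr\"onwall argument has a genuine gap: it does not close for the cubic term $\mathcal{L}_3$. When you estimate the commutator $[\mathcal{L}_3,(\mathcal{N}_{\mathrm{total}}+p\mathcal{I})^{k}]$, the operator $N_q^{-1/2}\,a_x^{(q)*}a_z^{(q)*}a_x^{(q)}$ shifts the total number by one and contains \emph{three} creation/annihilation factors; after Cauchy--Schwarz this produces a bound of the type
\[
\Bigl|\partial_t G_k(t)\Bigr|\;\lesssim\; G_k(t)\;+\;\frac{1}{\sqrt{N}}\,G_k(t)^{1/2}\,G_{k+1}(t)^{1/2},
\]
so the right-hand side involves a strictly higher moment $G_{k+1}$. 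Replacing $k$ by $2j+2$ does not help: you would then need $G_{2j+3}$, and the hierarchy never closes. The prefactor $N_q^{-1/2}$ cannot compensate because there is no a priori relation between $\mathcal{N}_{\mathrm{total}}$ and $N$ along the fluctuation dynamics $\mathcal{U}(t,s)$.

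The paper circumvents this by the Rodnianski--Schlein truncation scheme, carried out in three steps. First (Lemma~\ref{lem:Truncation}), one replaces $\mathcal{L}_3$ by a truncated version $\mathcal{L}_3^{(\mathbf{M})}$ containing the cutoff $\chi(\mathcal{N}\le M_q)$; for the resulting dynamics $\mathcal{U}^{(\mathbf{M})}$ the cubic commutator now contributes $\sqrt{M_q/N_q}$ rather than $\sqrt{\mathcal{N}/N_q}$, and the Gr\"onwall \emph{does} close with rate $1+\sum_q\sqrt{M_q/N_q}$. Second (Lemma~\ref{lem:Weeakbounds}), one proves the weak a~priori bound $\mathcal{U}^*(\mathcal{N}_{\mathrm{total}})^{2j}\mathcal{U}\le C_j(\mathcal{N}_{\mathrm{total}}+N)^{2j}$ directly from the Weyl representation, with no differential inequality. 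Third (Lemma~\ref{lem:Comparision-U-UM}), the tail $\chi(\mathcal{N}>M_q)$ is controlled by the weak bound, giving a comparison between $\mathcal{U}$ and $\mathcal{U}^{(\mathbf{M})}$. Choosing $\mathbf{M}=\mathbf{N}$ and combining the three estimates yields the lemma; the exponent $2j+2$ on the right arises from this comparison step, not from padding a single Gr\"onwall as you suggest.
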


The proof of Lemma \ref{lem:NjU} will be followed by the following
argument. The argument for single component case is developed in \cite{Rodnianski2009}.
For the two-component BEC, see \cite[Proposition 3.2]{1811.04984}.
For $p\geq2$, $p$-component case will be proved by the following
argument which is a generalization of previous works \cite{1811.04984,Rodnianski2009}.

We introduce a truncated time-dependent generator with fixed $\mathbf{M}>0$
as follows:
\[
\mathcal{L}^{(\mathbf{M})}(t)=\mathcal{L}_{0}(t)+\mathcal{L}_{1}(t)+\mathcal{L}_{2}(t)+\mathcal{L}_{3}^{(\mathbf{M})}(t)+\mathcal{L}_{4}
\]
with 
\[
\begin{split}\mathcal{L}_{3}^{(\mathbf{M})}(t) & =\sum_{q=1}^{p}\Big(\frac{1}{\sqrt{N_{q}}}\int\mathrm{d}x\mathrm{d}z\,V_{qq}(x-z)a_{x}^{(q)*}(u_{q,t}(z)\chi_{q}a_{z}^{(q)*}+\overline{u_{q,t}}(z)a_{z}\chi_{q}^{(q)})a_{x}^{(q)}\Big)\\
 & \quad+\sum_{q,r=1}^{p}\Big(\frac{\sqrt{N_{q}}}{N}\int\mathrm{d}x\mathrm{d}y\,V_{qr}(x-y)a_{y}^{(r)*}(u_{q,t}(x)\chi_{q}a_{x}^{(q)*}+\overline{u_{q,t}}(x)a_{x}^{(q)}\chi_{q})a_{y}^{(r)}\Big)
\end{split}
\]
Here $\chi_{q}=\mathcal{E}^{(q)}\chi(\mathcal{N}\leq M_{q})$ where
$\chi$ denotes a characteristic function.

We remark that $\mathbf{M}$ will be chosen later such that $\mathbf{M}=\mathbf{N}$.
Define a unitary operator $\mathcal{U}^{(\mathbf{M})}$ such that
\[
\mathrm{i}\partial_{t}\mathcal{U}^{(\mathbf{M})}(t;s)=\mathcal{L}^{(\mathbf{M})}(t)\mathcal{U}^{(\mathbf{M})}(t;s)\qquad\text{and}\qquad\mathcal{U}^{(\mathbf{M})}(s;s)=1.
\]
The detailed proof will be omitted. The main idea is to follow the
steps using the argument in \cite{Rodnianski2009} and generalized
in \cite{1811.04984}. One can easily generalize further for $p$-component
case.

\subsubsection*{Step 1. Truncation with respect to $\mathcal{N}$ with $\mathbf{M}>0$.}
\begin{lem}
\label{lem:Truncation}Let $t>s$ and $\psi\in\mathcal{F}^{\otimes p}$.
Then
\begin{align*}
 & \left\langle \mathcal{U}^{(\mathbf{M})}\left(t;s\right)\psi,\left(\mathcal{N}_{\mathrm{total}}\right)^{j}\mathcal{U}^{(\mathbf{M})}\left(t;s\right)\psi\right\rangle _{\mathcal{F}^{\otimes p}}\\
 & \leq\left\langle \psi,\left(\mathcal{N}_{\mathrm{total}}+p\mathcal{I}\right)^{2j+2}\psi\right\rangle _{\mathcal{F}^{\otimes p}}Cp^{Cp}\exp\left(K_{j}(1+\sum_{q=1}^{p}\sqrt{M_{q}/N_{q}})|t-s|\right).
\end{align*}
\end{lem}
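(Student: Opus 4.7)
The plan is to apply Gronwall's inequality to the single-time moment
\[
f_j(t) := \bigl\langle \psi_t, (\mathcal{N}_{\mathrm{total}} + p\mathcal{I})^{j}\, \psi_t \bigr\rangle_{\mathcal{F}^{\otimes p}}, \qquad \psi_t := \mathcal{U}^{(\mathbf{M})}(t;s)\psi,
\]
and then pass from $(\mathcal{N}_{\mathrm{total}})^{j}$ on the left to $(\mathcal{N}_{\mathrm{total}}+p\mathcal{I})^{2j+2}$ on the right via an induction on $j$. Differentiating and using the self-adjointness of $\mathcal{L}^{(\mathbf{M})}(t)$ gives
\[
\frac{d}{dt} f_j(t) \;=\; i\bigl\langle \psi_t, [\mathcal{L}^{(\mathbf{M})}(t),\,(\mathcal{N}_{\mathrm{total}} + p\mathcal{I})^{j}]\,\psi_t \bigr\rangle.
\]
The first task is to identify which pieces of $\mathcal{L}^{(\mathbf{M})} = \mathcal{L}_0 + \mathcal{L}_1 + \mathcal{L}_2 + \mathcal{L}_3^{(\mathbf{M})} + \mathcal{L}_4$ actually contribute to this commutator. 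Under the hypothesis $c_{qr} = N_r/N$, $\mathcal{L}_0 = 0$; the bilinear piece $\mathcal{L}_1$ has summands of the form $a^{(q)*}a^{(r)}$ which preserve the total particle number and hence commute with $\mathcal{N}_{\mathrm{total}}$; the quartic piece $\mathcal{L}_4$ is already normal-ordered and commutes with every $\mathcal{E}^{(q)}\mathcal{N}$; and the diagonal parts of $\mathcal{L}_2$ and $\mathcal{L}_2^{\text{cross}}$ commute as well. The only non-commuting contributions come from the pair creation/annihilation terms $a_x^{(q)*}a_z^{(q)*}$ and $a_x^{(q)}a_z^{(q)}$ in $\mathcal{L}_2^{q}$, and from the cubic terms in $\mathcal{L}_3^{(\mathbf{M})}$.

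The second step is to estimate these two contributions. For the pair terms in $\mathcal{L}_2^q$, the kernel is $V_{qq}(x-z)\,u_{q,t}(z)\,u_{q,t}(x)$; applying the operator inequality $V_{qq}^{2} \leq K(1-\Delta)$ and the $H^{1}$-preservation of the Hartree flow (so that $\sup_{|s|\leq t}\|u_{q,s}\|_{H^1}$ is finite), together with Lemmas \ref{lem:relbN} and \ref{lem:bound-a-a*-Phi-dGamma}, yields an $O(1)$ bound per component. For the cubic $\mathcal{L}_3^{(\mathbf{M})}$, the cutoff $\chi_{q} = \mathcal{E}^{(q)} \chi(\mathcal{N}\leq M_{q})$ is essential: on its range one has $\mathcal{N}^{1/2}\leq M_{q}^{1/2}$, which together with the prefactors $1/\sqrt{N_q}$ and $\sqrt{N_q}/N$ produces the decisive factor $\sqrt{M_q/N_q}$ after bounding the remaining $a^{*}$ and $a$ via Lemma \ref{lem:relbN}. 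A Cauchy--Schwarz step then bounds the commutator contribution by a constant multiple of
\[
\Bigl(1 + \sum_{q=1}^{p}\sqrt{M_q/N_q}\Bigr)\,\bigl\langle \psi_t, (\mathcal{N}_{\mathrm{total}}+p\mathcal{I})^{j+1}\,\psi_t\bigr\rangle.
\]

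The third step closes the loop by induction on $j$. The base case $j=0$ is the conservation $\|\psi_t\| = \|\psi\|$. For the inductive step, the estimate above provides control of $f_j(t)$ in terms of $f_{j+1}(t)$; one iterates the differentiate-and-bound procedure on $f_{j+1}$, which in turn is controlled by $f_{j+2}$, and so on. Because each iteration of the commutator with a cubic operator increases the degree of the enveloping number operator by one while Cauchy--Schwarz symmetrizes the weights, a chain of the order of $j+2$ iterations is needed to obtain a closed Gronwall inequality, and this is the origin of the $2j+2$ power on the right-hand side. The combinatorial factor $p^{Cp}$ then arises from two sources: the $p^{2}$-summations in $\mathcal{L}_2^{\text{cross}}$ and $\mathcal{L}_3^{\text{cross}}$, and the multinomial expansion $(\mathcal{N}_{\mathrm{total}}+p\mathcal{I})^{j} = \bigl(\sum_{q=1}^{p}\mathcal{E}^{(q)}(\mathcal{N}+1)\bigr)^{j}$, which produces $p^{j}$ monomials that must be handled separately.

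The main obstacle will be the cubic terms in $\mathcal{L}_3^{(\mathbf{M})}$: their commutator with $(\mathcal{N}_{\mathrm{total}}+p\mathcal{I})^{j}$ produces normal-ordered operators of degree $j+2$, which when estimated via Lemma \ref{lem:relbN} naively demand more than one extra power of $\mathcal{N}$, giving an unclosable recursion. The truncation $\chi_q$ is precisely what converts the would-be divergent $\sqrt{N_q}$ into a controllable $\sqrt{M_q}$, thereby generating the explicit rate $1+\sum_{q}\sqrt{M_q/N_q}$ in the exponent and enabling the induction to close. The remaining care is in bookkeeping the $p$-dependent constants through both the multinomial expansion and the cross-component sums, so that the final prefactor is genuinely of the form $C p^{Cp}$.
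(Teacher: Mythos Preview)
Your identification of the non-commuting pieces of $\mathcal{L}^{(\mathbf{M})}$ is correct, as is the role you assign to the cutoff $\chi_q$ in producing the factor $\sqrt{M_q/N_q}$ from the cubic piece. The gap is in your third step. The differential inequality you arrive at,
\[
f_j'(t) \;\le\; C\Bigl(1+\sum_{q}\sqrt{M_q/N_q}\Bigr)\, f_{j+1}(t),
\]
yields an ascending chain $f_j \to f_{j+1} \to f_{j+2} \to \cdots$ that does \emph{not} close after finitely many steps: the pair-creation part of $\mathcal{L}_2$ carries no cutoff, so there is no a priori control of high moments of $\psi_t$, and nothing you describe converts this into a closed Gronwall inequality. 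The phrase ``a chain of the order of $j+2$ iterations'' is not a mechanism; your induction runs in the wrong direction and never terminates.

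The paper closes the Gronwall \emph{at level $j$} directly. After expanding $(\mathcal{N}_{\mathrm{total}}+p\mathcal{I})^{j}$ multinomially into products $\bigotimes_q(\mathcal{N}+1)^{k_q}$, it uses the pull-through relations $a_x^{*}(\mathcal{N}+1)^{k}=\mathcal{N}^{k}a_x^{*}$ to rewrite each commutator $[a_x^{(q)*}a_z^{(r)*},(\mathcal{N}+1)^{\otimes\mathbf{k}}]$ (and likewise the cubic commutator) with \emph{matching} half-powers of shifted number operators on each side of the creation/annihilation block. Cauchy--Schwarz then returns $\langle\psi_t,(\mathcal{N}_{\mathrm{total}}+p\mathcal{I})^{j}\psi_t\rangle$ rather than the $(j{+}1)$st moment, giving
\[
f_j'(t) \;\le\; C_j\,p^{Cp}\Bigl(1+\sum_{q}\sqrt{M_q/N_q}\Bigr)\, f_j(t),
\]
after which ordinary Gronwall yields $f_j(t)\le f_j(s)\exp\bigl(K_j(1+\sum_q\sqrt{M_q/N_q})|t-s|\bigr)$. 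The exponent $2j+2$ on the right of the stated lemma is then only the trivial overestimate $f_j(s)\le\langle\psi,(\mathcal{N}_{\mathrm{total}}+p\mathcal{I})^{2j+2}\psi\rangle$; it is not produced by any iterative scheme, and your explanation of its origin is incorrect.
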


To prove this lemma, keep in mind that $N_{r}/N_{q}$, $N/N_{q}$,
and $\sqrt{N_{q}N_{r}}/N$ are bounded for any $q,r=1,\dots,p$ because
\[
\frac{N_{r}}{N_{q}}\leq\frac{N}{N_{q}}\leq(\min_{q}c_{q})^{-1}\quad\text{and}\quad\frac{\sqrt{N_{q}N_{r}}}{N}\leq1.
\]

\begin{proof}
Fix any $j\in\mathbb{N}$. Then we want to bound
\[
\frac{\mathrm{d}}{\mathrm{d}t}\left\langle \mathcal{U}^{(\mathbf{M})}\left(t;s\right)\psi,\left(\mathcal{N}_{\mathrm{total}}\right)^{j}\mathcal{U}^{(\mathbf{M})}\left(t;s\right)\psi\right\rangle 
\]
by
\[
C_{j}\left(1+\sum_{q=1}^{p}\sqrt{M_{q}/N_{q}}\right)\left\langle \mathcal{U}^{(\mathbf{M})}\left(t;s\right)\psi,\left(\mathcal{N}_{\mathrm{total}}\right)^{j}\mathcal{U}^{(\mathbf{M})}\left(t;s\right)\psi\right\rangle 
\]
for some $C_{j}$ to use Grönwall inequality.
\begin{align*}
 & \frac{\mathrm{d}}{\mathrm{d}t}\left\langle \mathcal{U}^{(\mathbf{M})}\left(t;s\right)\psi,\left(\mathcal{N}_{\mathrm{total}}\right)^{j}\mathcal{U}^{(\mathbf{M})}\left(t;s\right)\psi\right\rangle \\
 & =\sum_{\mathrm{add}\mathbf{k}=j}\binom{j}{\mathbf{k}}\mathrm{i}\left\langle \mathcal{U}^{(\mathbf{M})}\left(t;s\right)\psi,[\mathcal{L}^{(\mathbf{M})}(t),\bigotimes_{q=1}^{p}(\mathcal{N}+1)^{k_{q}}]\mathcal{U}^{(\mathbf{M})}\left(t;s\right)\psi\right\rangle \\
 & =:\sum_{\mathrm{add}\mathbf{k}=j}\binom{j}{\mathbf{k}}\sum_{\ell=1}^{3}\sum_{q,r=1}^{p}\mathcal{Q}_{\ell,q,r}
\end{align*}
where $\mathbf{k}=(k_{1},\dots,k_{p})$, $\mathrm{add}\mathbf{k}=\sum_{q=1}^{p}k_{q}$,
and the multinomial coefficient
\[
\binom{j}{\mathbf{k}}:=\frac{j!}{k_{1}!\dots k_{p}!}
\]
and 
\begin{align*}
\mathcal{Q}_{1,q,r} & =-\operatorname{Im}\int\mathrm{d}x\mathrm{d}z\,V_{qr}(x-z)u_{t}^{(q)}(z)u_{t}^{(r)}(x)\left\langle \mathcal{U}^{(\mathbf{M})}\left(t;s\right)\psi,[a_{z}^{(q)*}a_{x}^{(r)*},(\mathcal{N}+1)^{\otimes\mathbf{k}}]\mathcal{U}^{(\mathbf{M})}\left(t;s\right)\psi\right\rangle ,\\
\mathcal{Q}_{2,q,r} & =-\operatorname{Im}\int\mathrm{d}x\mathrm{d}z\,V_{qr}(x-z)u_{t}^{(q)}(z)\overline{u_{t}^{(r)}(x)}\left\langle \mathcal{U}^{(\mathbf{M})}\left(t;s\right)\psi,[a_{z}^{(q)*}a_{x}^{(r)},(\mathcal{N}+1)^{\otimes\mathbf{k}}]\mathcal{U}^{(\mathbf{M})}\left(t;s\right)\psi\right\rangle ,\\
\mathcal{Q}_{3,q,r} & =-\operatorname{Im}\int\mathrm{d}x\mathrm{d}z\,V_{qr}(x-z)\overline{u_{t}^{(q)}(z)}\left\langle \mathcal{U}^{(\mathbf{M})}\left(t;s\right)\psi,[a_{x}^{(r)*}a_{z}^{(q)}\chi_{q}a_{x}^{(r)},(\mathcal{N}+1)^{\otimes\mathbf{k}}]\mathcal{U}^{(\mathbf{M})}\left(t;s\right)\psi\right\rangle .
\end{align*}
Here, we denote
\[
(\mathcal{N}+1)^{\otimes\mathbf{k}}=\bigotimes_{q=1}^{p}(\mathcal{N}+1)^{k_{q}}
\]
for $\mathbf{k}=(k_{1},\dots,k_{p})$.

We now want to prove that for all $\ell=1,2,3$,
\begin{equation}
\sum_{q,r=1}^{p}\sum_{\mathrm{add}\mathbf{k}=j}\binom{j}{\mathbf{k}}\mathcal{Q}_{\ell,q,r}\leq C_{j}p^{Cp}\left(1+\sum_{q=1}^{p}\sqrt{M_{q}/N_{q}}\right)\left\langle \mathcal{U}^{(\mathbf{M})}\left(t;s\right)\psi,\left(\mathcal{N}_{\mathrm{total}}\right)^{j}\mathcal{U}^{(\mathbf{M})}\left(t;s\right)\psi\right\rangle .\label{eq:Q-lqr}
\end{equation}
Because it implied form Grönwall inequality, we will have the conclusion.

To have the bound (\ref{eq:Q-lqr}), we need to note that we have
commutators for single component case,
\begin{align*}
[a_{x}^{*},(\mathcal{N}+1)^{j}] & =\sum_{k=0}^{j-1}\binom{j}{k}(-1)^{k}(\mathcal{N}+1)^{k}a_{x}^{*},\\{}
[a_{x},(\mathcal{N}+1)^{j}] & =\sum_{k=0}^{j-1}\binom{j}{k}(\mathcal{N}+1)^{k}a_{x}.
\end{align*}
Consequently,
\begin{align*}
 & [a_{x}^{*}a_{y}^{*},(\mathcal{N}+1)^{j}]\\
 & =\sum_{k=0}^{j-1}\binom{j}{k}(-1)^{k}\left(a_{x}^{*}(\mathcal{N}+1)^{k}a_{y}^{*}+(\mathcal{N}+1)^{k}a_{x}^{*}a_{y}^{*}\right)\mathcal{E}^{(q)}\\
 & =\sum_{k=0}^{j-1}\binom{j}{k}(-1)^{k}\left(\mathcal{N}^{k/2}a_{x}^{*}a_{y}^{*}(\mathcal{N}+2)+(\mathcal{N}+1)^{k/2}a_{x}^{*}a_{y}^{*}(\mathcal{N}+3)^{k/2}\right)
\end{align*}
and
\[
[a_{x},(\mathcal{N}+1)^{j}]=\sum_{k=0}^{j-1}\binom{j}{k}(\mathcal{N}+1)^{k/2}a_{x}\mathcal{N}^{k/2}.
\]
Using these and denoting $\mathbf{j}=(j_{1},\dots,j_{p})$ and$\mathbf{k}_{q}$
a vector such that the $q$-th sector is $k$ and all the other $r$-th
sector is $j_{r}$, we find that
\begin{align*}
[a_{x}^{(q)*},(\mathcal{N}+1)^{\mathbf{\otimes\mathbf{j}}}] & =\sum_{\substack{\mathrm{add}\mathbf{k}_{q}=j\\
k<j
}
}\binom{j}{\mathbf{k}_{q}}(-1)^{k}(\mathcal{N}+1)^{\otimes\mathbf{k}_{q}}\left(a_{x}^{(q)*}\right)\\{}
[a_{x}^{(q)},(\mathcal{N}+1)^{\mathbf{\otimes\mathbf{j}}}] & \sum_{\substack{\mathrm{add}\mathbf{k}_{q}=j\\
k<j
}
}\binom{j}{\mathbf{k}_{q}}(\mathcal{N}+1)^{\otimes\mathbf{k}_{q}}\left(a_{x}^{(q)}\right).
\end{align*}
Consequently,
\begin{align*}
 & [a_{x}^{(q)*}a_{y}^{(q)*},(\mathcal{N}+1)^{\otimes\mathbf{j}}]\\
 & =\sum_{\substack{\mathrm{add}\mathbf{k}_{q}=j\\
k<j
}
}\binom{j}{\mathbf{k}_{q}}(-1)^{k}(\mathcal{N}+1)^{\mathbf{k}_{q}/2}\\
 & \qquad\qquad\times a_{x}^{(q)*}a_{y}^{(q)*}\left(\bigotimes_{r<q}(\mathcal{N}+1)^{j_{r}/2}\otimes(\mathcal{N}+2)^{k/2}\otimes\bigotimes_{r>q}(\mathcal{N}+1)^{j_{r}/2}\right)\\
 & \qquad+\sum_{\substack{j_{1}+\dots+k+\dots+j_{p}=j\\
k<j
}
}\binom{j}{\mathbf{k}_{q}}(-1)^{k}(\mathcal{N}+1)^{\otimes\mathbf{k}_{q}/2}\\
 & \qquad\qquad\times a_{x}^{(q)*}a_{y}^{(q)*}\left(\bigotimes_{r<q}(\mathcal{N}+1)^{j_{r}/2}\otimes(\mathcal{N}+3)^{k/2}\otimes\bigotimes_{r>q}(\mathcal{N}+1)^{j_{r}/2}\right),
\end{align*}
and suppose $q<q'$
\begin{align*}
 & [a_{x}^{(q)*}a_{y}^{(q')*},(\mathcal{N}+1)^{\otimes\mathbf{j}}]\\
 & =\sum_{\substack{\mathrm{add}\mathbf{k}_{q}=j\\
k<j
}
}\binom{j}{\mathbf{k}_{q}}(-1)^{k}(\mathcal{N}+1)^{\otimes\mathbf{k}_{q}/2}\\
 & \qquad\qquad\times a_{x}^{(q)*}a_{y}^{(q')*}\left(\bigotimes_{r<q}(\mathcal{N}+1)^{j_{r}/2}\otimes(\mathcal{N}+2)^{k/2}\otimes\bigotimes_{q<r<q'}(\mathcal{N}+1)^{j_{r}/2}\otimes(\mathcal{N}+2)^{k/2}\otimes\bigotimes_{r>q'}(\mathcal{N}+1)^{j_{r}/2}\right)\\
 & \qquad+\sum_{\substack{\mathrm{add}\mathbf{k}_{q}=j\\
k<j
}
}\binom{j}{\mathbf{k}_{q}}(-1)^{k}\left(\bigotimes_{r<q}(\mathcal{N}+1)^{j_{r}/2}\otimes(\mathcal{N})^{j_{q}/2}\otimes\bigotimes_{q<r<q'}(\mathcal{N}+1)^{j_{r}/2}\otimes(\mathcal{N}+1)^{k/2}\otimes\bigotimes_{r>q'}(\mathcal{N}+1)^{j_{r}/2}\right)\\
 & \qquad\qquad\times a_{x}^{(q)*}a_{y}^{(q')*}\left(\bigotimes_{r<q'}(\mathcal{N}+1)^{j_{r}/2}\otimes(\mathcal{N}+2)^{k/2}\otimes\bigotimes_{r>q'}(\mathcal{N}+1)^{j_{r}/2}\right).
\end{align*}
We now want to estimate
\[
\sum_{\substack{\sum_{r\neq q}j_{r}+k=j\\
k<j
}
}\binom{j}{\mathbf{k}_{q}}|\mathcal{Q}_{\ell,q,q'}|
\]
for $\ell=1,2,3$. To have a bound of this, we need to bound each
$|\mathcal{Q}_{\ell,q,r}|$. Using the above and denoting $\boldsymbol{\ell}_{q}$
a vector such that the $q$-th sector is $\ell$ and all the other
$r$-th sector is $j_{r}$ 
\begin{align*}
\mathcal{Q}_{1,q,q'} & =\sum_{\substack{\mathrm{add}\boldsymbol{\ell}_{q}=j-k\\
\ell<j-k
}
}\binom{j-k}{\boldsymbol{\ell}_{q}}(-1)^{\ell+1}\operatorname{Im}\int\mathrm{d}x\,u_{q,t}(x)\\
 & \quad\times\Bigg\langle a_{x}^{(q)*}\bigotimes_{r<q}(\mathcal{N}+1)^{j_{r}/2}\otimes(\mathcal{N}+1)^{\ell/2}\otimes\bigotimes_{q<r<q'}(\mathcal{N}+1)^{j_{r}/2}\\
 & \qquad\qquad\qquad\qquad\otimes(\mathcal{N}+2)^{k/2}\otimes\bigotimes_{q<r<q'}(\mathcal{N}+1)^{j_{r}/2}\big)\mathcal{U}^{(\mathbf{M})}\left(t;s\right)\psi,\\
 & \quad\qquad\left(a^{(r)*}(V_{qr}(x-\cdot)u_{r,t}(\cdot))\right)\big(\bigotimes_{r<q}(\mathcal{N}+1)^{j_{r}/2}\otimes(\mathcal{N}+1)^{\ell/2}\\
 & \quad\qquad\quad\times\otimes\bigotimes_{q<r<q'}(\mathcal{N}+1)^{j_{r}/2}\otimes(\mathcal{N}+2)^{k/2}\otimes\bigotimes_{q<r<q'}(\mathcal{N}+1)^{j_{r}/2}\big)\mathcal{U}^{(\mathbf{M})}\left(t;s\right)\psi\Bigg\rangle\\
 & \quad+\sum_{\substack{\mathrm{add}\boldsymbol{\ell}_{q}=k\\
\ell<k
}
}\binom{k}{\boldsymbol{\ell}_{q}}(-1)^{\ell+1}\operatorname{Im}\int\mathrm{d}x\,u_{q,t}(x)\\
 & \quad\qquad\times\Bigg\langle a_{x}^{(q)*}\big(\bigotimes_{r<q}(\mathcal{N}+1)^{j_{r}/2}\otimes(\mathcal{N})^{(j-k)/2}\otimes\bigotimes_{q<r<q'}(\mathcal{N}+1)^{j_{r}/2}\\
 & \qquad\qquad\qquad\qquad\qquad\otimes(\mathcal{N}+1)^{\ell/2}\otimes\bigotimes_{q<r<q'}(\mathcal{N}+1)^{j_{r}/2}\big)\mathcal{U}^{(\mathbf{M})}\left(t;s\right)\psi,\\
 & \quad\qquad\qquad\left(a^{(r)*}(V_{qr}(x-\cdot)u_{r,t}(\cdot))\right)\\
 & \quad\qquad\qquad\quad\times\big(\bigotimes_{r<q}(\mathcal{N}+1)^{j_{r}/2}\otimes(\mathcal{N}+1)^{(j-k)/2}\otimes\bigotimes_{q<r<q'}(\mathcal{N}+1)^{j_{r}/2}\\
 & \qquad\qquad\qquad\qquad\qquad\otimes(\mathcal{N}+2)^{\ell/2}\otimes\bigotimes_{q<r<q'}(\mathcal{N}+1)^{j_{r}/2}\big)\mathcal{U}^{(\mathbf{M})}\left(t;s\right)\psi\Bigg\rangle
\end{align*}
Then, using the Cauchy-Schwarz inequality and Lemma (\ref{lem:dGammaQphiQ}),
\[
|\mathcal{Q}_{1,q,q'}|\leq I+II
\]
Thus, using the assumption on $V_{qr}$ in (\ref{eq:assumption_V}),
the first sum is bounded by
\begin{align*}
I & \leq C\sum_{\substack{\mathrm{add}\boldsymbol{\ell}_{q}=j-k\\
\ell<j-k
}
}\binom{j-k}{\boldsymbol{\ell}_{q}}\Bigg(\int\mathrm{d}x\,\Bigg\Vert a_{x}^{(q)*}\big(\bigotimes_{r<q}(\mathcal{N}+1)^{j_{r}/2}\otimes(\mathcal{N}+1)^{\ell/2}\otimes\bigotimes_{q<r<q'}(\mathcal{N}+1)^{j_{r}/2}\\
 & \qquad\qquad\otimes(\mathcal{N}+2)^{k/2}\otimes\bigotimes_{q<r<q'}(\mathcal{N}+1)^{j_{r}/2}\big)\mathcal{U}^{(\mathbf{M})}\left(t;s\right)\psi\Bigg\Vert_{\mathcal{F}^{\otimes p}}\Bigg)^{1/2}\\
 & \quad\times\Bigg\Vert\big(\bigotimes_{r<q}(\mathcal{N}+1)^{j_{r}/2}\otimes(\mathcal{N}+1)^{\ell/2}\otimes\bigotimes_{q<r<q'}(\mathcal{N}+1)^{j_{r}/2}\\
 & \qquad\qquad\otimes(\mathcal{N}+2)^{(k+1)/2}\otimes\bigotimes_{q<r<q'}(\mathcal{N}+1)^{j_{r}/2}\big)\mathcal{U}^{(\mathbf{M})}\left(t;s\right)\psi\Bigg\Vert_{\mathcal{F}^{\otimes p}}.
\end{align*}
The second term is bounded by, also using the assumption on $V_{qr}$
in (\ref{eq:assumption_V}),
\begin{align*}
II & \leq C\sum_{\substack{\mathrm{add}\boldsymbol{\ell}_{q}=k\\
\ell<k
}
}\binom{k}{\boldsymbol{\ell}_{q}}\Bigg(\int\mathrm{d}x\,\Bigg\Vert a_{x}^{(q)*}\big(\bigotimes_{r<q}(\mathcal{N}+1)^{j_{r}/2}\otimes(\mathcal{N})^{(j-k)/2}\otimes\bigotimes_{q<r<q'}(\mathcal{N}+1)^{j_{r}/2}\\
 & \qquad\qquad\otimes(\mathcal{N}+1)^{\ell/2}\otimes\bigotimes_{q<r<q'}(\mathcal{N}+1)^{j_{r}/2}\big)\mathcal{U}^{(\mathbf{M})}\left(t;s\right)\psi\Bigg\Vert_{\mathcal{F}^{\otimes p}}\Bigg)^{1/2}\\
 & \quad\times\Bigg\Vert\big(\bigotimes_{r<q}(\mathcal{N}+1)^{j_{r}/2}\otimes(\mathcal{N})^{(j-k)/2}\otimes\bigotimes_{q<r<q'}(\mathcal{N}+1)^{j_{r}/2}\\
 & \qquad\qquad\otimes(\mathcal{N}+1)^{(\ell+1)/2}\otimes\bigotimes_{q<r<q'}(\mathcal{N}+1)^{j_{r}/2}\big)\mathcal{U}^{(\mathbf{M})}\left(t;s\right)\psi\Bigg\Vert_{\mathcal{F}^{\otimes p}}.
\end{align*}
Applying the Cauchy-Schwarz inequality again and rewriting with multinomial
coefficient, we obtain
\[
\sum_{\substack{\sum_{r\neq q}j_{r}+k=j\\
k<j
}
}\binom{j}{\mathbf{k}_{q}}|\mathcal{Q}_{1,q,r}|\leq C_{j}\left\langle \mathcal{U}^{(\mathbf{M})}\left(t;s\right)\psi,\left(\mathcal{N}_{\mathrm{total}}\right)^{j}\mathcal{U}^{(\mathbf{M})}\left(t;s\right)\psi\right\rangle .
\]
Similar bound can be obtained for 
\[
\sum_{\substack{\sum_{r\neq q}j_{r}+k=j\\
k<j
}
}\binom{j}{\mathbf{k}_{q}}|\mathcal{Q}_{2,q,r}|.
\]
This upper bound contributes to the first term on the right hand side
of (\ref{eq:Q-lqr}).

Now it remains to bound 
\[
\sum_{\substack{\sum_{r\neq q}j_{r}+k=j\\
k<j
}
}\binom{j}{\mathbf{k}_{q}}|\mathcal{Q}_{3,q,r}|
\]
Noting that $[\chi_{q},a_{x}^{(r)}]=0$,
\[
[\chi_{q},(\mathcal{N}+1)^{\otimes\mathbf{j}}]=0
\]
and
\[
\|a^{(r)}(V_{qr}(x-\cdot)u_{r,t}(\cdot))\chi_{r}\|\leq C\sqrt{M_{r}}
\]
with similar approach to bound $|\mathcal{Q}_{1,q,q'}|$, we can bound
that
\[
\frac{1}{\sqrt{N_{r}}}\sum_{\substack{\sum_{r\neq q}j_{r}+k=j\\
k<j
}
}\binom{j}{\mathbf{k}_{q}}|\mathcal{Q}_{1,q,r}|\leq C_{j}\sqrt{M_{r}/N_{r}}\left\langle \mathcal{U}^{(\mathbf{M})}\left(t;s\right)\psi,\left(\mathcal{N}_{\mathrm{total}}\right)^{j}\mathcal{U}^{(\mathbf{M})}\left(t;s\right)\psi\right\rangle .
\]
\end{proof}

\subsubsection*{Step 2. Weak bounds on the $\mathcal{U}$ dynamics}
\begin{lem}
\label{lem:Weeakbounds}For arbitrary $t,s\in\mathbb{R}$ and $\psi\in\mathcal{F}^{\otimes p}$,
we have
\begin{equation}
\left\langle \psi,\mathcal{U}^{*}(t,s)(\mathcal{N}_{\mathrm{total}})\mathcal{U}(t,s)\psi\right\rangle \leq Cp^{Cp}\left\langle \psi,(\mathcal{N}_{\mathrm{total}}+N+1)\psi\right\rangle \label{eq:UstarNtotU}
\end{equation}
Moreover, for every $j\in\mathbb{N}$,
\begin{align}
\left\langle \psi,\mathcal{U}^{*}(t,s)(\mathcal{N}_{\mathrm{total}})^{2j}\mathcal{U}(t,s)\psi\right\rangle  & \leq C_{j}p^{Cp}\left\langle \psi,(\mathcal{N}_{\mathrm{total}}+N)^{2j}\psi\right\rangle \label{eq:UstartNtotsaqurtejU}\\
\left\langle \psi,\mathcal{U}^{*}(t,s)(\mathcal{N}_{\mathrm{total}})^{2j+1}\mathcal{U}(t,s)\psi\right\rangle  & \leq C_{j}p^{Cp}\left\langle \psi,(\mathcal{N}_{\mathrm{total}}+N)^{2j+1}(\mathcal{N}_{\mathrm{total}}+1)\psi\right\rangle \label{eq:UstartNtotsaqurtej+1U}
\end{align}
for appropriate constant $C_{j}$.
\end{lem}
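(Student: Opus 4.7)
The plan is to exploit the factorization
\begin{equation*}
\mathcal{U}(t,s) = e^{i(\alpha(t)-\alpha(s))}\,\mathcal{W}_t^{*}\,e^{-i(t-s)\mathcal{H}_{\mathbf{N}}}\,\mathcal{W}_s
\end{equation*}
together with the exact conservation law $[\mathcal{H}_{\mathbf{N}},\mathcal{N}^{(q)}]=0$ for every $q$. No Gr\"onwall estimate is needed, which is why the stated bound is time independent, in contrast with Lemma \ref{lem:NjU}. Setting $\eta := e^{-i(t-s)\mathcal{H}_{\mathbf{N}}}\mathcal{W}_s\psi$ one has
\begin{equation*}
\langle\mathcal{U}(t,s)\psi,\mathcal{N}_{\mathrm{total}}^{k}\,\mathcal{U}(t,s)\psi\rangle \;=\; \langle\eta,\mathcal{W}_t\mathcal{N}_{\mathrm{total}}^{k}\mathcal{W}_t^{*}\eta\rangle,
\end{equation*}
so the task reduces to a quadratic-form bound on $\mathcal{W}_t\mathcal{N}_{\mathrm{total}}^{k}\mathcal{W}_t^{*}$ followed by transfer back to $\psi$ through the Schr\"odinger evolution and the Weyl operator $\mathcal{W}_s$.

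By Lemma \ref{lem:weyl}(ii) with $f_q=\sqrt{N_q}u_{q,t}$, and using that the $\{\mathcal{N}^{(q)}\}_{q=1}^{p}$ mutually commute,
\begin{equation*}
\mathcal{W}_t\mathcal{N}_{\mathrm{total}}\mathcal{W}_t^{*} \;=\; (\mathcal{N}_{\mathrm{total}}+N) - P_t, \qquad P_t:=\sum_{q=1}^{p}\sqrt{N_q}\,\phi^{(q)}(u_{q,t}).
\end{equation*}
Lemma \ref{lem:bound-a-a*-Phi-dGamma} gives $\|\phi^{(q)}(u_{q,t})\varphi\|\le 2\|(\mathcal{N}^{(q)}+1)^{1/2}\varphi\|$, and AM--GM then yields the form bound $\pm\sqrt{N_q}\phi^{(q)}(u_{q,t})\le C(\mathcal{N}^{(q)}+N_q+1)$; summing over $q$ gives $\pm P_t\le C(\mathcal{N}_{\mathrm{total}}+N+p)$, hence $\mathcal{W}_t\mathcal{N}_{\mathrm{total}}\mathcal{W}_t^{*}\le C(\mathcal{N}_{\mathrm{total}}+N+p)$. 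Using $[\mathcal{H}_{\mathbf{N}},\mathcal{N}^{(q)}]=0$ to transfer the bound from $\eta$ to $\mathcal{W}_s\psi$, and applying the mirror identity $\mathcal{W}_s^{*}\mathcal{N}^{(q)}\mathcal{W}_s=\mathcal{N}^{(q)}+N_q+\sqrt{N_q}\phi^{(q)}(u_{q,s})$ with the same AM--GM estimate, gives \eqref{eq:UstarNtotU} after absorbing the residual factors of $p$ into $p^{Cp}$.

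For the higher-power bounds I would expand $\mathcal{W}_t\mathcal{N}_{\mathrm{total}}^{k}\mathcal{W}_t^{*}=((\mathcal{N}_{\mathrm{total}}+N)-P_t)^{k}$ into monomials in $\{(\mathcal{N}_{\mathrm{total}}+N),P_t\}$, repeatedly pushing every $P_t$ past $(\mathcal{N}_{\mathrm{total}}+N)$-factors via the commutator $[\mathcal{N}_{\mathrm{total}}+N,P_t]=\sum_{q}\sqrt{N_q}(a^{(q)*}(u_{q,t})-a^{(q)}(u_{q,t}))$, which is itself a first-order field operator dominated by $\sqrt{N(\mathcal{N}_{\mathrm{total}}+1)}$ through Lemma \ref{lem:bound-a-a*-Phi-dGamma}. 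After bounding every resulting monomial, the even case yields $\mathcal{W}_t\mathcal{N}_{\mathrm{total}}^{2j}\mathcal{W}_t^{*}\le C_{j}p^{Cp}(\mathcal{N}_{\mathrm{total}}+N+p)^{2j}$, and after the same transfer through $\mathcal{H}_{\mathbf{N}}$ and $\mathcal{W}_s^{*}\cdot\mathcal{W}_s$ one gets \eqref{eq:UstartNtotsaqurtejU}. The odd case \eqref{eq:UstartNtotsaqurtej+1U} follows from the analogous expansion, which has one unpaired $P_t$ factor contributing $\sqrt{N}\cdot(\mathcal{N}_{\mathrm{total}}+1)^{1/2}$; the $\sqrt{N}$ is absorbed into the $(2j+1)$-st power of $(\mathcal{N}_{\mathrm{total}}+N)$ instead of producing a further $(\mathcal{N}_{\mathrm{total}}+N)$ factor, yielding the asymmetric weight $(\mathcal{N}_{\mathrm{total}}+N)^{2j+1}(\mathcal{N}_{\mathrm{total}}+1)$ rather than the cruder $(\mathcal{N}_{\mathrm{total}}+N)^{2j+2}$ that would arise from a naive Cauchy--Schwarz interpolation between the $2j$- and $(2j+2)$-th even bounds.

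The principal difficulty is the combinatorial bookkeeping: after all commutators are resolved, one must verify that each of the $O(2^{k}p^{k})$ resulting monomials is uniformly controlled in the appropriate weight (symmetric in the even case, asymmetric in the odd case), with no spurious factors of $N$, and that the total count of monomials can be absorbed into the prefactor $p^{Cp}$. The sharper $(\mathcal{N}_{\mathrm{total}}+1)$ factor in the odd case, distinguished from $(\mathcal{N}_{\mathrm{total}}+N+1)$, is the most delicate point, as it requires tracking exactly where the unpaired field-operator contribution lands in the expansion and preventing a full extra power of $N$ from appearing.
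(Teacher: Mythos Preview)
Your overall strategy---unwind $\mathcal{U}(t,s)$ via the Weyl operators, use $[\mathcal{H}_{\mathbf{N}},\mathcal{N}^{(q)}]=0$, and reduce to a quadratic-form bound on $(\mathcal{N}_{\mathrm{total}}+N)-P_t$ raised to a power---is exactly the paper's starting point, and your treatment of the $j=1$ case matches the paper's. The execution for higher powers differs. For the even case the paper does not expand into monomials; instead it sets $Z_{t,s}:=\mathcal{N}_{\mathrm{total}}+e^{i(t-s)\mathcal{H}_{\mathbf{N}}}P_t\,e^{-i(t-s)\mathcal{H}_{\mathbf{N}}}+N$ and proves by induction the pair of operator inequalities $Z_{t,s}^{j-1}(\mathcal{N}_{\mathrm{total}}+N)^{2}Z_{t,s}^{j-1}\le C_j(\mathcal{N}_{\mathrm{total}}+N)^{2j}$ and $Z_{t,s}^{2j}\le D_j(\mathcal{N}_{\mathrm{total}}+N)^{2j}$, using the commutator expansion $[Z^{k-1},\mathcal{N}_{\mathrm{total}}]=\sum_{m}\binom{k-1}{m}Z^{m}\mathrm{ad}_{Z}^{k-1-m}(\mathcal{N}_{\mathrm{total}})$ together with a uniform bound on the iterated adjoints. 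This inductive scheme avoids your $O(2^{k}p^{k})$ monomial bookkeeping entirely; your approach should also work, but the paper's is cleaner.

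On the odd case you have it backwards. You dismiss ``naive Cauchy--Schwarz interpolation'' as yielding only the crude $(\mathcal{N}_{\mathrm{total}}+N)^{2j+2}$ bound, but the paper obtains \eqref{eq:UstartNtotsaqurtej+1U} precisely by a Cauchy--Schwarz interpolation between the even bounds---just weighted by $\sqrt{N}$: writing $\mathcal{N}_{\mathrm{total}}^{2j+1}=\big(N^{-1/2}\mathcal{N}_{\mathrm{total}}^{j+1}\big)\big(N^{1/2}\mathcal{N}_{\mathrm{total}}^{j}\big)$ gives
\[
\langle\mathcal{U}\psi,\mathcal{N}_{\mathrm{total}}^{2j+1}\mathcal{U}\psi\rangle\le N^{-1}\langle\mathcal{U}\psi,\mathcal{N}_{\mathrm{total}}^{2j+2}\mathcal{U}\psi\rangle+N\langle\mathcal{U}\psi,\mathcal{N}_{\mathrm{total}}^{2j}\mathcal{U}\psi\rangle,
\]
and then $N^{-1}(\mathcal{N}_{\mathrm{total}}+N)^{2j+2}+N(\mathcal{N}_{\mathrm{total}}+N)^{2j}\le C(\mathcal{N}_{\mathrm{total}}+N)^{2j+1}(\mathcal{N}_{\mathrm{total}}+1)$ since $(\mathcal{N}_{\mathrm{total}}+N)/N\le \mathcal{N}_{\mathrm{total}}+1$ and $N\le \mathcal{N}_{\mathrm{total}}+N$. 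So the asymmetric weight falls out of two lines; your proposed direct tracking of the ``unpaired $P_t$'' is unnecessary.
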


\begin{proof}
Using Lemma \ref{lem:weyl}, one may calculate
\[
\mathcal{U}^{*}(t;s)\left(\mathcal{E}^{(q)}\mathcal{N}\right)\mathcal{U}^{*}(t;s)=\mathcal{W}_{s}^{*}((\mathcal{E}^{(q)}\mathcal{N})-\sqrt{N_{q}}\,e^{\mathrm{i}(t-s)\mathcal{H}_{\mathbf{N}}}\,\phi^{(q)}(u_{q,t})\,e^{-\mathrm{i}(t-s)\mathcal{H}_{\mathbf{N}}}+N_{q})\mathcal{W}_{s}.
\]
Then using the Cauchy-Schwarz inequality, Lemma \ref{lem:dGammaQphiQ},
and Lemma \ref{lem:weyl}, we have
\[
\left\langle \psi,\mathcal{U}^{*}(t;s)\left(\mathcal{E}^{(q)}\mathcal{N}\right)\mathcal{U}^{*}(t;s)\psi\right\rangle _{\mathcal{F}^{\otimes p}}\leq C\left\langle \psi,\left(\mathcal{E}^{(q)}\mathcal{N}+N_{q}+1\right)\psi\right\rangle _{\mathcal{F}^{\otimes p}}.
\]
Therefore, we have 
\[
\left\langle \psi,\mathcal{U}(t,s)^{*}(\mathcal{N}_{\mathrm{total}})\mathcal{U}(t,s)\psi\right\rangle \leq C\left\langle \psi,(\mathcal{N}_{\mathrm{total}}+N+1)\psi\right\rangle 
\]
which is (\ref{eq:UstarNtotU}).

To show (\ref{eq:UstartNtotsaqurtejU}), we use induction on $j$.
Define
\[
Z_{t,s}:=\mathcal{N}_{\mathrm{total}}+e^{\mathrm{i}(t-s)\mathcal{H}_{\mathbf{N}}}\left(\sum_{q=1}^{p}\sqrt{N_{q}}\phi^{(q)}(u_{q,t})\right)e^{-\mathrm{i}(t-s)\mathcal{H}_{\mathbf{N}}}+N.
\]
Then
\[
\mathcal{U}(t,s)^{*}\mathcal{N}_{\mathrm{total}}\mathcal{U}(t,s)=\mathcal{W}_{s}^{*}Z_{t,s}\mathcal{W}_{s}.
\]
We argue that
\begin{equation}
Z_{t,s}^{2}\leq C(\mathcal{N}_{\mathrm{total}}+N)^{2}\label{eq:ind1}
\end{equation}
for some constant $C$. By Cauchy-Schwarz inequality, we have
\[
\left\langle \psi,\mathcal{Z}_{t,s}^{2}\psi\right\rangle \leq C(A+B)
\]
with
\begin{align*}
A & =\left\langle \psi,(\mathcal{N}_{\mathrm{total}}+N)^{2}\psi\right\rangle \\
B & =\left\langle \psi,e^{\mathrm{i}(t-s)\mathcal{H}_{\mathbf{N}}}\left(\sum_{q=1}^{p}\sqrt{N_{q}}\phi^{(q)}(u_{q,t})\right)^{2}e^{-\mathrm{i}(t-s)\mathcal{H}_{\mathbf{N}}}\psi\right\rangle .
\end{align*}
Using the Cauchy-Schwarz inequality and Lemma \ref{lem:dGammaQphiQ},
we obtain
\[
B\leq C\left\langle \psi,(\mathcal{N}_{\mathrm{total}}+N)^{2}\psi\right\rangle .
\]
These estimates for $A$ and $B$ prove our argument.

\begin{equation}
\mathrm{ad}_{Z}^{m}(\mathcal{N}_{\mathrm{total}})\mathrm{ad}_{Z}^{m}(\mathcal{N}_{\mathrm{total}})\leq C(\mathcal{N}_{\mathrm{total}}+N)^{2}\label{eq:ind2}
\end{equation}
for all $m\in\mathbb{N}$ by following proof is given by performing
straightforward calculation.

By induction, it follows that, for all $j\in\mathbb{N}$, there are
constants $C_{j}$ and $D_{j}$ such that 
\begin{equation}
Z_{t,s}^{j-1}(\mathcal{N}_{\mathrm{total}}+N)^{2}Z_{t,s}^{j-1}\le C_{j}(\mathcal{N}_{\mathrm{total}}+N)^{2j},\label{Z1}
\end{equation}
\begin{equation}
Z_{t,s}^{2j}\le D_{j}(\mathcal{N}_{\mathrm{total}}+N)^{2j}.\label{Z2}
\end{equation}
In fact, this is already proved for $j=1$ by (\ref{eq:ind1}) and
(\ref{eq:ind2}). Suppose that (\ref{Z1}) and (\ref{Z2}) hold true
for all $j<k$. We will prove the estimates for $j=k$.

For (\ref{Z1}), we have 
\[
\begin{split} & Z_{t,s}^{k-1}(\mathcal{N}_{\mathrm{total}}+N)^{2}Z_{t,s}^{k-1}\\
 & =(\mathcal{N}_{\mathrm{total}}+N)Z_{t,s}^{k-1}(\mathcal{N}_{\mathrm{total}}+N)Z_{t,s}^{k-1}\\
 & \quad+[Z_{t,s}^{k-1},\mathcal{N}_{\mathrm{total}}+N](\mathcal{N}_{\mathrm{total}}+N)Z_{t,s}^{k-1}\\
 & \le Cp^{Cp}\Big((\mathcal{N}_{\mathrm{total}}+N)Z_{t,s}^{2k-2}(\mathcal{N}_{\mathrm{total}}+N)\\
 & \quad+Z_{t,s}^{k-1}(\mathcal{N}_{\mathrm{total}}+N)^{2}Z_{t,s}^{k-1}\\
 & \quad+[Z_{t,s}^{k-1},\mathcal{N}_{\mathrm{total}}+N]\,[Z_{t,s}^{k-1},\mathcal{N}_{\mathrm{total}}+N]^{*}\Big).
\end{split}
\]
Thus 
\[
\begin{split} & Z_{t,s}^{k-1}(\mathcal{N}_{\mathrm{total}}+N)^{2}Z_{t,s}^{k-1}\\
 & \le Cp^{Cp}\Big((\mathcal{N}_{\mathrm{total}}+N)Z_{t,s}^{2k-2}(\mathcal{N}_{\mathrm{total}}+N)\\
 & \quad+[Z_{t,s}^{k-1},\mathcal{N}_{\mathrm{total}}+N]\,[Z_{t,s}^{k-1},\mathcal{N}_{\mathrm{total}}+N]^{*}\Big)\\
 & \le C_{k}p^{Cp}\Big((\mathcal{N}_{\mathrm{total}}+N)^{2k},
\end{split}
\]
as desired. Here, we used the operator inequality 
\[
\begin{split} & [Z_{t,s}^{k-1},\mathcal{N}_{\mathrm{total}}+N]\,[Z_{t,s}^{k-1},\mathcal{N}_{\mathrm{total}}+N]^{*}\\
 & \le(\mathcal{N}_{\mathrm{total}}+N)^{2k},
\end{split}
\]
which follows using the commutator expansion 
\[
[Z_{t,s}^{k-1},\mathcal{N}_{\mathrm{total}}]=\sum_{m=0}^{k-2}\binom{k-1}{m}Z_{t,s}^{m}\mathrm{ad}_{Z_{t,s}}^{k-1-m}(\mathcal{N}_{\mathrm{total}}).
\]

For (\ref{Z2}), using (\ref{eq:ind1}) and (\ref{Z1}), we obtain
\[
\begin{split}Z_{t,s}^{2k} & =Z_{t,s}^{k-1}Z_{t,s}^{2}Z_{t,s}^{k-1}\\
 & \le Cp^{Cp}Z_{t,s}^{k-1}(\mathcal{N}_{\mathrm{total}}+N)^{2}Z_{t,s}^{k-1}\\
 & \le Cp^{Cp}C_{j}(\mathcal{N}_{\mathrm{total}}+N)^{2k},
\end{split}
\]
as desired. Therefore, we have proved (\ref{Z1}) and (\ref{Z2}).

Let us finish the proof of Lemma (\ref{lem:Weeakbounds}). First we
observe that, similarly as we estimated $Z_{t,s}^{2j}$, we can prove
that 
\[
(\mathcal{N}_{\mathrm{total}}+\sum_{q=1}^{p}\sqrt{N_{q}}(\phi^{(q)}(u_{q,t}))+2N)^{2j}\le C_{j}p^{2p}(\mathcal{N}_{\mathrm{total}}+N)^{2j}
\]
for all $j\in\mathbb{N}$. Hence 
\[
\begin{split} & \langle\psi,\mathcal{U}(t,s)^{*}(\mathcal{N}_{\mathrm{total}})^{2j}\mathcal{U}(t,s)\psi\rangle\\
 & =\langle\mathcal{W}_{t}\psi,Z_{t,s}^{2j}\mathcal{W}_{t}\psi\rangle\\
 & \le C_{j}p^{Cp}\langle\mathcal{W}_{t}\psi,(\mathcal{N}_{\mathrm{total}}+N)^{2j}\mathcal{W}_{t}\psi\rangle\\
 & =C_{j}p^{Cp}\langle\psi,(\mathcal{N}_{\mathrm{total}}+\sum_{q=1}^{p}\sqrt{N_{q}}(\phi^{(q)}(u_{t}))+2N)^{2j}\psi\rangle\\
 & \le C_{j}p^{Cp}\langle\psi,(\mathcal{N}_{\mathrm{total}}+N)^{2j}\psi\rangle.
\end{split}
\]
This proves (\ref{eq:UstartNtotsaqurtejU}).

Then, using Cauchy-Schwarz inequality, we have
\begin{align*}
 & \left\langle \psi,\mathcal{U}^{*}(t,s)(\mathcal{N}_{\mathrm{total}})^{2j+1}\mathcal{U}(t,s)\psi\right\rangle \\
 & =\left\langle \frac{(\mathcal{N}_{\mathrm{total}})^{j+1}}{\sqrt{N}}\mathcal{U}(t,s)\psi,\sqrt{N}(\mathcal{N}_{\mathrm{total}})^{j}\mathcal{U}(t,s)\psi\right\rangle \\
 & \leq\frac{1}{N}\left\langle \psi,\mathcal{U}^{*}(t,s)(\mathcal{N}_{\mathrm{total}})^{2j+2}\mathcal{U}(t,s)\psi\right\rangle +N\left\langle \psi,\mathcal{U}^{*}(t,s)(\mathcal{N}_{\mathrm{total}})^{2j}\mathcal{U}(t,s)\psi\right\rangle \\
 & \leq\frac{C_{j+1}p^{Cp}}{N}\left\langle \psi,\mathcal{U}^{*}(t,s)(\mathcal{N}_{\mathrm{total}}+N)^{2j+2}\mathcal{U}(t,s)\psi\right\rangle +NC_{j}\left\langle \psi,\mathcal{U}^{*}(t,s)(\mathcal{N}_{\mathrm{total}}+N)^{2j}\mathcal{U}(t,s)\psi\right\rangle \\
 & \leq C_{j}p^{Cp}\left\langle \psi,(\mathcal{N}_{\mathrm{total}}+N)^{2j+1}(\mathcal{N}_{\mathrm{total}}+1)\psi\right\rangle .
\end{align*}
Thus we get (\ref{eq:UstartNtotsaqurtej+1U}).
\end{proof}

\subsubsection*{Step 3. Comparison between $\mathcal{U}$ and $\mathcal{U}^{(M)}$
dynamics.}
\begin{lem}
\label{lem:Comparision-U-UM}For every $j\in\mathbb{N}$, there exists
constant $C_{j}$ and $K_{j}$ such that
\begin{equation}
\begin{aligned} & \left|\left\langle \mathcal{U}\left(t;s\right)\psi,\left(\mathcal{N}_{\mathrm{total}}\right)^{j}\left(\mathcal{U}\left(t;s\right)-\mathcal{U}^{(\mathbf{M})}\left(t;s\right)\right)\psi\right\rangle _{\mathcal{F}^{\otimes p}}\right|\\
 & \qquad\leq C_{j}p^{Cp}\left[\sum_{q=1}^{p}\left(\frac{N}{M_{q}}\right)^{j}\right]\left\Vert \left(\mathcal{N}_{\mathrm{total}}+p\mathcal{I}\right)^{j+1}\psi\right\Vert ^{2}\frac{\exp\left(K_{j}\left(1+\sum_{q=1}^{p}\sqrt{M_{q}/N_{q}}\right)|t-s|\right)}{1+\sum_{q=1}^{p}\sqrt{M_{q}/N_{q}}}
\end{aligned}
\label{eq:Comparison-U-UM-1}
\end{equation}
and
\begin{equation}
\begin{aligned} & \left|\left\langle \mathcal{U}^{(\mathbf{M})}\left(t;s\right)\psi,\left(\mathcal{N}_{\mathrm{total}}\right)^{j}\left(\mathcal{U}\left(t;s\right)-\mathcal{U}^{(\mathbf{M})}\left(t;s\right)\right)\psi\right\rangle _{\mathcal{F}^{\otimes p}}\right|\\
 & \qquad\leq C_{j}p^{Cp}\left[\sum_{q=1}^{p}\frac{1}{M_{q}^{j}}\right]\left\Vert \left(\mathcal{N}_{\mathrm{total}}+p\mathcal{I}\right)^{j+1}\psi\right\Vert ^{2}\frac{\exp\left(K_{j}\left(1+\sum_{q=1}^{p}\sqrt{M_{q}/N_{q}}\right)|t-s|\right)}{1+\sum_{q=1}^{p}\sqrt{M_{q}/N_{q}}}
\end{aligned}
\label{eq:Comparison-U-UM-2}
\end{equation}
for all $\psi\in\mathcal{F}^{\otimes p}$ and $t>s$.
\end{lem}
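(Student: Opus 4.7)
My plan is to derive the comparison via the Duhamel identity
\[
\mathcal{U}(t,s) - \mathcal{U}^{(\mathbf{M})}(t,s) \;=\; -\mathrm{i} \int_{s}^{t} \mathcal{U}(t,\tau)\bigl[\mathcal{L}(\tau)-\mathcal{L}^{(\mathbf{M})}(\tau)\bigr]\mathcal{U}^{(\mathbf{M})}(\tau,s)\,\mathrm{d}\tau,
\]
and then bound each factor separately. The generator difference is exactly $\mathcal{L}_3(\tau)-\mathcal{L}_3^{(\mathbf{M})}(\tau)$, and each term in it contains a factor $(I-\chi_q)=\mathcal{E}^{(q)}\chi(\mathcal{N}>M_q)$. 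The key point is that on the range of $(I-\chi_q)$ one has the trivial operator inequality
\[
(I-\chi_q)\;\le\;\bigl(\mathcal{E}^{(q)}\mathcal{N}/M_q\bigr)^{m}(I-\chi_q)
\]
for any $m\ge 0$, which is what will produce the gain $M_q^{-j}$ (or $(N/M_q)^j$) in the final estimate.

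First, I would insert a resolution $\bigl(\mathcal{N}_{\mathrm{total}}+p\mathcal{I}\bigr)^{j}$ between the two unitaries after Cauchy--Schwarz, so that I may separately control $\mathcal{U}(t,\tau)$ (or $\mathcal{U}^{(\mathbf{M})}(t,\tau)$) acting on states with number operator. For (\ref{eq:Comparison-U-UM-1}), the left factor $\mathcal{U}(t,s)\psi$ is handled by the weak bound of Lemma~\ref{lem:Weeakbounds}, giving powers of $(\mathcal{N}_{\mathrm{total}}+N)$; this is exactly why the factor $N$ appears inside $(N/M_q)^{j}$. For (\ref{eq:Comparison-U-UM-2}), the left factor is $\mathcal{U}^{(\mathbf{M})}(t,s)\psi$, controlled by Lemma~\ref{lem:NjU}, which does not pay the $N$ penalty and therefore yields $M_q^{-j}$ rather than $(N/M_q)^j$.

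Next, I would bound $\bigl(\mathcal{L}_3(\tau)-\mathcal{L}_3^{(\mathbf{M})}(\tau)\bigr)\mathcal{U}^{(\mathbf{M})}(\tau,s)\psi$. Each summand has the schematic form
\[
\tfrac{\sqrt{N_q}}{N}\!\int\!\mathrm{d}x\mathrm{d}y\,V_{qr}(x-y)\,a_y^{(r)*}\,u_{q,\tau}(x)\,(I-\chi_q)\,a_x^{(q)*}\,a_y^{(r)}
\]
(or its adjoint/intra-component analogue with the $1/\sqrt{N_q}$ prefactor). After inserting $(\mathcal{E}^{(q)}\mathcal{N}/M_q)^j$ through $(I-\chi_q)$, using the operator inequality $V_{qr}^2\le K(1-\Delta)$ as in the proof of Lemma~\ref{lem:Truncation}, and applying Lemma~\ref{lem:dGammaQphiQ} together with Lemma~\ref{lem:relbN}, I arrive at a pointwise-in-$\tau$ bound of the form
\[
\Bigl\|(\mathcal{N}_{\mathrm{total}}+p\mathcal{I})^{j/2}\bigl(\mathcal{L}_3-\mathcal{L}_3^{(\mathbf{M})}\bigr)\mathcal{U}^{(\mathbf{M})}(\tau,s)\psi\Bigr\|_{\mathcal{F}^{\otimes p}}^{2}
\;\le\; C_j p^{Cp}\Bigl(\sum_{q=1}^{p}M_q^{-j}\Bigr)\,\bigl\langle\mathcal{U}^{(\mathbf{M})}(\tau,s)\psi,(\mathcal{N}_{\mathrm{total}}+p\mathcal{I})^{2j+2}\mathcal{U}^{(\mathbf{M})}(\tau,s)\psi\bigr\rangle,
\]
and Lemma~\ref{lem:NjU} absorbs the right-hand side into $\|(\mathcal{N}_{\mathrm{total}}+p\mathcal{I})^{j+1}\psi\|^{2}$ at the cost of the exponential factor $\exp(K_j(1+\sum_q\sqrt{M_q/N_q})|t-s|)$. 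Integrating over $\tau\in[s,t]$ and noticing that the $\tau$-integral contributes the reciprocal $(1+\sum_q\sqrt{M_q/N_q})^{-1}$ upon integrating the exponential growth yields exactly (\ref{eq:Comparison-U-UM-2}). Estimate (\ref{eq:Comparison-U-UM-1}) follows in parallel, but now Lemma~\ref{lem:Weeakbounds} must be used for the left unitary, converting one $\mathcal{N}_{\mathrm{total}}$ factor into $\mathcal{N}_{\mathrm{total}}+N$, which produces the extra $N^j$ and hence the factor $(N/M_q)^j$.

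The main obstacle I anticipate is bookkeeping: the generator difference $\mathcal{L}_3-\mathcal{L}_3^{(\mathbf{M})}$ contains cubic products in creation/annihilation operators and intertwines the $q$-th and $r$-th Fock factors, so commuting $(I-\chi_q)$ past these operators to isolate the $(\mathcal{N}^{(q)}/M_q)^j$ gain must be done carefully (shifts of $\mathcal{N}^{(q)}$ by $\pm 1$ from each creation/annihilation must be tracked, as in the commutator expansions used in Lemma~\ref{lem:Truncation}). Once these commutations are organized, the two estimates follow by combining Lemmas~\ref{lem:relbN}, \ref{lem:dGammaQphiQ}, \ref{lem:NjU}, and \ref{lem:Weeakbounds} in a Gr\"onwall-type estimate for the integral in $\tau$.
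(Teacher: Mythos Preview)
Your approach is essentially the same as the paper's: Duhamel, identify $\mathcal{L}-\mathcal{L}^{(\mathbf{M})}=\mathcal{L}_3-\mathcal{L}_3^{(\mathbf{M})}$ with the complementary cutoffs $\chi_q^{c}=\mathcal{E}^{(q)}\chi(\mathcal{N}>M_q)$, exploit $\chi_q^{c}\le(\mathcal{E}^{(q)}\mathcal{N}/M_q)^{m}\chi_q^{c}$ to extract the $M_q^{-j}$ gain, and then control the two unitary factors separately before integrating in $\tau$.

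One point to correct: you repeatedly invoke Lemma~\ref{lem:NjU} to control the $\mathcal{U}^{(\mathbf{M})}$ factor, but Lemma~\ref{lem:NjU} concerns $\mathcal{U}$, not $\mathcal{U}^{(\mathbf{M})}$, and more importantly its proof \emph{uses} the present lemma, so citing it here is circular. What you actually need (and what you describe in words) is Lemma~\ref{lem:Truncation} for the $\mathcal{U}^{(\mathbf{M})}(\tau,s)\psi$ factor---that is where the exponential $\exp\bigl(K_j(1+\sum_q\sqrt{M_q/N_q})|t-s|\bigr)$ originates---and Lemma~\ref{lem:Weeakbounds} for the $\mathcal{U}(t,s)\psi$ factor in (\ref{eq:Comparison-U-UM-1}), which is what converts one $\mathcal{N}_{\mathrm{total}}$ into $\mathcal{N}_{\mathrm{total}}+N$ and produces the $(N/M_q)^{j}$. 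With this relabeling your sketch matches the paper's argument. (The paper's own proof also lists Lemma~\ref{lem:NjU} among the inputs, which appears to be a slip; only Lemmas~\ref{lem:Truncation} and~\ref{lem:Weeakbounds} are logically available at this stage.)
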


\begin{proof}
The proof of this lemma is also followed by Lemma 5.4 in \cite{1811.04984}.
To prove (\ref{eq:Comparison-U-UM-1}), we use
\[
\mathcal{U}\left(t;s\right)-\mathcal{U}^{(\mathbf{M})}\left(t;s\right)=-\mathrm{i}\int_{s}^{t}\mathrm{d}r\,\mathcal{U}\left(t;s\right)\left(\mathcal{L}(r)-\mathcal{L}^{(\mathbf{M})}(r)\right)\mathcal{U}^{(\mathbf{M})}\left(t;s\right).
\]
Note that
\begin{align*}
 & \mathcal{L}(t)-\mathcal{L}^{(\mathbf{M})}(t)\\
 & =\sum_{q=1}^{p}\Big(\frac{1}{\sqrt{N_{q}}}\int\mathrm{d}x\mathrm{d}z\,V_{qq}(x-z)a_{x}^{(q)*}(u_{q,t}(z)\chi_{q}^{c}a_{z}^{(q)*}+\overline{u_{q,t}}(z)a_{z}^{(q)}\chi_{q}^{c})a_{x}^{(r)}\Big)\\
 & \quad+\sum_{q,r=1}^{p}\Big(\frac{\sqrt{N_{q}}}{N}\int\mathrm{d}x\mathrm{d}y\,V_{qr}(x-y)a_{y}^{(r)*}(u_{q,t}(x)\chi_{q}^{c}a_{x}^{(q)*}+\overline{u_{q,t}}(x)a_{x}^{(q)}\chi_{q}^{c})a_{y}^{(r)}\Big)
\end{align*}
where $\chi_{q}^{c}=\mathcal{E}^{(q)}\chi(\mathcal{N}>M_{q}).$

Using these formulae, we write
\[
\left\langle \mathcal{U}\left(t;s\right)\psi,\left(\mathcal{N}_{\mathrm{total}}\right)^{j}\left(\mathcal{U}\left(t;s\right)-\mathcal{U}^{(\mathbf{M})}\left(t;s\right)\right)\psi\right\rangle _{\mathcal{F}^{\otimes p}}=\sum_{q=1}^{p}(w_{1,q}+w_{2,q})+\sum_{q,r=1}^{p}(w_{3,q,r}+w_{4,q,r})
\]
where
\begin{align*}
w_{1,q} & =\frac{-\mathrm{i}}{\sqrt{N_{q}}}\int_{s}^{t}\mathrm{d}r\int\mathrm{d}x\langle a_{x}^{(q)}\mathcal{U}^{*}\left(t;s\right)(\mathcal{N}_{\mathrm{total}})^{j}\,\mathcal{U}\left(t;s\right)\psi,\\
 & \qquad\qquad\qquad\qquad\qquad\qquad a^{(q)}(V_{qq}(x-\cdot)u_{q,t}(\cdot))qa_{x}^{(q)}\mathcal{U}^{(\mathbf{M})}\left(t;s\right)\psi\rangle,\\
w_{2,q} & =\frac{-\mathrm{i}}{\sqrt{N_{q}}}\int_{s}^{t}\mathrm{d}r\int\mathrm{d}x\langle a_{x}^{(q)}\mathcal{U}^{*}\left(t;s\right)(\mathcal{N}_{\mathrm{total}})^{j}\,\mathcal{U}\left(t;s\right)\psi,\\
 & \qquad\qquad\qquad\qquad\qquad\qquad\chi_{q}^{c}a^{(q)*}(V_{qq}(x-\cdot)u_{q,t}(\cdot))a_{x}^{(q)}\mathcal{U}^{(\mathbf{M})}\left(t;s\right)\psi\rangle,\\
w_{3,q,r} & =\frac{-\mathrm{i}\sqrt{N_{q}}}{N}.\int_{s}^{t}\mathrm{d}r\int\mathrm{d}y\langle a_{y}^{(r)}\mathcal{U}^{*}\left(t;s\right)(\mathcal{N}_{\mathrm{total}})^{j}\,\mathcal{U}\left(t;s\right)\psi,\\
 & \qquad\qquad\qquad\qquad a^{(r)}(V_{qr}(y-\cdot)u_{r,t}(\cdot))\chi_{r}^{c}a_{y}^{(q)}\mathcal{U}^{(\mathbf{M})}\left(t;s\right)\psi\rangle,\\
w_{4,q,r} & =\frac{-\mathrm{i}\sqrt{N_{q}}}{N}\int_{s}^{t}\mathrm{d}r\int\mathrm{d}y\langle a_{y}^{(q)}\,\mathcal{U}^{*}\left(t;s\right)(\mathcal{N}_{\mathrm{total}})^{j}\,\mathcal{U}\left(t;s\right)\psi,\\
 & \qquad\qquad\qquad\qquad\chi_{r}^{c}a^{(r)*}(V_{qr}(y-\cdot)u_{r,t}(\cdot))a_{y}^{(q)}\mathcal{U}^{(\mathbf{M})}\left(t;s\right)\psi\rangle.
\end{align*}

Now we bound each $w_{1,q}$, $w_{2,q}$, $w_{3,q,r}$, and $w_{4,q.r}$.
First,
\begin{align*}
 & |w_{1,q}|\\
 & \leq\frac{1}{\sqrt{N_{q}}}\int_{s}^{t}\mathrm{d}r\int\mathrm{d}x\left\Vert a_{x}^{(q)}\mathcal{U}^{*}\left(t;s\right)(\mathcal{N}_{\mathrm{total}})^{j}\,\mathcal{U}\left(t;s\right)\psi\right\Vert _{\mathcal{F}^{\otimes p}}\\
 & \qquad\qquad\times\left\Vert a_{x}^{(q)}(V_{qq}(x-\cdot)u_{q,t}(\cdot))\chi_{q}^{c}a_{x}^{(q)}\mathcal{U}^{(\mathbf{M})}\left(t;s\right)\psi\right\Vert _{\mathcal{F}^{\otimes p}}\\
 & \leq\frac{1}{\sqrt{N_{q}}}\sup_{x}\|V_{qq}(x-\cdot)u_{q,t}(\cdot)\|_{2}\int_{s}^{t}\mathrm{d}r\int\mathrm{d}x\left\Vert a_{x}^{(q)}\mathcal{U}^{*}\left(t;s\right)(\mathcal{N}_{\mathrm{total}})^{j}\,\mathcal{U}\left(t;s\right)\psi\right\Vert _{\mathcal{F}^{\otimes p}}\\
 & \qquad\qquad\times\left\Vert a_{x}^{(q)}\left(\mathcal{E}^{(q)}(\mathcal{N}-1)^{1/2}\right)\chi_{q}^{c}(M_{q}+1)\mathcal{U}^{(\mathbf{M})}\left(t;s\right)\psi\right\Vert _{\mathcal{F}^{\otimes p}}\\
 & \leq\frac{C}{\sqrt{N_{q}}}\int_{s}^{t}\mathrm{d}r\left\Vert \left(\mathcal{E}^{(q)}\mathcal{N}^{1/2}\right)\mathcal{U}^{*}\left(t;s\right)(\mathcal{N}_{\mathrm{total}})^{j}\,\mathcal{U}\left(t;s\right)\psi\right\Vert _{\mathcal{F}^{\otimes p}}\left\Vert \left(\mathcal{E}^{(q)}\mathcal{N}\right)\chi_{q}^{c}\mathcal{U}^{(\mathbf{M})}\left(t;s\right)\psi\right\Vert _{\mathcal{F}^{\otimes p}}.
\end{align*}
Using Lemma \ref{lem:NjU}, \ref{lem:Truncation}, and \ref{lem:Weeakbounds},
one may have
\begin{align*}
 & |w_{1,q}|\\
 & \leq C_{j}\left(\frac{N}{M_{q}}\right)^{j}\left(\frac{N}{N_{q}}\right)^{1/2}\|(\mathcal{N}_{\mathrm{total}}+p\mathcal{I})^{j+1}\psi\|\int_{s}^{t}\mathrm{d}r\left\langle \mathcal{U}^{(\mathbf{M})}\left(t;s\right)\psi,(\mathcal{N}_{\mathrm{total}})^{2j+2}\mathcal{U}^{(\mathbf{M})}\left(t;s\right)\psi\right\rangle \\
 & \leq C_{j}\left(\frac{N}{M_{q}}\right)^{j}\|(\mathcal{N}_{\mathrm{total}}+p\mathcal{I})^{j+1}\psi\|^{2}\int_{s}^{t}\mathrm{d}r\,\exp\left(K_{j}\left(1+\sum_{q=1}^{p}\sqrt{M_{q}/N_{q}}\right)(r-s)\right)\\
 & \leq C_{j}\left(\frac{N}{M_{q}}\right)^{j}\|(\mathcal{N}_{\mathrm{total}}+p\mathcal{I})^{j+1}\psi\|^{2}\frac{\exp\left(K_{j}\left(1+\sum_{q=1}^{p}\sqrt{M_{q}/N_{q}}\right)(t-s)\right)}{1+\sum_{q=1}^{p}\sqrt{M_{q}/N_{q}}}
\end{align*}

Similarly, one can estimate $w_{2,q,r}$, $w_{3,q,r}$, and $w_{4,q,r}$
and obtain similar bounds. This proves (\ref{eq:Comparison-U-UM-1}).
One can also obtain (\ref{eq:Comparison-U-UM-2}) using similar technique
as in the proof of (\ref{eq:Comparison-U-UM-1}).
\end{proof}
\begin{proof}[Proof of Lemma \ref{lem:NjU}]
 From Lemmas \ref{lem:Truncation}-\ref{lem:Comparision-U-UM} with
the choice $\mathbf{M}=\mathbf{N}$,
\begin{align*}
\left\langle \mathcal{U}\left(t;s\right)\psi,\left(\mathcal{N}_{\mathrm{total}}\right)^{j}\mathcal{U}\left(t;s\right)\psi\right\rangle _{\mathcal{F}^{\otimes p}} & =\left\langle \mathcal{U}\left(t;s\right)\psi,\left(\mathcal{N}_{\mathrm{total}}\right)^{j}\left(\mathcal{U}\left(t;s\right)-\mathcal{U}^{(\mathbf{M})}\left(t;s\right)\right)\psi\right\rangle _{\mathcal{F}^{\otimes p}}\\
 & \qquad+\left\langle \left(\mathcal{U}\left(t;s\right)-\mathcal{U}^{(\mathbf{M})}\left(t;s\right)\right)\psi,\left(\mathcal{N}_{\mathrm{total}}\right)^{j}\mathcal{U}\left(t;s\right)\psi\right\rangle _{\mathcal{F}^{\otimes p}}\\
 & \qquad+\left\langle \mathcal{U}^{(\mathbf{M})}\left(t;s\right)\psi,\left(\mathcal{N}_{\mathrm{total}}\right)^{j}\mathcal{U}^{(\mathbf{M})}\left(t;s\right)\psi\right\rangle _{\mathcal{F}^{\otimes p}}\\
 & \leq Cp^{Cp}e^{Kt}\left\langle \psi,\left(\mathcal{N}_{\mathrm{total}}+p\mathcal{I}\right)^{2j+2}\psi\right\rangle _{\mathcal{F}^{\otimes p}}.
\end{align*}
This proves the desired lemma.
\end{proof}

\subsection{Comparison dynamics for others}
\begin{lem}
\label{lem:tildeNj} Suppose that the assumptions in Theorem \ref{thm:Main_Theorem}
hold. Let $\widetilde{\mathcal{U}}$ be the unitary operator defined
in (\ref{eq:def_mathcaltildeU}). Then, for any $\psi\in\mathcal{F}$
and $j\in\mathbb{N}$, there exist constants $C\equiv C(j)$ and $K\equiv K(j)$
such that 
\[
\left\langle \widetilde{\mathcal{U}}\left(t;s\right)\psi,\left(\mathcal{N}_{\mathrm{total}}\right)^{j}\widetilde{\mathcal{U}}\left(t;s\right)\psi\right\rangle _{\mathcal{F}^{\otimes p}}\leq Cp^{Cp}e^{K|t-s|}\left\langle \psi,\left(\mathcal{N}_{\mathrm{total}}+p\mathcal{I}\right)^{j}\psi\right\rangle _{\mathcal{F}^{\otimes p}}.
\]
\end{lem}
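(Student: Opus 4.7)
The plan is to derive a Grönwall-type differential inequality for the quantity
$$F(t) := \left\langle \widetilde{\mathcal{U}}(t;s)\psi,\,\mathcal{N}_{\mathrm{total}}^{j}\,\widetilde{\mathcal{U}}(t;s)\psi\right\rangle_{\mathcal{F}^{\otimes p}},$$
following the same scheme as Lemma \ref{lem:NjU}, but avoiding the $\mathcal{L}_{3}$--truncation machinery since the generator $\widetilde{\mathcal{L}}=\mathcal{L}_{0}+\mathcal{L}_{2}+\mathcal{L}_{4}$ contains none of the terms with dangerous $N^{-1/2}$ prefactors acting on too many particles.

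First I would differentiate $F$ in $t$ and use $\mathrm{i}\partial_{t}\widetilde{\mathcal{U}}=\widetilde{\mathcal{L}}\widetilde{\mathcal{U}}$ to write $\dot{F}(t)=\mathrm{i}\bigl\langle \widetilde{\mathcal{U}}\psi,[\widetilde{\mathcal{L}}(t),\mathcal{N}_{\mathrm{total}}^{j}]\,\widetilde{\mathcal{U}}\psi\bigr\rangle$. The key structural observation is that $\mathcal{L}_{4}$ conserves the number of particles in every component, and all terms of $\mathcal{L}_{2}$ except the pair-creation/annihilation pieces $\frac{1}{2}\int V_{qq}(x-z)u_{q,t}(z)u_{q,t}(x)\,a_{x}^{(q)*}a_{z}^{(q)*}$ and their hermitian conjugates do likewise. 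Moreover $\mathcal{L}_{0}(t)=0$ under our assumption $c_{qr}=N_{r}/N$ (see Remark \ref{rem:|NrN-Cqr|}). Hence $[\widetilde{\mathcal{L}}(t),\mathcal{N}_{\mathrm{total}}^{j}]$ reduces to commutators of $\mathcal{N}_{\mathrm{total}}^{j}$ with the pair operators only.

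For the pair terms, I would use $[\mathcal{N}_{\mathrm{total}},a_{x}^{(q)*}a_{z}^{(q)*}]=2\,a_{x}^{(q)*}a_{z}^{(q)*}$, iterate to expand $[\mathcal{N}_{\mathrm{total}}^{j},a_{x}^{(q)*}a_{z}^{(q)*}]$ as a finite sum $\sum_{k<j}c_{j,k}\,(\mathcal{N}_{\mathrm{total}}+c)^{k}\,a_{x}^{(q)*}a_{z}^{(q)*}$, and then estimate each resulting scalar product by Cauchy--Schwarz, pulling one $a^{(q)*}$ onto the left factor. Using Lemma \ref{lem:dGammaQphiQ} (or directly Lemma \ref{lem:relbN}), the hypothesis $V_{qq}^{2}\leq K(1-\Delta)$, and the uniform-in-$t$ bound on $\|u_{q,t}\|_{H^{1}}$ coming from the conserved Hartree energy, the pair operators $\int\mathrm{d}x\mathrm{d}z\,V_{qq}(x-z)u_{q,t}(z)u_{q,t}(x)a_{x}^{(q)*}a_{z}^{(q)*}$ are controlled by $\mathcal{E}^{(q)}(\mathcal{N}+1)$. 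Summing over the $p$ components contributes a combinatorial factor of the form $p^{Cp}$, which is exactly the power we need to absorb into the prefactor. The outcome is a pointwise bound
$$|\dot{F}(t)|\leq K\,p^{Cp}\,\bigl\langle \widetilde{\mathcal{U}}(t;s)\psi,\,(\mathcal{N}_{\mathrm{total}}+p\mathcal{I})^{j}\,\widetilde{\mathcal{U}}(t;s)\psi\bigr\rangle.$$

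The main (only) nuisance is bookkeeping: the commutator expansion produces $\mathcal{N}^{k}$'s both sides of the pair operator with shifted arguments ($\mathcal{N}+2$, $\mathcal{N}+3$, etc.), exactly as in the proof of Lemma \ref{lem:Truncation}, so the dominant term still matches $F(t)$ itself up to lower-order corrections which one can absorb into the same inequality by Young's inequality. Once this is done, Grönwall's inequality applied to $F(t)+p^{Cp}\langle \widetilde{\mathcal{U}}\psi,\psi\rangle$ yields $F(t)\leq Cp^{Cp}e^{K|t-s|}\langle\psi,(\mathcal{N}_{\mathrm{total}}+p\mathcal{I})^{j}\psi\rangle$, which is the claim. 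In contrast to Lemma \ref{lem:NjU}, no truncation at level $\mathbf{M}$ is required, because the pair terms are bounded by $\mathcal{N}$ uniformly in $N$, not by $\sqrt{M/N}\,\mathcal{N}$.
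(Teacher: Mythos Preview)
Your overall plan---differentiate, observe that only the pair-creation/annihilation pieces of $\mathcal{L}_2$ fail to commute with powers of the number operator, estimate those via Cauchy--Schwarz and the hypothesis $V_{qq}^2\le K(1-\Delta)$, then apply Gr\"onwall---is exactly the scheme the paper follows.

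There is, however, one point where your bookkeeping diverges from the paper's and where your argument as written does not close. You work directly with $\mathcal{N}_{\mathrm{total}}^{\,j}$ and assert that ``summing over the $p$ components contributes a combinatorial factor of the form $p^{Cp}$ \dots\ absorb[ed] into the prefactor.'' But that sum over $q$ sits \emph{inside} the differential inequality $|\dot F(t)|\le(\cdots)F(t)$, so any $p$-dependent constant there enters the Gr\"onwall \emph{exponent}, not the prefactor: from $|\dot F|\le K p^{Cp}F$ you get $F(t)\le e^{Kp^{Cp}|t-s|}F(s)$, not $p^{Cp}e^{K|t-s|}F(s)$. Hence the stated conclusion with $K=K(j)$ independent of $p$ does not follow from your displayed bound.

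The paper sidesteps this by first expanding $(\mathcal{N}_{\mathrm{total}}+p\mathcal{I})^{j}=\bigl(\sum_{q}\mathcal{E}^{(q)}(\mathcal{N}+1)\bigr)^{j}$ multinomially into products $\prod_{q}\mathcal{E}^{(q)}(\mathcal{N}+1)^{j_q}$ and running Gr\"onwall separately for each multi-index $(j_1,\dots,j_p)$. Since $[\mathcal{L}_2^{q},\prod_r\mathcal{E}^{(r)}(\mathcal{N}+1)^{j_r}]=0$ whenever $j_q=0$, for a fixed multi-index at most $j$ of the $p$ components contribute to the derivative, so the Gr\"onwall rate depends only on $j$. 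The $p$-dependence then enters solely through the multinomial recombination at the end, which is where the $p^{Cp}$ prefactor legitimately appears. Your direct approach can be repaired along similar lines---for instance by keeping the factors $\mathcal{E}^{(q)}(\mathcal{N}+1)$ explicit after Cauchy--Schwarz and using $\sum_q\mathcal{E}^{(q)}(\mathcal{N}+1)=\mathcal{N}_{\mathrm{total}}+p\mathcal{I}$ before dominating by $(\mathcal{N}_{\mathrm{total}}+p\mathcal{I})$---but this step must be made explicit rather than waved into the prefactor.
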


\begin{proof}
This will follow and use the proof of \cite{ChenLee2011}.

Let $\widetilde{\psi}=\widetilde{\mathcal{U}}(t;s)\psi$. We have
\begin{align*}
\frac{\mathrm{d}}{\mathrm{d}t}\left\langle \widetilde{\psi},\mathcal{N}_{\mathrm{total}}^{j}\widetilde{\psi}\right\rangle _{\mathcal{F}^{\otimes p}} & =\left\langle \widetilde{\psi},[\mathrm{i}(\mathcal{L}_{0}(t)+\mathcal{L}_{2}(t)+\mathcal{L}_{4}(t)),\mathcal{N}_{\mathrm{total}}^{j}]\widetilde{\psi}\right\rangle _{\mathcal{F}^{\otimes p}}\\
 & =\left\langle \widetilde{\psi},[\mathrm{i}(\mathcal{L}_{2}^{1}(t)+\mathcal{L}_{4}^{1}(t)),\mathcal{N}_{\mathrm{total}}^{j}]\widetilde{\psi}\right\rangle _{\mathcal{F}^{\otimes p}}\\
 & \qquad+\left\langle \widetilde{\psi},[\mathrm{i}(\mathcal{L}_{2}^{2}(t)+\mathcal{L}_{4}^{2}(t)),\mathcal{N}_{\mathrm{total}}^{j}]\widetilde{\psi}\right\rangle _{\mathcal{F}^{\otimes p}}\\
 & \qquad+\left\langle \widetilde{\psi},[\mathrm{i}(\mathcal{L}_{2}^{\mathrm{cross}}(t)+\mathcal{L}_{4}^{\mathrm{cross}}(t)),\mathcal{N}_{\mathrm{total}}^{j}\widetilde{\psi}\right\rangle _{\mathcal{F}^{\otimes p}}.
\end{align*}
To bound the list three terms, we investigate it further. It is enough
to bound
\begin{equation}
\begin{aligned}\frac{\mathrm{d}}{\mathrm{d}t}\left\langle \widetilde{\psi},(\mathcal{N}^{\mathfrak{J}})\widetilde{\psi}\right\rangle _{\mathcal{F}^{\otimes p}} & =\left\langle \widetilde{\psi},[\mathrm{i}(\mathcal{L}_{0}(t)+\mathcal{L}_{2}(t)+\mathcal{L}_{4}(t)),(\mathcal{N}^{\mathfrak{J}})]\widetilde{\psi}\right\rangle _{\mathcal{F}^{\otimes p}}\\
 & =\sum_{q=1}^{p}\left\langle \widetilde{\psi},[\mathrm{i}(\mathcal{L}_{2}^{q}(t)+\mathcal{L}_{4}^{q}(t)),(\mathcal{N}^{\mathfrak{J}})]\widetilde{\psi}\right\rangle _{\mathcal{F}^{\otimes p}}\\
 & \qquad+\left\langle \widetilde{\psi},[\mathrm{i}(\mathcal{L}_{2}^{\mathrm{cross}}(t)+\mathcal{L}_{4}^{\mathrm{cross}}(t)),(\mathcal{N}^{\mathfrak{J}})]\widetilde{\psi}\right\rangle _{\mathcal{F}^{\otimes p}}.
\end{aligned}
\label{eq:NjNk}
\end{equation}
Due to the symmetric structure, it is enough to show for the $q$-th
term for $q\leq p$ and the last term of (\ref{eq:NjNk}).

Then since $\left\langle \widetilde{\psi},[\mathrm{i}(\mathcal{L}_{2}^{q}(t)+\mathcal{L}_{4}^{q}(t)),\left(\mathcal{E}^{(r)}\left(\mathcal{N}+1\right)^{j}\right)]\widetilde{\psi}\right\rangle =0$
and $\mathcal{L}_{2}^{q}(t)+\mathcal{L}_{4}^{q}(t)$ does not change
the number of the particles of the second component,
\begin{align*}
\left\langle \widetilde{\psi},[\mathrm{i}(\mathcal{L}_{2}^{q}(t)+\mathcal{L}_{4}^{q}(t)),((\mathcal{N}+1)^{\mathfrak{J}})]\widetilde{\psi}\right\rangle _{\mathcal{F}^{\otimes p}} & =\left\langle \widetilde{\psi},[\mathrm{i}(\mathcal{L}_{2}^{q}(t)+\mathcal{L}_{4}^{q}(t)),\prod_{q=1}^{p}\left(\mathcal{E}^{(q)}\left(\mathcal{N}+1\right)^{j_{q}}\right)]\widetilde{\psi}\right\rangle _{\mathcal{F}^{\otimes p}}\\
 & \leq C\left\langle \widetilde{\psi},\prod_{q=1}^{p}\left(\mathcal{E}^{(q)}\left(\mathcal{N}+1\right)^{j_{q}}\right)\widetilde{\psi}\right\rangle _{\mathcal{F}^{\otimes p}}
\end{align*}
For the last term,
\[
\left\langle \widetilde{\psi},[\mathrm{i}(\mathcal{L}_{2}^{\mathrm{cross}}(t)+\mathcal{L}_{4}^{\mathrm{cross}}(t)),\left(\prod_{q=1}^{p}\left(\mathcal{E}^{(q)}\left(\mathcal{N}+1\right)^{j_{q}}\right)\right)]\widetilde{\psi}\right\rangle _{\mathcal{F}^{\otimes p}}\leq C\left\langle \widetilde{\psi},\left(\prod_{q=1}^{p}\left(\mathcal{E}^{(q)}\left(\mathcal{N}+1\right)^{j_{q}}\right)\right)\widetilde{\psi}\right\rangle _{\mathcal{F}^{\otimes p}}\quad
\]
Applying Grönwall lemma, we conclude that 
\[
\left\langle \widetilde{\mathcal{U}}\left(t;s\right)\psi,\left(\prod_{q=1}^{p}\left(\mathcal{E}^{(q)}\left(\mathcal{N}+1\right)^{j_{q}}\right)\right)\widetilde{\mathcal{U}}\left(t;s\right)\psi\right\rangle _{\mathcal{F}^{\otimes p}}\leq Ce^{K|t-s|}\left\langle \psi,\left(\prod_{q=1}^{p}\left(\mathcal{E}^{(q)}\left(\mathcal{N}+1\right)^{j_{q}}\right)\right)\psi\right\rangle _{\mathcal{F}^{\otimes p}}.
\]
Since the proof was symmetric up to the component of particle,%
\[
\left\langle \widetilde{\mathcal{U}}\left(t;s\right)\psi,\left(\mathcal{N}_{\mathrm{total}}+p\mathcal{I}\right)^{j}\widetilde{\mathcal{U}}\left(t;s\right)\psi\right\rangle _{\mathcal{F}^{\otimes p}}\leq Cp^{p}e^{K|t-s|}\left\langle \psi,\left(\mathcal{N}_{\mathrm{total}}+p\mathcal{I}\right)^{j}\psi\right\rangle _{\mathcal{F}^{\otimes p}}.
\]
{} which proves the desired lemma.
\end{proof}
The main difference between the unitary operators $\mathcal{U}$ and
$\widetilde{\mathcal{U}}$ comes from the generator $\mathcal{L}_{3}$.
In the following lemma, we find an estimate on $\mathcal{L}_{3}$.
\begin{lem}
\label{lem:N_1_L3} Suppose that the assumptions in Theorem \ref{thm:Main_Theorem}
hold. Then, for any $\psi\in\mathcal{F}$ and $j\in\mathbb{N}$, there
exist a constant $C\equiv C(j)$ such that 
\[
\left\Vert \left(\mathcal{N}_{\mathrm{total}}+p\mathcal{I}\right)^{j/2}\mathcal{L}_{3}(t)\psi\right\Vert _{\mathcal{F}^{\otimes p}}\leq Cp^{2}(\min_{q}c_{q})^{-1/2}\frac{1}{\sqrt{N}}\left\Vert \left(\mathcal{N}_{\mathrm{total}}+p\mathcal{I}\right)^{(j+3)/2}\psi\right\Vert _{\mathcal{F}^{\otimes p}}
\]
\end{lem}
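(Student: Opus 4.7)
I decompose $\mathcal{L}_{3}(t) = \sum_{q=1}^{p}\mathcal{L}_{3}^{q}(t) + \mathcal{L}_{3}^{\mathrm{cross}}(t)$, bound each piece separately, and add them by the triangle inequality. The target coefficient is accounted for as follows: the $p$ intra-component terms each carry a prefactor $1/\sqrt{N_{q}} = c_{q}^{-1/2}/\sqrt{N}$, giving $p\,(\min_{q}c_{q})^{-1/2}/\sqrt{N}$; the $p^{2}$ cross pieces carry prefactors $\sqrt{N_{q}}/N\le 1/\sqrt{N}$, giving $p^{2}/\sqrt{N}$. Since $(\min_{q}c_{q})^{-1/2}\ge 1$, both contributions are absorbed into $Cp^{2}(\min_{q}c_{q})^{-1/2}/\sqrt{N}$.

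For the creation-heavy piece of $\mathcal{L}_{3}^{q}$, I first use $a_{x}^{(q)*}a_{z}^{(q)*}=a_{z}^{(q)*}a_{x}^{(q)*}$ to rewrite
\[
\tfrac{1}{\sqrt{N_{q}}}\int dx\,dz\,V_{qq}(x-z)\,u_{q,t}(z)\,a_{x}^{(q)*}a_{z}^{(q)*}a_{x}^{(q)}
=\tfrac{1}{\sqrt{N_{q}}}\int dz\,u_{q,t}(z)\,a_{z}^{(q)*}\,d\Gamma^{(q)}\bigl(V_{qq}(\cdot-z)\bigr).
\]
Since this shifts $\mathcal{N}_{\mathrm{total}}$ by $+1$, I commute $(\mathcal{N}_{\mathrm{total}}+p\mathcal{I})^{j/2}$ past at the cost of $p\mapsto p+1$. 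Setting $\eta:=(\mathcal{N}_{\mathrm{total}}+(p+1)\mathcal{I})^{j/2}\psi$ and working on each $\mathbf{n}$-particle sector, I apply Cauchy--Schwarz twice: once to the $z$-integral against $u_{q,t}$, and once to the sum $\sum_{i=1}^{n_{q}}V_{qq}(x_{i}-z)$ via $|\sum_{i}\cdot|^{2}\le n_{q}\sum_{i}|\cdot|^{2}$. Using $V_{qq}^{2}\le K(1-\Delta)$ in the $z$-variable \emph{so that it acts on $u_{q,t}$} absorbs all $V$-singularities into $K\|u_{q,t}\|_{H^{1}}^{2}$; symmetry of $\eta^{(\mathbf{n})}$ and the $\sqrt{n_{q}+1}$ factor from $a_{z}^{(q)*}$ then give the sector bound $\|\mathcal{L}_{3}^{q,+}\eta^{(\mathbf{n})}\|^{2}\le C\,n_{q}^{3}K\|u_{q,t}\|_{H^{1}}^{2}\|\eta^{(\mathbf{n})}\|^{2}$. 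The annihilation-heavy piece is the formal adjoint (up to conjugation of $u$ and $V$); reordering by $a_{z}^{(q)}a_{x}^{(q)}=a_{x}^{(q)}a_{z}^{(q)}$ reduces it to the analogous form $\int dz\,\overline{u_{q,t}}(z)\,d\Gamma^{(q)}(V_{qq}(\cdot-z))\,a_{z}^{(q)}$, which obeys the same $n_{q}^{3}$ bound. Summing in $\mathbf{n}$ and inserting the $1/\sqrt{N_{q}}$ prefactor,
\[
\bigl\|(\mathcal{N}_{\mathrm{total}}+p\mathcal{I})^{j/2}\mathcal{L}_{3}^{q}\psi\bigr\|
\;\le\;\frac{C\|u_{q,t}\|_{H^{1}}}{\sqrt{N_{q}}}\bigl\|(\mathcal{N}_{\mathrm{total}}+p\mathcal{I})^{(j+3)/2}\psi\bigr\|.
\]

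For the cross term, since $a_{x}^{(q)*}$ and $a_{y}^{(r)*},a_{y}^{(r)}$ commute for $q\ne r$, each $(q,r)$-piece rewrites analogously as $\tfrac{\sqrt{N_{q}}}{N}\int dx\,u_{q,t}(x)\,a_{x}^{(q)*}\,d\Gamma^{(r)}(V_{qr}(x-\cdot))$. The same two Cauchy--Schwarz steps, combined with $V_{qr}^{2}\le K(1-\Delta)$ in the $x$-variable acting on $u_{q,t}$, now produce the sector bound $C(n_{q}+1)n_{r}^{2}K\|u_{q,t}\|_{H^{1}}^{2}\|\eta^{(\mathbf{n})}\|^{2}$, still controlled by $\|(\mathcal{N}_{\mathrm{total}}+p\mathcal{I})^{3/2}\eta^{(\mathbf{n})}\|^{2}$. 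Using $\sqrt{N_{q}}/N\le 1/\sqrt{N}$, each of the $p^{2}$ cross pieces satisfies the same bound as the intra terms but with $1/\sqrt{N_{q}}$ replaced by $1/\sqrt{N}$. Collecting all contributions and using the uniform-in-$t$ bound $\sup_{q,|s|\le t}\|u_{q,s}\|_{H^{1}}\le C$ for the Hartree system yields the stated estimate.

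The main technical obstacle is the correct placement of the operator inequality $V^{2}\le K(1-\Delta)$: it must be applied in the variable on which $u_{q,t}$ is integrated, so that it absorbs into the fixed quantity $\|u_{q,t}\|_{H^{1}}$. If one instead placed it on $\psi$, the right-hand side would involve the kinetic energy of $\psi$, which is not controlled by powers of $\mathcal{N}_{\mathrm{total}}$ and would break the entire bootstrap. The combinatorial factors $n_{q}^{3}$ (intra) and $(n_{q}+1)n_{r}^{2}$ (cross), arising from one $\sqrt{n+1}$ creation factor together with two $n$ factors from symmetry and a discrete Cauchy--Schwarz on the sum, are precisely what force the gap $j/2\mapsto (j+3)/2$ in the power of the number operator on the right-hand side.
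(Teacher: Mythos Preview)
Your argument is correct and follows essentially the same strategy as the paper: decompose $\mathcal{L}_{3}$ into intra- and cross-component pieces, apply Cauchy--Schwarz so that the potential is paired with $u_{q,t}$, invoke $V_{qr}^{2}\le K(1-\Delta)$ on $u_{q,t}$ to produce $\|u_{q,t}\|_{H^{1}}$, and count three half-powers of the number operator. The only cosmetic difference is that the paper tests against an arbitrary $\xi\in\mathcal{F}^{\otimes p}$ (a duality formulation) rather than working sector by sector as you do, and it organizes the cubic pieces via $A_{3}^{(q,r)}=\int V_{qr}(x-y)\overline{u_{t}(y)}a_{x}^{(q)*}a_{y}^{(r)}a_{x}^{(q)}$ rather than your $a_{z}^{*}\,d\Gamma(V(\cdot-z))$ rewriting; both lead to the same Cauchy--Schwarz split and the same $(j+3)/2$ exponent.
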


\begin{proof}
We basically follow the proof in \cite[Lemma 5.3]{Lee2013}. For $q,r=1,\dots,p$,
let 
\[
A_{3}^{(q,r)}(t)=\int\mathrm{d}x\mathrm{d}y\,V_{qr}(x-y)\overline{\varphi_{t}(y)}a_{x}^{(q)*}a_{y}^{(r)}a_{x}^{(q)}.
\]
Then, by definition, we have
\[
\begin{split}\mathcal{L}_{3}(t) & =\sum_{q=1}^{p}\Big(\frac{1}{\sqrt{N_{q}}}\int\mathrm{d}x\mathrm{d}z\,V_{qq}(x-z)a_{x}^{(q)*}(u_{q,t}(z)a_{z}^{(q)*}+\overline{u_{q,t}}(z)a_{z}^{(q)})a_{x}^{(q)}\Big)\\
 & \quad+\sum_{q\neq r}\left(\frac{\sqrt{N_{q}}}{N}\int\mathrm{d}x\mathrm{d}y\,V_{qr}(x-y)a_{y}^{(q)*}(u_{q,t}(x)a_{x}^{(q)*}+\overline{u_{q,t}}(x)a_{x}^{(q)})a_{y}^{(r)}\right)\\
 & \quad=:\sum_{q=1}^{p}\mathcal{L}_{3}^{q}(t)+\mathcal{L}_{3}^{\text{cross}}(t).
\end{split}
\]
Then we get
\begin{align}
\left(\mathcal{N}_{\mathrm{total}}+p\mathcal{I}\right)^{j/2}\mathcal{L}_{3}^{q}(t) & =\frac{1}{\sqrt{N_{q}}}\left(\left(\mathcal{N}_{\mathrm{total}}+p\mathcal{I}\right)^{j/2}A_{3}^{(q,q)}(t)+\left(\mathcal{N}_{\mathrm{total}}+p\mathcal{I}\right)^{j/2}A_{3}^{(q,q)*}(t)\right),\label{eq:L3}\\
\intertext{and}\left(\mathcal{N}_{\mathrm{total}}+p\mathcal{I}\right)^{j/2}\mathcal{L}_{3}^{\mathrm{cross}}(t) & =\sum_{q\neq r}\frac{\sqrt{N_{q}}}{N}\left(\left(\mathcal{N}_{\mathrm{total}}+p\mathcal{I}\right)^{j/2}A_{3}^{(q,r)}(t)+\left(\mathcal{N}_{\mathrm{total}}+p\mathcal{I}\right)^{j/2}A_{3}^{(q,r)*}(t)\right)\label{eq:L3-cross}
\end{align}
The first term in (\ref{eq:L3}), $\left(\mathcal{N}_{\mathrm{total}}+p\mathcal{I}\right)^{j/2}A_{3}^{(q,q)}(t)$,
satisfies for any $\xi\in\mathcal{F}^{\otimes p}$ that 
\begin{align*}
 & \left|\langle\xi,\left(\mathcal{N}_{\mathrm{total}}+p\mathcal{I}\right)^{j/2}A_{3}^{(q,q)}(t)\psi\rangle_{\mathcal{F}^{\otimes p}}\right|\\
 & \quad=\left|\int\mathrm{d}x\mathrm{d}y\,V_{qq}(x-y)\overline{u_{q,t}(y)}\langle\xi,\left(\mathcal{N}_{\mathrm{total}}+p\mathcal{I}\right)^{j/2}a_{x}^{(q)*}a_{y}^{(q)}a_{x}^{(q)}\psi\rangle_{\mathcal{F}^{\otimes p}}\right|\\
 & \quad=\left|\int\mathrm{d}x\mathrm{d}y\,V_{qq}(x-y)\overline{u_{q,t}(y)}\langle\left(\mathcal{N}_{\mathrm{total}}+p\mathcal{I}\right)^{-1/2}\xi,(\mathcal{N}_{\mathrm{total}}+p\mathcal{I})^{(j+1)/2}a_{x}^{(q)*}a_{y}^{(q)}a_{x}^{(q)}\psi\rangle_{\mathcal{F}^{\otimes p}}\right|.
\end{align*}
This leads that
\begin{align*}
\left|\langle\xi,\left(\mathcal{N}_{\mathrm{total}}+p\mathcal{I}\right)^{j/2}A_{3}^{(q,q)}(t)\psi\rangle_{\mathcal{F}^{\otimes p}}\right| & \leq\left(\int\mathrm{d}x\mathrm{d}y\,|V(x-y)|^{2}|u_{q,t}(y)|^{2}\|a_{x}^{(q)}\left((\mathcal{N}_{\mathrm{total}}+p\mathcal{I}\right)^{-1/2}\xi\|_{\mathcal{F}^{\otimes p}}^{2}\right)^{1/2}\\
 & \qquad\times\left(\int\mathrm{d}x\mathrm{d}y\|a_{y}^{(q)}a_{x}^{(q)}\left(\mathcal{N}_{\mathrm{total}}+p\mathcal{I}\right)^{(j+1)/2}\psi\|_{\mathcal{F}^{\otimes p}}^{2}\right)^{1/2}.
\end{align*}
Hence, 
\[
\left|\langle\xi,\left(\mathcal{N}_{\mathrm{total}}+p\mathcal{I}\right)^{j/2}A_{3}^{(q,q)}(t)\psi\rangle_{\mathcal{F}^{\otimes p}}\right|\leq C\sup_{x}\|\xi\|_{\mathcal{F}^{\otimes p}}\|\left(\mathcal{N}_{\mathrm{total}}+p\mathcal{I}\right)^{(j+3)/2}\psi\|_{\mathcal{F}^{\otimes p}}.\quad
\]
Since $\xi$ was arbitrary, by applying the assumption on interaction
potentials $V_{qr}$ in (\ref{eq:assumption_V}), we obtain that 
\begin{equation}
\|\left(\mathcal{N}_{\mathrm{total}}+p\mathcal{I}\right)^{j/2}A_{3}^{(q,q)}(t)\psi\|_{\mathcal{F}^{\otimes p}}\leq C\|\left(\mathcal{N}_{\mathrm{total}}+p\mathcal{I}\right)^{(j+3)/2}\psi\|_{\mathcal{F}^{\otimes p}}.\label{eq:(N+pI)A3qq}
\end{equation}
For the second term of (\ref{eq:L3}), using similar argument,
\begin{equation}
\|\left(\mathcal{N}_{\mathrm{total}}+p\mathcal{I}\right)^{j/2}A_{3}^{(q,q)*}(t)\psi\|_{\mathcal{F}^{\otimes p}}\leq C\|\left(\mathcal{N}_{\mathrm{total}}+p\mathcal{I}\right)^{(j+3)/2}\psi\|_{\mathcal{F}^{\otimes p}}.\label{eq:(N+pI)A3qqstar}
\end{equation}

For (\ref{eq:L3-cross}), similarly we get
\begin{equation}
\|\left(\mathcal{N}_{\mathrm{total}}+p\mathcal{I}\right)^{j/2}A_{3}^{(q,r)}\psi\|_{\mathcal{F}^{\otimes p}}\leq C\|\left(\mathcal{N}_{\mathrm{total}}+p\mathcal{I}\right)^{(j+3)/2}\psi\|_{\mathcal{F}^{\otimes p}},\label{eq:(N+pI)A3qr}
\end{equation}
and
\begin{equation}
\|\left(\mathcal{N}_{\mathrm{total}}+p\mathcal{I}\right)^{j/2}A_{3}^{(q,r)*}\psi\|_{\mathcal{F}^{\otimes p}}\leq C\|\left(\mathcal{N}_{\mathrm{total}}+p\mathcal{I}\right)^{(j+3)/2}\psi\|_{\mathcal{F}^{\otimes p}}.\label{eq:(N+pI)A3qrstar}
\end{equation}
Hence, from (\ref{eq:(N+pI)A3qq}), (\ref{eq:(N+pI)A3qqstar}), (\ref{eq:(N+pI)A3qr}),
and (\ref{eq:(N+pI)A3qrstar}), we get 
\begin{align*}
\|\left(\mathcal{N}_{\mathrm{total}}+p\mathcal{I}^{j/2}\right)\mathcal{L}_{3}(t)\psi\|_{\mathcal{F}^{\otimes p}} & \quad\leq C\left(\sum_{q=1}^{p}\frac{1}{\sqrt{N_{q}}}+\sum_{q\neq r}\frac{\sqrt{N_{q}}}{N}\right)\|\left(\mathcal{N}_{\mathrm{total}}+p\mathcal{I}\right)^{(j+3)/2}\psi\|_{\mathcal{F}^{\otimes p}}\\
 & \quad\leq C\left(\sum_{q,r}\frac{1}{\sqrt{N_{q}}}\right)\|\left(\mathcal{N}_{\mathrm{total}}+p\mathcal{I}\right)^{(j+3)/2}\psi\|_{\mathcal{F}^{\otimes p}}\\
 & \quad\leq C\left(\frac{p^{2}}{(\min_{q}c_{q})^{-1/2}\sqrt{N}}\right)\|\left(\mathcal{N}_{\mathrm{total}}+p\mathcal{I}\right)^{(j+3)/2}\psi\|_{\mathcal{F}^{\otimes p}}
\end{align*}
which was to be proved.
\end{proof}
Finally, we prove the following lemma on the difference between $\mathcal{U}$
and $\widetilde{\mathcal{U}}$.
\begin{lem}
\label{lem:NjUphiUtildeUphitildeU} Suppose that the assumptions in
Theorem \ref{thm:Main_Theorem} hold. Then, for all $j\in\mathbb{N}$,
there exist constants $C\equiv C(j)$ and $K\equiv K(t)$ such that,
for any $\mathbf{f}=(f_{1},\dots,f_{p})$ with $f_{q}\in L^{2}\left(\mathbb{R}^{3}\right)$
for all $q=1,\dots,p$, 
\begin{align*}
 & \left\Vert \left(\mathcal{N}_{\mathrm{total}}+p\mathcal{I}\right)^{j/2}\left(\mathcal{U}^{*}(t)\Phi(\mathbf{f})\mathcal{U}(t)-\widetilde{\mathcal{U}}^{*}(t)\Phi(\mathbf{f})\widetilde{\mathcal{U}}(t)\right)\omega\right\Vert _{\mathcal{F}^{\otimes p}}\\
 & \quad\leq Ce^{Kt}(\min_{q}c_{q})^{-1/2}p^{j+Cp}\frac{1}{\sqrt{N}}.
\end{align*}
\end{lem}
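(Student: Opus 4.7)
The plan is to exploit the fact that the generators of $\mathcal{U}$ and $\widetilde{\mathcal{U}}$ differ by exactly $\mathcal{L}_3(t)$, and to use the smallness of $\mathcal{L}_3$ in $N$ provided by Lemma \ref{lem:N_1_L3}. I would start with the Duhamel identity
\[
\mathcal{U}(t)-\widetilde{\mathcal{U}}(t)=-\mathrm{i}\int_{0}^{t}\!ds\,\widetilde{\mathcal{U}}(t,s)\mathcal{L}_{3}(s)\mathcal{U}(s),
\]
and its adjoint, to split
\[
\mathcal{U}^{*}(t)\Phi(\mathbf{f})\mathcal{U}(t)-\widetilde{\mathcal{U}}^{*}(t)\Phi(\mathbf{f})\widetilde{\mathcal{U}}(t)=\bigl[\mathcal{U}^{*}(t)-\widetilde{\mathcal{U}}^{*}(t)\bigr]\Phi(\mathbf{f})\mathcal{U}(t)+\widetilde{\mathcal{U}}^{*}(t)\Phi(\mathbf{f})\bigl[\mathcal{U}(t)-\widetilde{\mathcal{U}}(t)\bigr].
\]
After applying $(\mathcal{N}_{\mathrm{total}}+p\mathcal{I})^{j/2}$ to $\omega$, the triangle inequality reduces the problem to controlling, for each $s\in[0,t]$, expressions of the type
\[
\bigl\|(\mathcal{N}_{\mathrm{total}}+p\mathcal{I})^{j/2}\,\mathcal{U}^{*}(s)\,\mathcal{L}_{3}(s)\,\widetilde{\mathcal{U}}^{*}(t,s)\,\Phi(\mathbf{f})\,\mathcal{U}(t)\,\omega\bigr\|_{\mathcal{F}^{\otimes p}}
\]
together with the symmetric piece coming from the second summand.

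Next, I would peel the operators off from left to right, invoking the lemmas already proved, in the following order. First, Lemma \ref{lem:NjU} (applied in its time-reversed form to $\mathcal{U}^{*}(s)$) moves $(\mathcal{N}_{\mathrm{total}}+p\mathcal{I})^{j/2}$ past $\mathcal{U}^{*}(s)$ at a cost of $Cp^{Cp/2}e^{Ks/2}$, raising the number-operator power by one. Second, Lemma \ref{lem:N_1_L3} absorbs the $\mathcal{L}_{3}(s)$ factor, contributing the crucial small factor $Cp^{2}(\min_{q}c_{q})^{-1/2}N^{-1/2}$ while raising the exponent by $3/2$. Third, Lemma \ref{lem:tildeNj} carries the number operator through $\widetilde{\mathcal{U}}^{*}(t,s)$ with a cost of $Cp^{Cp/2}e^{K(t-s)/2}$. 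Fourth, the bound
\[
\bigl\|(\mathcal{N}_{\mathrm{total}}+p\mathcal{I})^{k}\Phi(\mathbf{f})\psi\bigr\|_{\mathcal{F}^{\otimes p}}\le C\,\Bigl(\prod_{q=1}^{p}\|f_{q}\|_{2}\Bigr)\bigl\|(\mathcal{N}_{\mathrm{total}}+p\mathcal{I})^{k+p/2}\psi\bigr\|_{\mathcal{F}^{\otimes p}},
\]
which follows from Lemma \ref{lem:bound-a-a*-Phi-dGamma} (together with the standard shift identity $(\mathcal{N}+1)\phi(f)=\phi(f)\mathcal{N}+\phi(f)+(\text{bounded})$ applied in each factor), pushes $\Phi(\mathbf{f})$ past the remaining powers of the number operator. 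Finally, a further application of Lemma \ref{lem:NjU} handles $\mathcal{U}(t)\omega$.

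Because the entire chain terminates at the vacuum vector $\omega$, where $(\mathcal{N}_{\mathrm{total}}+p\mathcal{I})^{m}\omega=p^{m}\omega$, all the accumulated powers of the number operator collapse into polynomial-in-$p$ prefactors of the form $p^{j+Cp}$. Collecting the bounds, integrating the product of $e^{Ks/2}$ and $e^{K(t-s)/2}$ over $s\in[0,t]$, and carrying along the single $N^{-1/2}$ from Lemma \ref{lem:N_1_L3}, yields the claimed estimate
\[
Ce^{Kt}(\min_{q}c_{q})^{-1/2}\,p^{j+Cp}\,\frac{1}{\sqrt{N}}.
\]

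The main technical obstacle will be the bookkeeping of number-operator exponents. Each of the four peeling steps increases the exponent (by $1$, by $3/2$, by $0$, and by $p/2$ respectively), and the combination must remain polynomial in $p$ after evaluation on $\omega$; in particular the half-integer exponent from Lemma \ref{lem:N_1_L3} forces a mild rounding or interpolation to cleanly match the integer exponents required by Lemmas \ref{lem:NjU}--\ref{lem:tildeNj}. Once this is handled, the symmetric piece $\widetilde{\mathcal{U}}^{*}(t)\Phi(\mathbf{f})[\mathcal{U}(t)-\widetilde{\mathcal{U}}(t)]\omega$ is treated identically, with $\widetilde{\mathcal{U}}^{*}(t)=\widetilde{\mathcal{U}}^{*}(s)\widetilde{\mathcal{U}}^{*}(t,s)$ so that Lemma \ref{lem:tildeNj} is used twice in place of the one use of Lemma \ref{lem:NjU} in the first term.
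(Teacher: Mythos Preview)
Your proposal is correct and follows essentially the same strategy as the paper: express the difference via Duhamel's formula and peel off operators from left to right using Lemmas \ref{lem:NjU}, \ref{lem:N_1_L3}, \ref{lem:tildeNj}, and \ref{lem:bound-a-a*-Phi-dGamma}, terminating on the vacuum. The only cosmetic difference is that the paper's splitting places $\widetilde{\mathcal{U}}(t)$ (rather than your $\mathcal{U}(t)$) to the right of $\Phi(\mathbf{f})$ in the first summand, so its final peeling step invokes Lemma \ref{lem:tildeNj} instead of a second use of Lemma \ref{lem:NjU}; since everything lands on $\omega$, this affects only the power of $p$ and is immaterial.
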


\begin{proof}
We follow the proof in \cite[Lemma 5.4]{Lee2013}. Let 
\[
\mathcal{R}_{1}(\mathbf{f}):=\left(\mathcal{U}^{*}(t)-\widetilde{\mathcal{U}}^{*}(t)\right)\Phi(\mathbf{f})\widetilde{\mathcal{U}}(t)
\]
and 
\[
\mathcal{R}_{2}(\mathbf{f}):=\mathcal{U}^{*}(t)\Phi(\mathbf{f})\left(\mathcal{U}(t)-\widetilde{\mathcal{U}}(t)\right)
\]
so that 
\begin{equation}
\mathcal{U}^{*}(t)\Phi(\mathbf{f})\mathcal{U}(t)-\widetilde{\mathcal{U}}^{*}(t)\Phi(\mathbf{f})\widetilde{\mathcal{U}}(t)=\mathcal{R}_{1}(\mathbf{f})+\mathcal{R}_{2}(\mathbf{f}).\label{eq:R_1}
\end{equation}
We begin by estimating the first term in the right-hand side of (\ref{eq:R_1}).
From Lemma \ref{lem:NjU}, 
\begin{align*}
\left\Vert \left(\mathcal{N}_{\mathrm{total}}+p\mathcal{I}\right)^{j/2}\mathcal{R}_{1}(\mathbf{f})\omega\right\Vert _{\mathcal{F}^{\otimes p}} & =\left\Vert \int_{0}^{t}\mathrm{d}s\left(\mathcal{N}_{\mathrm{total}}+p\mathcal{I}\right)^{j/2}\mathcal{U}^{*}(s;0)\mathcal{L}_{3}(s)\widetilde{\mathcal{U}}^{*}(t;s)\Phi(\mathbf{f})\widetilde{\mathcal{U}}(t)\omega\right\Vert _{\mathcal{F}^{\otimes p}}\\
 & \leq\int_{0}^{t}\mathrm{d}s\left\Vert \left(\mathcal{N}_{\mathrm{total}}+p\mathcal{I}\right)^{j/2}\mathcal{U}^{*}(s;0)\mathcal{L}_{3}(s)\widetilde{\mathcal{U}}^{*}(t;s)\Phi(\mathbf{f})\widetilde{\mathcal{U}}(t)\omega\right\Vert _{\mathcal{F}^{\otimes p}}\\
 & \leq Cp^{Cp}e^{Kt}\int_{0}^{t}\mathrm{d}s\left\Vert \left(\mathcal{N}_{\mathrm{total}}+p\mathcal{I}\right)^{j+1}\mathcal{L}_{3}(s)\widetilde{\mathcal{U}}^{*}(t;s)\Phi(\mathbf{f})\widetilde{\mathcal{U}}(t)\omega\right\Vert _{\mathcal{F}^{\otimes p}}.
\end{align*}
From Lemma \ref{lem:N_1_L3} and the assumption on interaction potentials
$V_{qr}$ in (\ref{eq:assumption_V}), we get
\begin{align*}
\left\Vert \left(\mathcal{N}_{\mathrm{total}}+p\mathcal{I}\right)^{j/2}\mathcal{R}_{1}(\mathbf{f})\omega\right\Vert _{\mathcal{F}\otimes\mathcal{F}} & \leq Cp^{Cp}e^{Kt}\left(\sum_{q=1}^{p}\left(\frac{1}{\sqrt{N_{q}}}+\frac{\sqrt{N_{q}}}{N}\right)\right)\int_{0}^{t}\mathrm{d}s\left\Vert \left(\mathcal{N}_{\mathrm{total}}+p\mathcal{I}\right)^{j+(5/2)}\widetilde{\mathcal{U}}^{*}(t;s)\Phi(\mathbf{f})\widetilde{\mathcal{U}}(t)\omega\right\Vert _{\mathcal{F}^{\otimes p}}
\end{align*}
By Lemma \ref{lem:tildeNj},
\begin{align*}
\left\Vert \left(\mathcal{N}_{\mathrm{total}}+p\mathcal{I}\right)^{j/2}\mathcal{R}_{1}(f)\omega\right\Vert _{\mathcal{F}^{\otimes p}} & \leq Cp^{Cp}e^{Kt}\left(\sum_{q=1}^{p}\left(\frac{1}{\sqrt{N_{q}}}+\frac{\sqrt{N_{q}}}{N}\right)\right)\int_{0}^{t}\mathrm{d}s\left\Vert \left(\mathcal{N}_{\mathrm{total}}+p\mathcal{I}\right)^{j+(5/2)}\Phi(\mathbf{f})\widetilde{\mathcal{U}}(t)\omega\right\Vert _{\mathcal{F}^{\otimes p}}.
\end{align*}
Using the fact that the right hand side does not depend on time $s$
and Lemma \ref{lem:bound-a-a*-Phi-dGamma},
\begin{align*}
\left\Vert \left(\mathcal{N}_{\mathrm{total}}+p\mathcal{I}\right)^{j/2}\mathcal{R}_{1}(\mathbf{f})\omega\right\Vert _{\mathcal{F}^{\otimes p}} & \leq Cp^{Cp}e^{Kt}\left(\sum_{q=1}^{p}\left(\frac{1}{\sqrt{N_{q}}}+\frac{\sqrt{N_{q}}}{N}\right)\right)\left\Vert \left(\mathcal{N}_{\mathrm{total}}+p\mathcal{I}\right)^{j+(5/2)}\Phi(\mathbf{f})\widetilde{\mathcal{U}}(t)\omega\right\Vert _{\mathcal{F}^{\otimes p}}\\
 & \leq Cp^{Cp}e^{Kt}\left(\sum_{q=1}^{p}\left(\frac{1}{\sqrt{N_{q}}}+\frac{\sqrt{N_{q}}}{N}\right)\right)\left\Vert \left(\mathcal{N}_{\mathrm{total}}+p\mathcal{I}\right)^{j+(5/2)+(p/2)}\widetilde{\mathcal{U}}(t)\omega\right\Vert _{\mathcal{F}^{\otimes p}}
\end{align*}
Thus, from Lemma \ref{lem:tildeNj}, we obtain for $\mathcal{R}_{1}(\mathbf{f})$
that 
\begin{align*}
\left\Vert \left(\mathcal{N}_{\mathrm{total}}+p\mathcal{I}\right)^{j/2}\mathcal{R}_{1}(\mathbf{f})\omega\right\Vert _{\mathcal{F}^{\otimes p}} & \leq Cp^{Cp}e^{Kt}\left(\sum_{q=1}^{p}\left(\frac{1}{\sqrt{N_{q}}}+\frac{\sqrt{N_{q}}}{N}\right)\right)\left\Vert \left(\mathcal{N}_{\mathrm{total}}+p\mathcal{I}\right)^{j+(5/2)+(p/2)}\omega\right\Vert _{\mathcal{F}^{\otimes p}}
\end{align*}
The study of $\mathcal{R}_{2}(\mathbf{f})$ is similar and we can
again obtain that 
\begin{align*}
\left\Vert \left(\mathcal{N}_{\mathrm{total}}+p\mathcal{I}\right)^{j/2}\mathcal{R}_{2}(\mathbf{f})\omega\right\Vert _{\mathcal{F}^{\otimes p}} & \leq Cp^{Cp}e^{Kt}\left(\sum_{q=1}^{p}\left(\frac{1}{\sqrt{N_{q}}}+\frac{\sqrt{N_{q}}}{N}\right)\right)\left\Vert \left(\mathcal{N}_{\mathrm{total}}+p\mathcal{I}\right)^{j+(5/2)+(p/2)}\omega\right\Vert _{\mathcal{F}^{\otimes p}}
\end{align*}
Noting that $N_{q}$ are of order $N$, 
\[
\frac{c}{\sqrt{N_{q}}}\leq\frac{\sqrt{N_{q}}}{N}\leq\frac{C}{\sqrt{N_{q}}}
\]
for some $c,C>0$. Hence we have both
\[
\left\Vert \left(\mathcal{N}_{\mathrm{total}}+p\mathcal{I}\right)^{j/2}\mathcal{R}_{1}(\mathbf{f})\omega\right\Vert _{\mathcal{F}^{\otimes p}}\leq C(\min_{q}c_{q})^{-1/2}e^{Kt}p^{j+Cp}\frac{1}{\sqrt{N}}
\]
and
\[
\left\Vert \left(\mathcal{N}_{\mathrm{total}}+p\mathcal{I}\right)^{j/2}\mathcal{R}_{2}(\mathbf{f})\omega\right\Vert _{\mathcal{F}^{\otimes p}}\leq C(\min_{q}c_{q})^{-1/2}e^{Kt}p^{j+Cp}\frac{1}{\sqrt{N}}
\]
This completes the proof of the desired lemma.
\end{proof}

\section{Proof of Propositions \label{sec:Pf-of-Props}}

\subsection{Proof of Propositions \ref{prop:Et1}}

Noting that for disjoint subsets $Q$ and $R$ of $\{1,\dots,p\}$,

\begin{align*}
F_{t}^{Q,R}(\mathbf{x};\mathbf{x}') & :=\sqrt{\boldsymbol{N}_{Q^{c}}}\sqrt{\boldsymbol{N}_{R^{c}}}\,\frac{\boldsymbol{d}_{\boldsymbol{N}}}{\boldsymbol{N}}\,u_{Q^{c},t}(x_{Q^{c}})\,\overline{u_{R^{c},t}(x_{R^{c}}')}\left\langle \frac{\left(\boldsymbol{a}^{*}(\mathbf{u})\right)^{\boldsymbol{N}}}{\sqrt{\boldsymbol{N}!}}\omega,\mathcal{W}(\mathbf{N}^{\odot1/2}\odot\mathbf{u}_{s})\left(a_{\mathbf{x}}^{(Q)*}a_{\mathbf{x}'}^{(R)}\right)\mathcal{U}(t)\omega\right\rangle _{\mathcal{F}^{\otimes p}}\\
\intertext{and}E_{t}^{Q,R}(J) & :=\left\langle u_{(Q\cup R)^{c},t}\right|J_{(Q\cup R)^{c}}\left|u_{(Q\cup R)^{c},t}\right\rangle \sqrt{\boldsymbol{N}_{Q^{c}}}\sqrt{\boldsymbol{N}_{R^{c}}}\frac{\boldsymbol{d}_{\boldsymbol{N}}}{\boldsymbol{N}}\\
 & \qquad\qquad\times\left\langle \frac{\left(\boldsymbol{a}^{*}(\mathbf{u})\right)^{\boldsymbol{N}}}{\sqrt{\boldsymbol{N}!}}\omega,\mathcal{W}(\mathbf{N}^{\odot1/2}\odot\mathbf{u}_{s})\left(d\boldsymbol{\Gamma}^{(Q\cap R)}(J_{Q\cap R})\cdot\boldsymbol{\phi}^{(Q\Delta R)}(J_{Q\Delta R}u_{Q\Delta R,t})\right)\mathcal{U}(t)\omega\right\rangle _{\mathcal{F}^{\otimes p}},
\end{align*}
in this section, we prove proposition \ref{prop:Et1} by applying
the lemmas proved in the following section.
\begin{proof}[Proof of Proposition \ref{prop:Et1}]
For the proof of Proposition \ref{prop:Et1}, we use strategy provided
in \cite[Lemma 4.2]{Lee2013}. Recall that 
\begin{align*}
 & E_{t}^{Q,R}(J)\\
 & =\left\langle u_{(Q\cup R)^{c},t}\right|J_{(Q\cup R)^{c}}\left|u_{(Q\cup R)^{c},t}\right\rangle \sqrt{\boldsymbol{N}_{Q^{c}}}\sqrt{\boldsymbol{N}_{R^{c}}}\frac{\boldsymbol{d}_{\boldsymbol{N}}}{\boldsymbol{N}}\\
 & \qquad\times\left\langle \frac{\left(\boldsymbol{a}^{*}(\mathbf{u})\right)^{\boldsymbol{N}}}{\sqrt{\boldsymbol{N}!}}\omega,\mathcal{W}(\mathbf{N}^{\odot1/2}\odot\mathbf{u}_{s})\mathcal{U}^{*}(t)\left(d\boldsymbol{\Gamma}^{(Q\cap R)}(J_{Q\cap R})\cdot\boldsymbol{\phi}^{(Q\Delta R)}(J_{Q\Delta R}u_{Q\Delta R,t})\right)\mathcal{U}(t)\omega\right\rangle _{\mathcal{F}^{\otimes p}}.
\end{align*}
Then also recall the definitions of $\mathcal{R}_{1}$ and $\mathcal{R}_{2}$
in the proof of Lemma \ref{lem:NjUphiUtildeUphitildeU}. Let $\mathcal{R}(\mathbf{f})=\mathcal{R}_{1}(\mathbf{f})+\mathcal{R}_{2}(\mathbf{f})$
so that 
\[
\mathcal{R}(\mathbf{f})=\mathcal{U}^{*}(t)\boldsymbol{\phi}^{(Q)}(\mathbf{f})\mathcal{U}(t)-\widetilde{\mathcal{U}}^{*}(t)\boldsymbol{\phi}^{(Q)}(\mathbf{f})\widetilde{\mathcal{U}}(t).
\]
From the parity conservation (\ref{eq:Parity_Consevation}), 
\[
P_{2\mathbf{k}}\widetilde{\mathcal{U}}^{*}(t)\boldsymbol{\phi}^{(Q)}(Ju_{Q,t})\widetilde{\mathcal{U}}(t)\omega=0
\]
for all $p$-dimensional vector $\mathbf{k}\in\mathbb{Z}_{\geq0}^{d}$
with nonnegative integer components. (See \cite[Lemma 8.2]{Lee2013}
for more detail.) Thus,
\begin{align*}
 & \left\langle u_{(Q\cup R)^{c},t}\right|J_{(Q\cup R)^{c}}\left|u_{(Q\cup R)^{c},t}\right\rangle \sqrt{\boldsymbol{N}_{Q^{c}}}\sqrt{\boldsymbol{N}_{R^{c}}}\frac{\boldsymbol{d}_{\boldsymbol{N}}}{\boldsymbol{N}}\\
 & \times\left|\left\langle \frac{\left(\boldsymbol{a}^{*}(\mathbf{u})\right)^{\boldsymbol{N}}}{\sqrt{\boldsymbol{N}!}}\omega,\mathcal{W}(\mathbf{N}^{\odot1/2}\odot\mathbf{u}_{s})\mathcal{U}^{*}(t)\left(d\boldsymbol{\Gamma}^{(Q\cap R)}(J_{Q\cap R})\cdot\boldsymbol{\phi}^{(Q\Delta R)}(J_{Q\Delta R}u_{Q\Delta R,t})\right)\mathcal{U}(t)\omega\right\rangle _{\mathcal{F}^{\otimes p}}\right|\\
 & \leq\left\langle u_{(Q\cup R)^{c},t}\right|J_{(Q\cup R)^{c}}\left|u_{(Q\cup R)^{c},t}\right\rangle \sqrt{\boldsymbol{N}_{Q^{c}}}\sqrt{\boldsymbol{N}_{R^{c}}}\frac{\boldsymbol{d}_{\boldsymbol{N}}}{\boldsymbol{N}}\\
 & \qquad\times\left\langle \frac{\left(\boldsymbol{a}^{*}(\mathbf{u})\right)^{\boldsymbol{N}}}{\sqrt{\boldsymbol{N}!}}\omega,\mathcal{W}(\mathbf{N}^{\odot1/2}\odot\mathbf{u}_{s})\mathcal{\widetilde{\mathcal{U}}}^{*}(t)\left(d\boldsymbol{\Gamma}^{(Q\cap R)}(J_{Q\cap R})\cdot\boldsymbol{\phi}^{(Q\Delta R)}(J_{Q\Delta R}u_{Q\Delta R,t})\right)\mathcal{\widetilde{\mathcal{U}}}(t)\omega\right\rangle _{\mathcal{F}^{\otimes p}}\\
 & \qquad+\left\langle u_{(Q\cup R)^{c},t}\right|J_{(Q\cup R)^{c}}\left|u_{(Q\cup R)^{c},t}\right\rangle \sqrt{\boldsymbol{N}_{Q^{c}}}\sqrt{\boldsymbol{N}_{R^{c}}}\frac{\boldsymbol{d}_{\boldsymbol{N}}}{\boldsymbol{N}}\\
 & \qquad\qquad\times\left\langle \frac{\left(\boldsymbol{a}^{*}(\mathbf{u})\right)^{\boldsymbol{N}}}{\sqrt{\boldsymbol{N}!}}\omega,\mathcal{W}(\mathbf{N}^{\odot1/2}\odot\mathbf{u}_{s})\left(d\boldsymbol{\Gamma}^{(Q\cap R)}(J_{Q\cap R})\cdot\mathcal{R}^{(Q\Delta R)}(J_{Q\Delta R}u_{Q\Delta R,t})\right)\omega\right\rangle _{\mathcal{F}^{\otimes p}}\\
 & =:I+II
\end{align*}
Then
\begin{align*}
I & \leq\left\langle u_{(Q\cup R)^{c},t}\right|J_{(Q\cup R)^{c}}\left|u_{(Q\cup R)^{c},t}\right\rangle \sqrt{\boldsymbol{N}_{Q^{c}}}\sqrt{\boldsymbol{N}_{R^{c}}}\frac{\boldsymbol{d}_{\boldsymbol{N}}}{\boldsymbol{N}}\\
 & \qquad\times\left\Vert \left(\mathcal{N}_{\mathrm{total}}+p\mathcal{I}\right)^{-1/2}\mathcal{W}^{*}(\mathbf{N}^{\odot1/2}\odot\mathbf{u})\frac{\left(\boldsymbol{a}^{*}(\mathbf{u})\right)^{\boldsymbol{N}}}{\sqrt{\boldsymbol{N}!}}\omega\right\Vert _{\mathcal{F}^{\otimes p}}\\
 & \qquad\qquad\qquad\times\left\Vert \left(\mathcal{N}_{\mathrm{total}}+p\mathcal{I}\right)^{-1/2}\mathcal{\widetilde{\mathcal{U}}}^{*}(t)\left(d\boldsymbol{\Gamma}^{(Q\cap R)}(J_{Q\cap R})\cdot\boldsymbol{\phi}^{(Q\Delta R)}(J_{Q\Delta R}u_{Q\Delta R,t})\right)\mathcal{\widetilde{\mathcal{U}}}(t)\omega\right\Vert _{\mathcal{F}^{\otimes p}}
\end{align*}
The first factor is easily bounded by using
\begin{align*}
\frac{\left(\boldsymbol{a}^{*}(\mathbf{u})\right)^{\boldsymbol{N}}}{\sqrt{\boldsymbol{N}!}}\omega & =\left(\bigotimes_{q=1}^{p}d_{N_{q}}\mathcal{P}_{N_{q}}\right)\mathcal{W}(\mathbf{N}^{\odot1/2}\odot\mathbf{u}_{s})\,\omega
\end{align*}
that
\begin{equation}
\left\Vert \left(\mathcal{N}_{\mathrm{total}}+p\mathcal{I}\right)^{-1/2}\mathcal{W}^{*}(\mathbf{N}^{\odot1/2}\odot\mathbf{u})\frac{\left(\boldsymbol{a}^{*}(\mathbf{u})\right)^{\boldsymbol{N}}}{\sqrt{\boldsymbol{N}!}}\omega\right\Vert _{\mathcal{F}^{\otimes p}}\leq(N+p)^{-p/2}\boldsymbol{d}_{\boldsymbol{N}}\leq CN^{-p/4}.\label{eq:(Ntot+pI)WaNoemega-bound}
\end{equation}

\textbf{Case 1. $|Q|+|R|\geq2$}

The second factor is bounded by using Lemma \ref{lem:dGammaQphiQ},
\begin{align*}
 & \left\Vert \left(\mathcal{N}_{\mathrm{total}}+p\mathcal{I}\right)^{\frac{1}{2}}\widetilde{\mathcal{U}}^{*}(t)\left(d\boldsymbol{\Gamma}^{(Q\cap R)}(J_{Q\cap R})\cdot\boldsymbol{\phi}^{(Q\Delta R)}(J_{Q\Delta R}u_{Q\Delta R,t})\right)\widetilde{\mathcal{U}}(t)\omega\right\Vert _{\mathcal{F}^{\otimes p}}\\
 & \leq Ce^{Kt}\left\Vert \left(\mathcal{N}_{\mathrm{total}}+p\mathcal{I}\right)^{2}\left(d\boldsymbol{\Gamma}^{(Q\cap R)}(J_{Q\cap R})\cdot\boldsymbol{\phi}^{(Q\Delta R)}(J_{Q\Delta R}u_{Q\Delta R,t})\right)\widetilde{\mathcal{U}}(t)\omega\right\Vert _{\mathcal{F}^{\otimes p}}\\
 & \leq C\left\Vert J^{(Q\cap R)}\right\Vert _{\mathrm{HS}}e^{Kt}\left\Vert \left(\mathcal{N}_{\mathrm{total}}+p\mathcal{I}\right)^{2+|Q\cap R|}\boldsymbol{\phi}^{(Q\cap R)}(J_{Q\Delta R}u_{Q\Delta R,t})\widetilde{\mathcal{U}}(t)\omega\right\Vert _{\mathcal{F}^{\otimes p}}
\end{align*}
By Lemma \ref{lem:dGammaQphiQ},
\begin{align*}
 & \left\Vert \left(\mathcal{N}_{\mathrm{total}}+p\mathcal{I}\right)^{\frac{1}{2}}\widetilde{\mathcal{U}}^{*}(t)\left(d\boldsymbol{\Gamma}^{(Q\cap R)}(J_{Q\cap R})\cdot\boldsymbol{\phi}^{(Q\Delta R)}(J_{Q\Delta R}u_{Q\Delta R,t})\right)\widetilde{\mathcal{U}}(t)\omega\right\Vert _{\mathcal{F}^{\otimes p}}\\
 & \leq C\left\Vert J^{(Q\cup R)}\right\Vert _{\mathrm{HS}}e^{Kt}\left\Vert \left(\mathcal{N}_{\mathrm{total}}+p\mathcal{I}\right)^{2+|Q\cap R|+|Q\Delta R|}\widetilde{\mathcal{U}}(t)\omega\right\Vert _{\mathcal{F}^{\otimes p}}\\
 & \leq C\left\Vert J^{(Q\cup R)}\right\Vert _{\mathrm{HS}}e^{Kt}\left\Vert \left(\mathcal{N}_{\mathrm{total}}+p\mathcal{I}\right)^{4+2|Q\cup R|}\omega\right\Vert _{\mathcal{F}^{\otimes p}}\\
 & \leq C\left\Vert J^{(Q\cup R)}\right\Vert _{\mathrm{HS}}e^{Kt}p^{4+2|Q\cup R|}.
\end{align*}
Thus, using \textbf{$|Q|+|R|\geq2$,}
\[
I\leq CN^{-(|Q|+|R|)/2}\left\Vert J\right\Vert _{\mathrm{HS}}e^{Kt}p^{4+2|Q\cup R|}\leq CN^{-1}\left\Vert J\right\Vert _{\mathrm{HS}}e^{Kt}p^{4+Cp}
\]

This case can be easily bounded. First, using Cauchy-Schwarz inequality,
\begin{align*}
II & \leq\left\langle u_{(Q\cup R)^{c},t}\right|J_{(Q\cup R)^{c}}\left|u_{(Q\cup R)^{c},t}\right\rangle \sqrt{\boldsymbol{N}_{Q^{c}}}\sqrt{\boldsymbol{N}_{R^{c}}}\frac{\boldsymbol{d}_{\boldsymbol{N}}}{\boldsymbol{N}}\\
 & \qquad\qquad\times\left\Vert \left(\mathcal{N}_{\mathrm{total}}+p\mathcal{I}\right)^{-1/2}\mathcal{W}^{*}(\mathbf{N}^{\odot1/2}\odot\mathbf{u})\frac{\left(\boldsymbol{a}^{*}(\mathbf{u})\right)^{\boldsymbol{N}}}{\sqrt{\boldsymbol{N}!}}\omega\right\Vert _{\mathcal{F}^{\otimes p}}\\
 & \qquad\qquad\qquad\times\left\Vert \left(\mathcal{N}_{\mathrm{total}}+p\mathcal{I}\right)^{1/2}\left(d\boldsymbol{\Gamma}^{(Q\cap R)}(J_{Q\cap R})\cdot\mathcal{R}^{(Q\Delta R)}(J_{Q\Delta R}\odot u_{Q\Delta R,t})\right)\omega\right\Vert _{\mathcal{F}^{\otimes p}}.
\end{align*}
Using (\ref{eq:(Ntot+pI)WaNoemega-bound}), Lemma \ref{lem:coherent_all},
\ref{lem:dGammaQphiQ}, and \ref{lem:NjUphiUtildeUphitildeU}, we
have
\begin{align*}
II & \leq CN^{-(1+|Q|+|R)/2}\left\Vert J^{(Q\cup R)}\right\Vert _{\mathrm{op}}e^{Kt}p^{4+2|Q\cup R|}\\
 & \leq CN^{-1}\left\Vert J^{(Q\cup R)}\right\Vert _{\mathrm{HS}}e^{Kt}p^{Cp}\leq CN^{-1}\left\Vert J\right\Vert _{\mathrm{HS}}e^{Kt}p^{Cp}.
\end{align*}

It follows from Lemma Lemma \ref{lem:dGammaQphiQ} that
\[
|E_{t}^{Q,R}(J)|\leq C\left\Vert J\right\Vert _{\mathrm{HS}}p^{Cp}N^{-1}
\]
which is the concludes the case for $|Q|+|R|\geq2$.

\textbf{Case 2. }$|Q|+|R|=1$, i.e., one of $Q$ or $R$ is empty
and the other has only one element.

For this case either $Q$ or $R$ is empty and the other set has only
one element. Hence, $Q\cap R=\emptyset$ and $Q\Delta R$ has also
only one element. Let us denote the element $q$, i.e., $\{q\}=Q\Delta R$.
Using Lemma \ref{lem:coherent_even_odd}, letting 
\[
\boldsymbol{\ell}=(\ell_{1},\dots,\ell_{p})
\]
with $\ell_{q}=2k_{q}-1$ and the other sectors $\ell_{r}=0$ for
all $r\neq q$. Due to the parity conservation of $\widetilde{\mathcal{U}}$,
only the $q$-th component particle occupies odd number sectors. The
other components are only occupying the zeroth sector because $\phi^{(q)}(J_{q}u_{q,t})$
is acting as an identity for the $r$-th sector for $r\neq q$ and
$\widetilde{\mathcal{U}}(t)$ is unitary. This leads that

\begin{align*}
 & \left\langle u_{(Q\cup R)^{c},t}\right|J_{(Q\cup R)^{c}}\left|u_{(Q\cup R)^{c},t}\right\rangle \sqrt{\boldsymbol{N}_{Q^{c}}}\sqrt{\boldsymbol{N}_{R^{c}}}\frac{\boldsymbol{d}_{\boldsymbol{N}}}{\boldsymbol{N}}\times I\\
 & =\frac{\boldsymbol{d}_{\boldsymbol{N}}}{\sqrt{N_{q}}}\left\langle \frac{\left(\boldsymbol{a}^{*}(\mathbf{u})\right)^{\boldsymbol{N}}}{\sqrt{\boldsymbol{N}!}}\omega,\mathcal{W}(\mathbf{N}^{\odot1/2}\odot\mathbf{u}_{s})\mathcal{\widetilde{U}}^{*}(t)\phi^{(q)}(J_{q}u_{q,t})\widetilde{\mathcal{U}}(t)\omega\right\rangle _{\mathcal{F}^{\otimes p}}\\
 & \leq\frac{\boldsymbol{d}_{\boldsymbol{N}}}{\sqrt{N_{q}}}\left\Vert \sum_{k_{q}=1}^{\infty}\left(\mathcal{N}_{\mathrm{total}}+p\mathcal{I}\right)^{-\frac{5}{2}}\mathcal{P}_{\boldsymbol{\ell}}\mathcal{W}^{*}(\mathbf{N}^{\odot1/2}\odot\mathbf{u}_{s})\frac{\left(\boldsymbol{a}^{*}(\mathbf{u})\right)^{\boldsymbol{N}}}{\sqrt{\boldsymbol{N}!}}\omega\right\Vert _{\mathcal{F}^{\otimes p}}\\
 & \qquad\qquad\times\left\Vert \left(\mathcal{N}_{\mathrm{total}}+p\mathcal{I}\right)^{\frac{5}{2}}\widetilde{\mathcal{U}}^{*}(t)\phi^{(q)}(J_{q}u_{q,t})\widetilde{\mathcal{U}}(t)\omega\right\Vert _{\mathcal{F}^{\otimes p}}.
\end{align*}
The projection operator has come from parity conservation of $\widetilde{\mathcal{U}}$
for each components.

Let $K=\frac{1}{2}N^{\odot1/3}$so that Lemma \ref{lem:coherent_even_odd}
implies that
\begin{align*}
 & \left\Vert \sum_{k_{q}=1}^{\infty}\left(\mathcal{N}_{\mathrm{total}}+p\mathcal{I}\right)^{-\frac{5}{2}}\mathcal{P}_{\mathbf{\boldsymbol{\ell}}}\mathcal{W}^{*}(\mathbf{N}^{\odot1/2}\odot\mathbf{u}_{s})\frac{\left(\boldsymbol{a}^{*}(\mathbf{u})\right)^{\boldsymbol{N}}}{\sqrt{\boldsymbol{N}!}}\omega\right\Vert _{\mathcal{F}^{\otimes p}}^{2}\\
 & \leq\sum_{k_{q}=1}^{K}\left\Vert \left(\mathcal{N}_{\mathrm{total}}+p\mathcal{I}\right)^{-\frac{5}{2}}\mathcal{P}_{\mathbf{\boldsymbol{\ell}}}\mathcal{W}^{*}(\mathbf{N}^{\odot1/2}\odot\mathbf{u}_{s})\frac{\left(\boldsymbol{a}^{*}(\mathbf{u})\right)^{\boldsymbol{N}}}{\sqrt{\boldsymbol{N}!}}\omega\right\Vert _{\mathcal{F}^{\otimes p}}^{2}\\
 & \qquad\qquad+\frac{1}{K^{4}}\sum_{k_{q}=K}^{\infty}\left\Vert \left(\mathcal{N}_{\mathrm{total}}+p\mathcal{I}\right)^{-\frac{1}{2}}\mathcal{W}^{*}(\mathbf{N}^{\odot1/2}\odot\mathbf{u}_{s})\frac{\left(\boldsymbol{a}^{*}(\mathbf{u})\right)^{\boldsymbol{N}}}{\sqrt{\boldsymbol{N}!}}\omega\right\Vert _{\mathcal{F}^{\otimes p}}^{2}
\end{align*}
Since 
\begin{align*}
 & \left\Vert \left(\mathcal{N}_{\mathrm{total}}+p\mathcal{I}\right)^{-\frac{5}{2}}\mathcal{P}_{\mathbf{\boldsymbol{\ell}}}\mathcal{W}^{*}(\mathbf{N}^{\odot1/2}\odot\mathbf{u}_{s})\frac{\left(\boldsymbol{a}^{*}(\mathbf{u})\right)^{\boldsymbol{N}}}{\sqrt{\boldsymbol{N}!}}\omega\right\Vert _{\mathcal{F}^{\otimes p}}\\
 & \leq\prod_{q=1}^{p}\left\Vert \left(\mathcal{N}_{\mathrm{total}}+p\mathcal{I}\right)^{-\frac{5}{2}}P_{\ell_{q}}W^{*}(N^{1/2}u_{q,s})\frac{\left(a^{*}(u_{q})\right)^{N_{q}}}{\sqrt{N_{q}!}}\Omega\right\Vert _{\mathcal{F}}\\
 & \leq\left\Vert \left(\mathcal{N}_{\mathrm{total}}+p\mathcal{I}\right)^{-\frac{5}{2}}P_{\ell_{q}}W^{*}(N^{1/2}u_{q,s})\frac{\left(a^{*}(u_{q})\right)^{N_{q}}}{\sqrt{N_{q}!}}\Omega\right\Vert _{\mathcal{F}}\\
 & \qquad\times\prod_{r\neq q}\left\Vert \left(\mathcal{N}_{\mathrm{total}}+p\mathcal{I}\right)^{-\frac{1}{2}}P_{\ell_{r}}W^{*}(N^{1/2}u_{r,s})\frac{\left(a^{*}(u_{r})\right)^{N_{r}}}{\sqrt{N_{r}!}}\Omega\right\Vert _{\mathcal{F}}\\
 & \leq\frac{1}{(k_{q}+1)^{5/2}}\frac{4}{d_{N_{q}}}\frac{(k_{q}+1)^{3/2}}{\sqrt{N_{q}}}\left(\prod_{r\neq q}\frac{4}{d_{N_{r}}}\right)\leq\frac{Cp^{Cp}}{(k_{q}+1)d_{N_{q}}\sqrt{N_{q}}},
\end{align*}
we get
\begin{align*}
 & \left\Vert \sum_{k_{q}=1}^{\infty}\left(\mathcal{N}_{\mathrm{total}}+p\mathcal{I}\right)^{-\frac{5}{2}}\mathcal{P}_{\mathbf{\boldsymbol{\ell}}}\mathcal{W}^{*}(\mathbf{N}^{\odot1/2}\odot\mathbf{u}_{s})\frac{\left(\boldsymbol{a}^{*}(\mathbf{u})\right)^{\boldsymbol{N}}}{\sqrt{\boldsymbol{N}!}}\omega\right\Vert _{\mathcal{F}^{\otimes p}}^{2}\\
 & \leq\left(\sum_{k_{q}=1}^{K}\frac{Cp^{Cp}}{k_{q}^{2}d_{\boldsymbol{N}}N_{q}}\right)+\frac{Cp^{Cp}}{N_{q}^{4/3}\boldsymbol{d}_{\boldsymbol{N}}}\leq\frac{Cp^{Cp}}{\boldsymbol{d}_{\boldsymbol{N}}N_{q}}.
\end{align*}
For the last factor, from (\ref{lem:NjUphiUtildeUphitildeU}) and
Lemma \ref{lem:dGammaQphiQ}, we get

\begin{align*}
\left\Vert \left(\mathcal{N}_{\mathrm{total}}+p\mathcal{I}\right)^{\frac{5}{2}}\mathcal{\widetilde{U}}^{*}(t)\phi^{(q)}(J_{q}u_{q,t})\mathcal{\widetilde{U}}(t)\omega\right\Vert _{\mathcal{F}^{\otimes p}} & \leq Cp^{Cp}e^{Kt}\left\Vert \left(\mathcal{N}_{\mathrm{total}}+p\mathcal{I}\right)^{\frac{5}{2}}\phi^{(q)}(J_{q}u_{q,t})\mathcal{\widetilde{U}}(t)\omega\right\Vert _{\mathcal{F}^{\otimes p}}\\
\leq C\|J_{q}u_{q,t}\|e^{Kt}\left\Vert \left(\mathcal{N}_{\mathrm{total}}+p\mathcal{I}\right)^{3}\mathcal{\widetilde{U}}(t)\omega\right\Vert _{\mathcal{F}^{\otimes p}} & \leq Cp^{Cp}\|J_{q}\|_{\mathrm{HS}}e^{Kt}\left\Vert \left(\mathcal{N}_{\mathrm{total}}+p\mathcal{I}\right)^{3}\omega\right\Vert _{\mathcal{F}^{\otimes p}}\leq Cp^{3}\|J_{q}\|_{\mathrm{HS}}e^{Kt}.
\end{align*}
Then we get
\begin{align*}
I & \leq Cp^{Cp}\|J\|_{\mathrm{HS}}e^{Kt}(\min_{q}c_{q})^{-1}N^{-1}.
\end{align*}
Similarly, we ontain
\[
II\leq Cp^{Cp}N^{-1}\left\Vert J^{(Q\cup R)}\right\Vert _{\mathrm{HS}}e^{Kt}\leq Cp^{Cp}N^{-1}\left\Vert J\right\Vert _{\mathrm{HS}}e^{Kt}.
\]
can be bounded using the same argument for the case $|Q|+|R|\geq2$.

After all, we have
\[
|E_{t}^{Q,R}(J)|\leq Cp^{Cp}\|J\|_{\mathrm{HS}}e^{Kt}(\min_{q}c_{q})^{-1}N^{-1}
\]
which is desired conclusion.
\end{proof}

\section{Lemmas\label{sec:Lemmas}}
\begin{lem}
\label{lem:coherent_all} There exists a constant $C>0$ independent
of $N$ such that, for any $\varphi\in L^{2}(\mathbb{R}^{3})$ with
$\|\varphi\|=1$, we have 
\[
\left\Vert \left(\mathcal{N}_{\mathrm{total}}+p\mathcal{I}\right)^{-1/2}\mathcal{W}^{*}(\mathbf{N}^{\odot1/2}\odot\mathbf{u})\frac{\left(\boldsymbol{a}^{*}(\mathbf{u})\right)^{\boldsymbol{N-I}}}{\sqrt{\boldsymbol{N}!}}\omega\right\Vert _{\mathcal{F}^{\otimes p}}\leq\frac{C}{\boldsymbol{d}_{\boldsymbol{N}}}.
\]
\end{lem}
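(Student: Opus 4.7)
The plan is to factor the target vector as a tensor product over the $p$ components of bosons, reducing the bound to an elementary single-component norm computation plus Stirling's formula. Since $\mathcal{W}(\mathbf{N}^{\odot 1/2}\odot\mathbf{u}) = \bigotimes_{q=1}^p W(\sqrt{N_q}\,u_q)$, the vacuum splits as $\omega = \bigotimes_q \Omega$, and the product $\prod_q(a^{(q)*}(u_q))^{N_q-1} = \prod_q \mathcal{E}^{(q)}(a^*(u_q))^{N_q-1}$ acts component-wise, the vector in question factors as
\[
\mathcal{W}^*(\mathbf{N}^{\odot 1/2}\odot\mathbf{u})\, \frac{(\boldsymbol{a}^*(\mathbf{u}))^{\boldsymbol{N}-\boldsymbol{I}}}{\sqrt{\boldsymbol{N}!}}\,\omega \;=\; \bigotimes_{q=1}^p \xi_q,
\qquad
\xi_q \;:=\; W^*(\sqrt{N_q}\,u_q)\,\frac{(a^*(u_q))^{N_q-1}}{\sqrt{N_q!}}\,\Omega.
\]

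Next I would compute each $\|\xi_q\|_{\mathcal{F}}$ directly. Because $W^*$ is unitary and $\|u_q\|_2 = 1$, the standard Fock-space identity $\|(a^*(u_q))^k\Omega\|_\mathcal{F}^2 = k!$ (by induction on $k$ using $[a(u_q),a^*(u_q)] = 1$) yields $\|\xi_q\|_\mathcal{F} = \sqrt{(N_q-1)!/N_q!} = 1/\sqrt{N_q}$. Then I would use the crude but sufficient operator bound $(\mathcal{N}_{\mathrm{total}}+p\mathcal{I})^{-1/2} \leq \mathcal{I}$, which holds since $\mathcal{N}_{\mathrm{total}}+p\mathcal{I}\geq p\mathcal{I}\geq \mathcal{I}$ as $p\geq 1$, together with the factorization of the tensor-product norm, to obtain
\[
\Big\|(\mathcal{N}_{\mathrm{total}}+p\mathcal{I})^{-1/2}\,\textstyle\bigotimes_q \xi_q \Big\|_{\mathcal{F}^{\otimes p}}
\;\leq\; \prod_{q=1}^p \|\xi_q\|_\mathcal{F} \;=\; \frac{1}{\sqrt{\boldsymbol{N}}}.
\]

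Finally I would close the gap via Stirling. Since $d_{N_q} \asymp N_q^{1/4}$ (indeed $d_{N_q}\to (2\pi N_q)^{1/4}$ as $N_q\to\infty$), one has $d_{N_q} \leq C\sqrt{N_q}$ for a universal constant $C$ uniformly in $N_q\geq 1$, hence $\boldsymbol{d}_{\boldsymbol{N}} \leq C^p\sqrt{\boldsymbol{N}}$, equivalently $1/\sqrt{\boldsymbol{N}} \leq C^p/\boldsymbol{d}_{\boldsymbol{N}}$, which yields the claimed bound (the constant is permitted to depend on $p$ since the hypothesis only asserts independence from $N$). No serious obstacle arises: the chain of inequalities is tight up to Stirling constants and actually produces a bound $1/\sqrt{\boldsymbol{N}}$ strictly stronger than the $1/\boldsymbol{d}_{\boldsymbol{N}}$ claimed. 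The only item needing careful verification is the tensor-product factorization in the first display, which follows directly from the definitions of $\mathcal{E}^{(q)}$, $\mathcal{W}$, and $\omega$ and the fact that operators of the form $\mathcal{E}^{(q)}\mathcal{O}_q$ and $\mathcal{E}^{(r)}\mathcal{O}_r$ commute for $q\neq r$.
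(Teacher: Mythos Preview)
Your proof is correct and in fact simpler than the paper's. The paper expands the vector over Fock-space sectors $\mathcal{P}_{\mathbf{n}}$, invokes the single-component estimate $\|P_{n_q}W^*(\sqrt{N_q}u_q)\frac{(a^*(u_q))^{N_q-1}}{\sqrt{N_q!}}\Omega\|\leq C N_q^{-1/4}n_q^{-1/4}$ from \cite[Lemma~6.3]{ChenLee2011}, and sums over $n_q$ to obtain $C/d_{N_q}^2$ in each factor. You bypass this entirely: unitarity of $W^*$ plus $\|(a^*(u_q))^{k}\Omega\|^2=k!$ gives $\|\xi_q\|=N_q^{-1/2}$ immediately, and dropping the operator $(\mathcal{N}_{\mathrm{total}}+p\mathcal{I})^{-1/2}\leq\mathcal{I}$ costs nothing here. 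Your resulting bound $\boldsymbol{N}^{-1/2}$ is genuinely stronger than the stated $C/\boldsymbol{d}_{\boldsymbol{N}}\asymp\boldsymbol{N}^{-1/4}$, so the final Stirling step is harmless slack.

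What the paper's approach buys is the finer sector-wise information $\|P_m\xi_q\|\lesssim N_q^{-1/4}m^{-1/4}$, which is \emph{not} recoverable from your argument and is indispensable for the companion Lemma~\ref{lem:coherent_even_odd} and the parity-splitting in the proof of Proposition~\ref{prop:Et1} (where odd sectors must be isolated). For Lemma~\ref{lem:coherent_all} in isolation, however, your route is cleaner and self-contained.
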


\begin{proof}
We follow the one particle result in \cite[Lemma 6.3]{ChenLee2011}.
Let
\[
\mathcal{A}_{\mathbf{n}}=\left\Vert \mathcal{P}_{\mathbf{n}}\mathcal{W}^{*}(\mathbf{N}^{\odot1/2}\odot\mathbf{u})\frac{\left(\boldsymbol{a}^{*}(\mathbf{u})\right)^{\boldsymbol{N-I}}}{\sqrt{\boldsymbol{N}!}}\omega\right\Vert _{\mathcal{F}^{\otimes p}}
\]
for $n_{q}<N_{q}$ for $q=1,\dots,p$, then noting that $\mathbf{n}=(n_{1},\dots,n_{p})$
and $\mathbf{N}=(N_{1},\dots,N_{p}$), we get
\begin{align*}
\mathcal{A}_{\mathbf{n}} & =\left\Vert \mathcal{P}_{\mathbf{n}}\mathcal{W}^{*}(\mathbf{N}^{\odot1/2}\odot\mathbf{u})\frac{\left(\boldsymbol{a}^{*}(\mathbf{u})\right)^{\boldsymbol{N-I}}}{\sqrt{\boldsymbol{N}!}}\omega\right\Vert _{\mathcal{F}^{\otimes p}}\\
 & =\prod_{q=1}^{p}\left\Vert P_{n_{q}}W^{*}(\sqrt{N_{1}}u)\frac{\left(a^{(q)*}(u_{q})\right)^{N_{q}-1}}{\sqrt{N_{q}!}}\Omega\right\Vert _{\mathcal{F}}.
\end{align*}
According to \cite[Lemma 6.3]{ChenLee2011}, we have
\[
\left\Vert P_{n_{q}}W^{*}(\sqrt{N_{1}}u)\frac{\left(a^{*}(u_{q})\right)^{N_{q}-1}}{\sqrt{N_{q}!}}\Omega\right\Vert _{\mathcal{F}}<CN_{q}^{-1/4}n_{q}^{-1/4}.
\]
Moreover, for $n_{q}\geq N_{q}$, using (6.16) of \cite[Lemma 6.3]{ChenLee2011}
the sum of its squares from $N_{q}$ to infinity is bounded by $1$,
i.e.,
\[
\sum_{n_{q}=N_{q}}^{\infty}\left\Vert P_{n_{q}}W^{*}(\sqrt{N_{q}}u)\frac{(a^{*}(u_{q}))^{N_{1}-1}}{\sqrt{N_{q}!}}\Omega\right\Vert _{\mathcal{F}}^{2}<1.
\]
Thus,
\begin{align*}
 & \left\Vert \left(\mathcal{N}_{\mathrm{total}}+p\mathcal{I}\right)^{-1/2}\mathcal{W}^{*}(\mathbf{N}^{\odot1/2}\odot\mathbf{u})\frac{\left(\boldsymbol{a}^{*}(\mathbf{u})\right)^{\boldsymbol{N-I}}}{\sqrt{\boldsymbol{N}!}}\omega\right\Vert _{\mathcal{F}^{\otimes p}}^{2}\\
 & =\sum_{\mathbf{n}\in\mathbb{Z}_{\geq0}^{d}}\frac{|A_{\mathbf{n}}|^{2}}{(n_{\mathbf{p}}+1)^{2}}\\
 & =\prod_{q=1}^{p}\left(\sum_{n_{q}=0}^{\infty}\frac{1}{(n_{q}+1)^{2}}\left\Vert \mathcal{P}_{n_{q}}W^{*}(\sqrt{N_{q}}u_{q})\frac{(a^{*}(u_{q}))^{N_{q}-1}}{\sqrt{N_{q}!}}\Omega\right\Vert _{\mathcal{F}}^{2}\right)\\
 & \leq\prod_{q=1}^{p}\frac{C}{N_{q}^{1/2}}\leq\frac{C}{d_{\boldsymbol{N}}^{2}}.
\end{align*}
This proves the desired lemma.
\end{proof}
\begin{lem}
\label{lem:coherent_even_odd} Let $P_{m}$ be the projection onto
the $m$-particle sector of the Fock space $\mathcal{F}$ for a non-negative
integer $m$. Then, for any non-negative integer $k_{q}\leq(1/2)N^{1/3}$,
\[
\left\Vert P_{2k_{q}}W^{*}(\sqrt{N_{1}}u)\frac{\left(a^{*}(u_{q})\right)^{N_{q}-1}}{\sqrt{N_{q}!}}\Omega\right\Vert _{\mathcal{F}}\leq\frac{4}{d_{N_{q}}}
\]
\[
\left\Vert P_{2k_{q}+1}W^{*}(\sqrt{N_{1}}u)\frac{\left(a^{*}(u_{q})\right)^{N_{q}-1}}{\sqrt{N_{q}!}}\Omega\right\Vert _{\mathcal{F}}\leq\frac{4}{d_{N_{q}}}\cdot\frac{(k_{q}+1)^{3/2}}{\sqrt{N_{q}}}
\]
\end{lem}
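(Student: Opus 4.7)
The plan is to reduce the norm bound to an explicit scalar computation by exploiting the symmetry that forces $P_{m}W^{*}(\sqrt{N_{q}}u_{q})u_{q}^{\otimes(N_{q}-1)}$ to be a scalar multiple of the one-dimensional symmetric tensor $u_{q}^{\otimes m}$. Writing $c_{m}$ for this scalar, one has $\bigl\|P_{m}W^{*}(\sqrt{N_{q}}u_{q})u_{q}^{\otimes(N_{q}-1)}\bigr\|_{\mathcal{F}}=|c_{m}|$, and $c_{m}=\langle u_{q}^{\otimes m},W^{*}(\sqrt{N_{q}}u_{q})u_{q}^{\otimes(N_{q}-1)}\rangle$. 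The $1/\sqrt{N_{q}!}$ versus $1/\sqrt{(N_{q}-1)!}$ normalization in the statement introduces only a harmless $1/\sqrt{N_{q}}$ factor.

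To compute $c_{m}$ I would use the Weyl shift relation $W^{*}(\sqrt{N_{q}}u_{q})a^{*}(u_{q})W(\sqrt{N_{q}}u_{q})=a^{*}(u_{q})+\sqrt{N_{q}}$, giving
\[
W^{*}(\sqrt{N_{q}}u_{q})\frac{(a^{*}(u_{q}))^{N_{q}-1}}{\sqrt{(N_{q}-1)!}}\Omega=\frac{1}{\sqrt{(N_{q}-1)!}}\bigl(a^{*}(u_{q})+\sqrt{N_{q}}\bigr)^{N_{q}-1}W(-\sqrt{N_{q}}u_{q})\Omega.
\]
Expanding the binomial, inserting the explicit coherent state $W(-\sqrt{N_{q}}u_{q})\Omega=e^{-N_{q}/2}\sum_{n\geq 0}(-\sqrt{N_{q}})^{n}u_{q}^{\otimes n}/\sqrt{n!}$, using $(a^{*}(u_{q}))^{\ell}u_{q}^{\otimes n}=\sqrt{(n+\ell)!/n!}\,u_{q}^{\otimes(n+\ell)}$, and projecting onto the $m$-sector, after reorganizing against $d_{N_{q}}^{-1}=e^{-N_{q}/2}N_{q}^{N_{q}/2}/\sqrt{N_{q}!}$ I would arrive at the clean form
\[
\Bigl\|P_{m}W^{*}(\sqrt{N_{q}}u_{q})\frac{(a^{*}(u_{q}))^{N_{q}-1}}{\sqrt{N_{q}!}}\Omega\Bigr\|_{\mathcal{F}}=\frac{\sqrt{m!}\,N_{q}^{(m-1)/2}}{d_{N_{q}}}\,|S_{m}|,
\]
where $S_{m}:=\frac{1}{m!}\sum_{\ell=0}^{m}\binom{m}{\ell}(-1)^{\ell}\prod_{j=1}^{\ell}(1-j/N_{q})$.

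The crucial analytic step is to exploit the parity-sensitive cancellation in $S_{m}$. Expanding $\prod_{j=1}^{\ell}(1-j/N_{q})$ as a polynomial in $1/N_{q}$ produces at order $N_{q}^{-k}$ the $k$-th elementary symmetric polynomial in $\{1,\dots,\ell\}$, which as a function of $\ell$ has degree $2k$. Combined with the identity $\sum_{\ell=0}^{m}\binom{m}{\ell}(-1)^{\ell}p(\ell)=0$ for every polynomial $p$ of degree strictly less than $m$, every contribution with $2k<m$ vanishes and $S_{m}=O(N_{q}^{-\lceil m/2\rceil})$. A direct identification of the leading coefficient via $\sum_{\ell}\binom{m}{\ell}(-1)^{\ell}\ell^{m}=(-1)^{m}m!$ yields $S_{2k_{q}}\sim(-1)^{k_{q}}/(2^{k_{q}}k_{q}!\,N_{q}^{k_{q}})$, which together with the prefactor $\sqrt{(2k_{q})!}N_{q}^{(2k_{q}-1)/2}/d_{N_{q}}$ and Stirling's formula gives a bound of order $(\pi k_{q})^{-1/4}N_{q}^{-1/2}d_{N_{q}}^{-1}$, dominated by $4/d_{N_{q}}$. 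For $m=2k_{q}+1$ the first nonvanishing term is at $N_{q}^{-(k_{q}+1)}$, and its combinatorial prefactor carries an extra factor proportional to $(2k_{q}+1)(k_{q}+1)$ coming from the Stirling identity $\sum_{\ell}\binom{m}{\ell}(-1)^{\ell}\ell^{m+1}=(-1)^{m}m!\binom{m+1}{2}$, producing exactly the advertised $(k_{q}+1)^{3/2}$ dependence after combining with the prefactor.

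The main obstacle will be uniform control of the subleading corrections in $S_{m}$ when $k_{q}$ is allowed to be as large as $N_{q}^{1/3}/2$: each higher-order term at $N_{q}^{-k}$ with $k>\lceil m/2\rceil$ carries a combinatorial factor that is polynomial in $k_{q}$ of degree $2(k-\lceil m/2\rceil)$, and the restriction $k_{q}\leq N_{q}^{1/3}/2$ is exactly what guarantees this tail is geometrically smaller than the leading contribution. The whole argument is an adaptation to the present normalization $(a^{*}(u_{q}))^{N_{q}-1}/\sqrt{N_{q}!}$ and to the parity-refined projections $P_{2k_{q}},P_{2k_{q}+1}$ of the technique used in \cite[Lemma 6.3]{ChenLee2011} and in \cite{Rodnianski2009} for the single-component Fock space, and I would model the bookkeeping directly on those proofs, verifying only that the exponents and the factor $(k_{q}+1)^{3/2}$ emerge with the correct constants.
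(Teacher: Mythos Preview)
Your approach is correct and is essentially the computation carried out in the reference the paper cites; the paper itself gives no proof beyond pointing to \cite[Lemma 7.2]{Lee2013}, and your sketch reconstructs that argument (the reduction to a scalar via symmetry, the explicit binomial/coherent-state expansion yielding the finite-difference sum $S_m$, and the parity cancellation forcing $S_m=O(N_q^{-\lceil m/2\rceil})$). Your identification of the leading combinatorics via Stirling numbers of the second kind and the role of the hypothesis $k_q\le \tfrac12 N_q^{1/3}$ in controlling the geometric tail $\sum_{k>\lceil m/2\rceil}$ are exactly the points on which the argument hinges, so nothing is missing.
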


\begin{proof}
Proof can be found in \cite[Lemma 7.2]{Lee2013}.
\end{proof}

\section*{Acknowledgments}

The author would like to thank gratefully Ji Oon Lee for many motivational
and helpful discussions.

\bibliographystyle{abbrv}
\bibliography{refs}

\begin{thebibliography}{10}

\bibitem{Anapolitanos2017}
I.~Anapolitanos, M.~Hott, and D.~Hundertmark.
\newblock Derivation of the {H}artree equation for compound {B}ose gases in the
  mean field limit.
\newblock {\em Rev. Math. Phys.}, 29(7):1750022, 28, 2017.

\bibitem{Benedikter2015}
N.~Benedikter, G.~de~Oliveira, and B.~Schlein.
\newblock Quantitative derivation of the {G}ross-{P}itaevskii equation.
\newblock {\em Comm. Pure Appl. Math.}, 68(8):1399--1482, 2015.

\bibitem{Benedikter2016}
N.~Benedikter, M.~Porta, and B.~Schlein.
\newblock {\em Effective evolution equations from quantum dynamics}, volume~7
  of {\em SpringerBriefs in Mathematical Physics}.
\newblock Springer, Cham, 2016.

\bibitem{Boccato2018CompleteBEC}
C.~Boccato, C.~Brennecke, S.~Cenatiempo, and B.~Schlein.
\newblock Complete {B}ose-{E}instein condensation in the {G}ross-{P}itaevskii
  regime.
\newblock {\em Comm. Math. Phys.}, 359(3):975--1026, 2018.

\bibitem{Boccato1812.03086}
C.~Boccato, C.~Brennecke, S.~Cenatiempo, and B.~Schlein.
\newblock Optimal rate for bose-einstein condensation in the gross-pitaevskii
  regime, 2018.

\bibitem{Boccato2017}
C.~Boccato, S.~Cenatiempo, and B.~Schlein.
\newblock Quantum many-body fluctuations around nonlinear {S}chr\"{o}dinger
  dynamics.
\newblock {\em Ann. Henri Poincar\'{e}}, 18(1):113--191, 2017.

\bibitem{bourbaki1998algebra}
N.~Bourbaki.
\newblock {\em Algebra I: chapters 1-3}, volume~1.
\newblock Springer Science \& Business Media, 1998.

\bibitem{Brenneck2019}
C.~Brennecke and B.~Schlein.
\newblock Gross-{P}itaevskii dynamics for {B}ose-{E}instein condensates.
\newblock {\em Anal. PDE}, 12(6):1513--1596, 2019.

\bibitem{ChenLee2011}
L.~Chen and J.~O. Lee.
\newblock Rate of convergence in nonlinear {H}artree dynamics with factorized
  initial data.
\newblock {\em J. Math. Phys.}, 52(5):052108, 25, 2011.

\bibitem{ChenLeeLee2018}
L.~Chen, J.~O. Lee, and J.~Lee.
\newblock Rate of convergence toward {H}artree dynamics with singular
  interaction potential.
\newblock {\em J. Math. Phys.}, 59(3):031902, 20, 2018.

\bibitem{ChenLeeSchlein2011}
L.~Chen, J.~O. Lee, and B.~Schlein.
\newblock Rate of convergence towards {H}artree dynamics.
\newblock {\em J. Stat. Phys.}, 144(4):872--903, 2011.

\bibitem{1811.04984}
G.~de~Oliveira and A.~Michelangeli.
\newblock Mean-field dynamics for mixture condensates via fock space methods.
\newblock {\em Reviews in Mathematical Physics}, 0(0):1950027, 0.

\bibitem{Erdos2006Derivation-of-the-GP}
L.~Erd\H{o}s, B.~Schlein, and H.-T. Yau.
\newblock Derivation of the {G}ross-{P}itaevskii hierarchy for the dynamics of
  {B}ose-{E}instein condensate.
\newblock {\em Comm. Pure Appl. Math.}, 59(12):1659--1741, 2006.

\bibitem{Erdos2010Derivation-of-the-GP}
L.~Erd\H{o}s, B.~Schlein, and H.-T. Yau.
\newblock Derivation of the {G}ross-{P}itaevskii equation for the dynamics of
  {B}ose-{E}instein condensate.
\newblock {\em Ann. of Math. (2)}, 172(1):291--370, 2010.

\bibitem{ErdosYau2001}
L.~Erd{\H{o}}s and H.-T. Yau.
\newblock Derivation of the nonlinear {S}chr\"odinger equation from a many body
  {C}oulomb system.
\newblock {\em Adv. Theor. Math. Phys.}, 5(6):1169--1205, 2001.

\bibitem{Ginibre1979_1}
J.~Ginibre and G.~Velo.
\newblock The classical field limit of scattering theory for nonrelativistic
  many-boson systems. {I}.
\newblock {\em Comm. Math. Phys.}, 66(1):37--76, 1979.

\bibitem{Ginibre1979_2}
J.~Ginibre and G.~Velo.
\newblock The classical field limit of scattering theory for nonrelativistic
  many-boson systems. {II}.
\newblock {\em Comm. Math. Phys.}, 68(1):45--68, 1979.

\bibitem{Grillakis13}
M.~Grillakis and M.~Machedon.
\newblock Pair excitations and the mean field approximation of interacting
  bosons, {I}.
\newblock {\em Comm. Math. Phys.}, 324(2):601--636, 2013.

\bibitem{Grillakis17}
M.~Grillakis and M.~Machedon.
\newblock Pair excitations and the mean field approximation of interacting
  bosons, {II}.
\newblock {\em Comm. Partial Differential Equations}, 42(1):24--67, 2017.

\bibitem{Grillakis2010}
M.~G. Grillakis, M.~Machedon, and D.~Margetis.
\newblock Second-order corrections to mean field evolution of weakly
  interacting bosons. {I}.
\newblock {\em Comm. Math. Phys.}, 294(1):273--301, 2010.

\bibitem{Grillakis2011}
M.~G. Grillakis, M.~Machedon, and D.~Margetis.
\newblock Second-order corrections to mean field evolution of weakly
  interacting bosons. {II}.
\newblock {\em Adv. Math.}, 228(3):1788--1815, 2011.

\bibitem{hall2008multi}
D.~S. Hall.
\newblock {\em Multi-component condensates: experiment}.
\newblock Springer, 2008.

\bibitem{hall1998dynamics}
D.~S. Hall, M.~R. Matthews, J.~R. Ensher, C.~E. Wieman, and E.~A. Cornell.
\newblock Dynamics of component separation in a binary mixture of bose-einstein
  condensates.
\newblock {\em Physical Review Letters}, 81(8):1539, 1998.

\bibitem{hall1998measurements}
D.~S. Hall, M.~R. Matthews, C.~E. Wieman, and E.~A. Cornell.
\newblock Measurements of relative phase in two-component bose-einstein
  condensates.
\newblock {\em Physical Review Letters}, 81(8):1543, 1998.

\bibitem{Hepp1974}
K.~Hepp.
\newblock The classical limit for quantum mechanical correlation functions.
\newblock {\em Comm. Math. Phys.}, 35:265--277, 1974.

\bibitem{Hott2018}
M.~Hott.
\newblock Convergence rate towards the fractional hartree-equation with
  singular potentials in higher sobolev norms.
\newblock {\em arXiv:1805.01807}, (1805.01807), 2018.

\bibitem{KnowlesPickl2010}
A.~Knowles and P.~Pickl.
\newblock Mean-field dynamics: singular potentials and rate of convergence.
\newblock {\em Comm. Math. Phys.}, 298(1):101--138, 2010.

\bibitem{Lee2019}
J.~Lee.
\newblock On the time dependence of the rate of convergence towards hartree
  dynamics for interacting bosons.
\newblock {\em Journal of Statistical Physics}, May 2019.

\bibitem{Lee2013}
J.~O. Lee.
\newblock Rate of convergence towards semi-relativistic {H}artree dynamics.
\newblock {\em Ann. Henri Poincar\'e}, 14(2):313--346, 2013.

\bibitem{LewinNamSchlein2015}
M.~Lewin, P.~T. Nam, and B.~Schlein.
\newblock Fluctuations around {H}artree states in the mean-field regime.
\newblock {\em Amer. J. Math.}, 137(6):1613--1650, 2015.

\bibitem{malomed2008multi}
B.~Malomed.
\newblock {\em Multi-component Bose-Einstein condensates: theory}.
\newblock Springer, 2008.

\bibitem{mancini2004observation}
M.~W. Mancini, G.~D. Telles, A.~R.~L. Caires, V.~S. Bagnato, and L.~G.
  Marcassa.
\newblock Observation of ultracold ground-state heteronuclear molecules.
\newblock {\em Physical review letters}, 92(13):133203, 2004.

\bibitem{matthews1998dynamical}
M.~R. Matthews, D.~S. Hall, D.~S. Jin, J.~R. Ensher, C.~E. Wieman, E.~A.
  Cornell, F.~Dalfovo, C.~Minniti, and S.~Stringari.
\newblock Dynamical response of a bose-einstein condensate to a discontinuous
  change in internal state.
\newblock {\em Physical Review Letters}, 81(2):243, 1998.

\bibitem{MichelangeliOlgiati2017Gross}
A.~Michelangeli and A.~Olgiati.
\newblock Gross-{P}itaevskii non-linear dynamics for pseudo-spinor condensates.
\newblock {\em J. Nonlinear Math. Phys.}, 24(3):426--464, 2017.

\bibitem{Michelangeli2017Mean-field}
A.~Michelangeli and A.~Olgiati.
\newblock Mean-field quantum dynamics for a mixture of {B}ose-{E}instein
  condensates.
\newblock {\em Anal. Math. Phys.}, 7(4):377--416, 2017.

\bibitem{modugno2001bose}
G.~Modugno, G.~Ferrari, G.~Roati, R.~J. Brecha, A.~Simoni, and M.~Inguscio.
\newblock Bose-einstein condensation of potassium atoms by sympathetic cooling.
\newblock {\em Science}, 294(5545):1320--1322, 2001.

\bibitem{modugno2002two}
G.~Modugno, M.~Modugno, F.~Riboli, G.~Roati, and M.~Inguscio.
\newblock Two atomic species superfluid.
\newblock {\em Physical Review Letters}, 89(19):190404, 2002.

\bibitem{Molony2014Creation}
P.~K. Molony, P.~D. Gregory, Z.~Ji, B.~Lu, M.~P. K\"oppinger, C.~R. Le~Sueur,
  C.~L. Blackley, J.~M. Hutson, and S.~L. Cornish.
\newblock Creation of ultracold $^{87}\mathrm{Rb}^{133}\mathrm{Cs}$ molecules
  in the rovibrational ground state.
\newblock {\em Phys. Rev. Lett.}, 113:255301, Dec 2014.

\bibitem{myatt1997production}
C.~J. Myatt, E.~A. Burt, R.~W. Ghrist, E.~A. Cornell, and C.~E. Wieman.
\newblock Production of two overlapping bose-einstein condensates by
  sympathetic cooling.
\newblock {\em Physical Review Letters}, 78(4):586, 1997.

\bibitem{Olgiati2017}
A.~Olgiati.
\newblock {\em Effective non-linear dynamics of binary condensates and open
  problems}, volume~18 of {\em Springer INdAM Ser.}
\newblock Springer, Cham, 2017.

\bibitem{PappWieman2006Observation}
S.~B. Papp and C.~E. Wieman.
\newblock Observation of heteronuclear feshbach molecules from a
  $^{85}\mathrm{Rb}--^{87}\mathrm{Rb}$ gas.
\newblock {\em Phys. Rev. Lett.}, 97:180404, Oct 2006.

\bibitem{Pickl2010}
P.~Pickl.
\newblock Derivation of the time dependent {G}ross-{P}itaevskii equation
  without positivity condition on the interaction.
\newblock {\em J. Stat. Phys.}, 140(1):76--89, 2010.

\bibitem{Pickl2011}
P.~Pickl.
\newblock A simple derivation of mean field limits for quantum systems.
\newblock {\em Lett. Math. Phys.}, 97(2):151--164, 2011.

\bibitem{Pickl2015}
P.~Pickl.
\newblock Derivation of the time dependent {G}ross-{P}itaevskii equation with
  external fields.
\newblock {\em Rev. Math. Phys.}, 27(1):1550003, 45, 2015.

\bibitem{ReedSimonI1975}
M.~Reed and B.~Simon.
\newblock {\em Methods of modern mathematical physics. {I}. {F}unctional
  analysis}.
\newblock Academic Press, New York-London, 1972.

\bibitem{ReedSimonIV1975}
M.~Reed and B.~Simon.
\newblock {\em Methods of modern mathematical physics. {II}. {A}nalysis of
  operators}.
\newblock Academic Press, New York-London, 1975.

\bibitem{ReedSimonII1975}
M.~Reed and B.~Simon.
\newblock {\em Methods of modern mathematical physics. {II}. {F}ourier
  analysis, self-adjointness}.
\newblock Academic Press, New York-London, 1975.

\bibitem{Rodnianski2009}
I.~Rodnianski and B.~Schlein.
\newblock Quantum fluctuations and rate of convergence towards mean field
  dynamics.
\newblock {\em Comm. Math. Phys.}, 291(1):31--61, 2009.

\bibitem{santiago2012linak}
I.~Santiago~Gonz{\'a}lez.
\newblock {\em LiNaK: multi-species apparatus for the study of ultracold
  quantum degenerate mixtures}.
\newblock PhD thesis, Massachusetts Institute of Technology, 2012.

\bibitem{Spohn1980}
H.~Spohn.
\newblock Kinetic equations from {H}amiltonian dynamics: {M}arkovian limits.
\newblock {\em Rev. Modern Phys.}, 52(3):569--615, 1980.

\end{thebibliography}

\end{document}